\documentclass[12pt, reqno]{amsart}
\usepackage{amsmath,amssymb,amsfonts,amscd,hyperref,color}
\usepackage[utf8]{inputenc}
\usepackage{verbatim}
\usepackage{mathtools}
\usepackage{tikz}
\usetikzlibrary{matrix,calc}
\usetikzlibrary{positioning}
\usetikzlibrary{matrix}
\usepackage{pgfplots}
\usetikzlibrary{patterns}

\newcommand{\Hmm}[1]{\leavevmode{\marginpar{\tiny%
$\hbox to 0mm{\hspace*{-0.5mm}$\leftarrow$\hss}%
\vcenter{\vrule depth 0.1mm height 0.1mm width \the\marginparwidth}%
\hbox to
0mm{\hss$\rightarrow$\hspace*{-0.5mm}}$\\\relax\raggedright #1}}}

\newtheorem{theorem}{Theorem}[section]

\newtheorem{lemma}[theorem]{Lemma}
\newtheorem{proposition}[theorem]{Proposition}

\theoremstyle{definition}

\newtheorem{assumption}[theorem]{Assumption}
\newtheorem*{remark}{Remark}
\newtheorem*{convention}{Convention}

\numberwithin{equation}{section}
\newcommand{\Z}{{\mathbb Z}}
\newcommand{\R}{{\mathbb R}}
\newcommand{\C}{{\mathbb C}}

\newcommand{\N}{{\mathbb N}}

\newcommand{\lorenzorrection}[1]{\#1}%{\textcolor{red}{#1}}}

\DeclarePairedDelimiter\floor{\lfloor}{\rfloor}

\begin{document}

\date{\today}

\author[M. Keller]{Matthias Keller}
\address{M. Keller: Israel Institute of Advanced Studies (IIAS), The Hebrew University of Jerusalem, Edmond J. Safra Campus, Givat Ram, Jerusalem, Israel}
\address{Institut f\"ur Mathematik, Universit\"at Potsdam, 14476 Potsdam, Germany}
\email{matthias.keller@uni-potsdam.de}

\author[L. Pettinari]{Lorenzo Pettinari}
\address{L. Pettinari: Dipartimento di Matematica, Universit\`a di Trento and INFN-TIFPA and INdAM, Via Sommarive 14, I-38123 Povo, Italy}
\email{lorenzo.pettinari@unitn.it}

\author[C. J. F. van de Ven]{Christiaan J. F. van de Ven}
\address{C. J. F. van de Ven: Friedrich-Alexander-Universit\"at Erlangen-N\"urnberg, Department of Mathematics, Cauerstra\ss e 11, 91058 Erlangen, Germany}
\email{chris.ven@fau.de}

\title[Convergence of eigenvalues]{Coupling of the continuum and semiclassical limit. Part I: convergence of eigenvalues}

\begin{abstract}
\noindent
We analyze the semiclassical  $d$-dimensional Schr\"{o}\-ding\-er operator in the continuum $ \frac{1}{2} \Delta + \lambda_N^2 V$
discretized on a mesh with spacing proportional to $1/N$. The semi-classical parameter $\lambda_N$ is chosen as $\lambda_N = N^{1 - \gamma}$, with $\gamma \in (-1,1)$, which ensures that $N$ governs both the semiclassical and continuum limit simultaneously. We prove that all eigenvalues of the discrete operator converge to those of the continuum, as $\lambda_N\to\infty$. Beyond this semi-classical domain, in the case of the harmonic oscillator, we further discuss the spectral asymptotics for $\gamma \in \mathbb{R} \setminus (-1,1)$, thereby fully characterizing the eigenvalue behavior across all possible values of $\gamma\in\mathbb{R}$.
\end{abstract}

\maketitle

\tableofcontents
\section{Introduction}
This is the first part of a work which is devoted
to the analysis of the spectral asymptotics of a discrete Schr\"{o}dinger operator. Subsequently there is a second part \cite{Keller_Pettinari_Ven_Part2} which will then be devoted to the analysis of the eigenfunction asymptotics in the same limit, including Agmon-type estimates for the ground state.

This first part is focused on the convergence of eigenvalue to the analysis of the energy asymptotics of a $d$-dimensional discrete Schr\"{o}dinger operator in a \emph{combined} semi-classical and continuum limit. Unlike other approaches in this context \cite{Isozaki_Jensen_2022,Kameoka_2023,Nakamura_2021,Klein_Rosenberger_2008,Klein_Rosenberger_2009}, we investigate a wide range of scalings, interlacing the semiclassical parameter with the discretization one. This coupling yields a rich variety of asymptotic regimes for the eigenvalues. In particular, we are able to pinpoint the precise region in parameter space where the limiting behavior aligns with the semiclassical limit of a continuum Schr\"{o}dinger operator $ \frac{1}{2} \Delta_{\mathbb{R}^d} + \lambda_N^2 V$ in $d$ dimensions.
More precisely, we rigorously show that, in the semi-classical regime $\lambda_N \gg 1$, with $\lambda_N = N^{1 - \gamma}$ for $\gamma \in (-1,1)$ the scaling parameter, all eigenvalues of the discrete Schr\"{o}dinger operator exhibit the same asymptotic behavior as those of the continuum operator $ \frac{1}{2} \Delta_{\mathbb{R}^d} + \lambda_N^2 V,$
where the smooth, non-negative potential $V$ with finitely many non-degenerate minima is strictly positive at infinity. This result confirms the validity and appropriateness of our scaling relation between the semi-classical parameter $\lambda_N$ and the discretization mesh size $1/N$, showing that it effectively captures the classical limit of the continuous model through the discrete approximation. As already mentioned above analysis of Agmon-type estimates for eigenfunctions are deferred to our forthcoming companion paper \cite{Keller_Pettinari_Ven_Part2}.

In the final part of the paper, we emphasize that our analysis on eigenvalue asymptotics extends beyond the semi-classical regime. It not only identifies the interval $(-1,1)$ as the true semiclassical domain, but also classifies four additional regions where the spectral behavior deviates significantly. Framed within this broader context, our study provides a unified understanding of the various limiting regimes and underscores the role of the parameter $\gamma$ in determining the spectral asymptotics.

\subsection{Semiclassical analysis and discrete Schr\"{o}dinger operators}

For a positive integer $d\geq 1$, we consider the lattice $\Z^d$.  Let $V:\R^d\to \R$ be a function  and  define
\begin{align*}%\label{pot}
V_N:\Z^d\to \R,\quad  x\mapsto V\left(\frac{x}{N}\right).
\end{align*}
For a given parameter $\gamma\in \mathbb{R}$ and $N\in\N$, we consider the following discrete Schr\"{o}dinger operator on $\ell^2(\Z^d)$ 
\begin{align*}%\label{iyi}
H_Nf(x):=-\frac{N^2}{2}\sum_{|x-y|=1}(f(y)-f(x))+N^{2(1-\gamma)}V_N(x)f(x)
\end{align*}
For suitable conditions on $V$, \textit{cf.} Assumption \ref{assumption} in Section \ref{subsec:main}, we analyze the spectral asymptotics as $N\to\infty$ for different values of $\gamma$. 
%In short,  In Section \ref{sec: Alternative regions} we provide an overview of the eigenvalue asymptotics of the other regimes of the parameter $\gamma$, starting from $H_N$.
In particular, we may identify a total number of five distinct limits with the following characteristics for the eigenvalues, as further discussed in Remark \ref{rmk:wichtig}, namely
\begin{itemize}
    \item $\gamma> 1$: corresponding to the continuum limit of the free discrete Laplacian, towards the Laplacian $-\frac{1}{2}\Delta_{\mathbb{R}^d}$;
    \item $\gamma=1$: corresponding to the continuum limit of $H_N$ towards the operator $ \frac{1}{2}\Delta_{\mathbb{R}^d}+V$;
    \item $\gamma\in (-1,1)$: corresponding to a semi-classical limit of the operator  $\frac{1}{2}\Delta_{\mathbb{R}^d}+\lambda^2 V$ on $\mathbb{R}^d$, with main result stated in Theorem~\ref{thm:main} below; 
    \item $\gamma=-1$: corresponding to a purely discrete model, whose precise features depend on the particular choice of the potential;
    \item $\gamma<-1$: corresponding to a semiclassical approximation of a discrete model.
\end{itemize}
Among these regimes, the case $\gamma\in(-1,1)$ is of particular interest: it yields not only semiclassical spectral asymptotics, but also allows for the derivation of exact Agmon estimates for the ground state, in the spirit of the continuum results of \cite{Simon_1984}. This regime constitutes the main focus of our second work \cite{Keller_Pettinari_Ven_Part2}.

%\textcolor{red}{
\subsection{Physical interpretation and comparison with literature}
We recall the usual semi-classical analysis for a Schr\"{o}dinger operator in the continuum
\begin{align*}%\label{cts Schroedinger 1}
   \frac{h^2}{2}\Delta_{\R^d}+V \ \ \ \text{on} \ \ L^2(\mathbb{R}^d),
\end{align*}
where $\Delta_{\R^d}=-\sum_{i=1}^d\partial_i^2$ is the usual Laplacian on $\mathbb{R}^d$, and $V:\R^d\to\R$ is a potential defined as multiplication operator.
The parameter $h$ is a scaled and dimensionless version of Planck's constant, describing a classical theory in the regime $h\to 0$. The behavior of this operator is closely related to the behavior of the rescaled operator with $\lambda=1/h$
\begin{align}\label{cts Schroedinger 1}
   \frac{1}{2}\Delta_{\R^d}+ {\lambda^2} V \ \ \ \text{on} \ \ L^2(\mathbb{R}^d),
\end{align}
when letting $\lambda\to\infty$. This limit is studied by Simon in \cite{Simon_1984,Simon_1985} in relation to  the time dependent Schr\"{o}dinger equation.
\newline
For a discrete analogue, one typically considers  a uniform mesh size $\delta$ and introduce a discrete Schr\"{o}dinger operator $H_{\delta,\lambda}$ on $\ell^2(\delta\mathbb{Z})$, defined by
 \begin{align*} 
     H_{\delta,\lambda}f(x)=-\frac{1}{2\delta^2}\sum_{|x-y|=\delta}(f(y)-f(x))+\lambda^2V(x)f(x).
 \end{align*}
The continuum limit deals with the analysis of this operator as $\delta\to 0$ at fixed $\lambda$. Details on this limit can be found in e.g. \cite{Isozaki_Jensen_2022,Nakamura_2021}. In contrast, the classical limit concerns the regime $\lambda\to \infty$, at fixed $\delta$ (see e.g. \cite{Bach_Pedra_Lakaev_2017}).
\newline
In this work, we couple the parameters $\delta,\lambda>0$ as follows. For a given parameter $\gamma\in \mathbb{R}$ and $N\in\N$, we let $$\delta_N=1/N\quad\mbox{ and }\quad\lambda_N=N^{1-\gamma}.$$
In view of the above discussion, the corresponding  operator is  given by
\begin{align*} 
   {  H_{1/N,N^{1-\gamma}}}f(x)=-\frac{N^2}{2}\sum_{|x-y|=1/N}(f(y)-f(x))+\lambda_N^2V(x)f(x)  
 \end{align*}
{on}  $\ell^2(\mathbb{Z}^d/N).$ Thus, the limit $\delta\to 0$ is {not} taken independently of $\lambda_N$; instead, it is a  single coupled limit $N\to\infty$ obeying $\delta_N=1/N$ and $\lambda_N=N^{1-\gamma}$. In other words, the limit $N\to\infty$  {\em simultaneously} describes the continuum as well as the semi-classical limit.

To see how $H_N$ defined above is derived from $  {  H_{1/N,N^{1-\gamma}}}$ we make the following simplification, which, however, does not lead to a loss of generality. Instead of considering the scaled lattice $\Z^d/N$, we work on $\Z^d$, which means that we have to consider the potential $V_N:\Z^d\to \R$ given by 
$$V_N(x) =V\left(\frac{x}{N}\right),$$
and furthermore the discrete Laplacian on $\Z^d$ sums over vertices of distance $1$ instead of distance $1/N$, which readily gives the operator $H_N$ on  $\ell^2(\mathbb{Z}^d )$ which we already introduced in the beginning of this section, i.e.,
\begin{align*}\label{generic hamiltonian}
     H_{N}f(x)=-\frac{N^2}{2}\sum_{|x-y|= 1}(f(y)-f(x))+N^{2(1-\gamma)}V_N(x)f(x).
 \end{align*}
Note that $H_N$ unitarily equivalent to $   {  H_{1/N,N^{1-\gamma}}}$ via the map $U_N:\ell^2(\Z^d)\to \ell^2(\Z^d/N)$, $U_Nf(x/N)=f(x)$. In particular, the prefactor $N^2$ in front of the Laplacian is essential: it represents the approximate second derivative of a function in the continuum.
\begin{remark}\label{rmk:wichtig}
We compare our results with the current literature, taking  $H_{N}$ as the starting point. \smallskip

$  \gamma<-1$: By rescaling the operator with the global factor $N^{-2}$, one obtains, for $\gamma<-1$, a semiclassical approximation of the discrete model. This  limit was analyzed in the sense of Weyl asymptotics  for bounded multiplication operators in \cite{Bach_Pedra_Lakaev_2017}, where {asymp}\-totic formulas for the number of discrete eigenvalues in the semi-classical limit  were obtained. It has been studied in \cite{Linn_Lippner_Yau_2012} in the case of  finite connected graphs. Regarding the latter work, in case of the harmonic oscillator, we obtain similar results, in the sense that the eigenvalues converge towards the points of the potential, discussed in \textit{cf.} Section \ref{sec: Alternative regions}. In addition, the case $\gamma=-1$ corresponds to a purely discrete model, whose eigenvalue asymptotics coincide with those of the discrete harmonic oscillator, \textit{cf.} Proposition \ref{prop: -inf to -1}.\smallskip

$ \gamma\in (-1,1)$: For the specific choice of $\gamma=0$ where the operator $H_N$ approximates the semi-classical limit of \eqref{cts Schroedinger 1}, various interesting results  have been obtained in case of bounded potentials with sufficient regularity. Examples include the works by Helffer and Sj\"{o}strand \cite{HorHel_1988}, Kameoka \cite{Kameoka_2023}, Rabinovich \cite{Rabi}, and Klein and Rosenberger \cite{Klein_Rosenberger_2009}. In particular, Klein and Rosenberger studied a generalized Laplacian with variable coefficients, which in our setting reduces to the constant-coefficient Laplacian.
    Unlike the present work, the spectral estimates for the eigenvalues in these references are obtained using microlocal techniques, and eigenvector asymptotics are derived in terms of the Agmon–Finsler metric \cite{Kameoka_2023,Klein_Rosenberger_2008}.  A further extension of these asymptotics to eigenfunctions will be presented in our follow-up work \cite{Keller_Pettinari_Ven_Part2}. To the best of our knowledge, there is no literature on this regime for $\gamma\in (-1,1)\setminus\{0\}$, and the case of unbounded potentials has not been studied neither.  \smallskip

$\gamma=1$: For $\gamma=1$, i.e., the regime in which the continuum limit of $H_N$ approximates $\tfrac{1}{2}\Delta + V$, convergence results have been obtained in the norm-resolvent sense. In particular, Nakamura and Tadano \cite{Nakamura_2021} prove norm-resolvent convergence under suitable assumptions on the potential, and Kameoka and Nakamura \cite{Kameoka_2024} further extend this analysis to the convergence of resonances. In the present work we only discuss this regime for the harmonic oscillator as particular case of Theorem~\ref{thm:eigenvalue_asymptotics_harmonic_oscillator_main}.\smallskip

$ \gamma>1$: Since the eigenvalues tend to zero as $O(\lambda_N)$, we expect the limit to be the Laplacian with purely continuous spectrum, as follows from Theorem \ref{thm:eigenvalue_asymptotics_harmonic_oscillator_main}. 

\end{remark}
%}

\subsection{Main results on eigenvalue asymptotics}\label{subsec:main}

We are interested in the asymptotic behavior of the eigenvalues of $H_N$ as $N\to\infty$ for potentials which satisfy the following assumption.

\begin{assumption}\label{assumption}
The following conditions on $V:\R^d\to \R$ are required:
    \begin{enumerate}
	\item[(V1)] $V\in C^{\infty}(\mathbb{R}^d)$ and $V\geq 0$.
	\item[(V2)] $V$ has $1\leq m <+\infty$ zeros, $a_1,\dots a_m$, such that the Hessian at these points  given by    
    \begin{equation*}%\label{eq: Hessians}
	   D^2V(a_l) = ((\partial_{i}\partial_{j}V)(a_l))_{i,j=1}^d
	\end{equation*}
    is strictly positive definite for all $l=1,\ldots,m$, i.e., $a_1,\ldots, a_m$ are non-degenerate minima for $V$.
	\item[(V3)] $V$ is strictly positive at infinity, i.e., there exists a $c >0 $ and $R_{0}>0$ such that $|x|> R_{0}$ implies $V(x)\geq c$.
    \end{enumerate}
\end{assumption}
Due to Assumption~\ref{assumption}~(V1), the Hessian of $V$ is symmetric and, due to (V2), it has $d$ strictly positive eigenvalues at each zero $a_1,\ldots,a_m$ which we denote by $\omega_i(a_l)$ for $i=1,\ldots,d$, $l=1,\ldots,m$.     Furthermore, the set
\begin{align*}
    \Sigma(V)&=\left\{\frac12\sum_{i=1}^d \omega_i(a_l)(2n_i+1) \mid n_i\in \N_0, l=1,\ldots,m\right\}
   \end{align*}
   can be enumerated as a non-decreasing sequence of eigenvalues counting multiplicities, i.e., 
\begin{align*}
    e_0(V)\leq e_1(V)\leq\ldots \leq e_n(V)\leq\ldots
\end{align*}
and it is known that  the energy levels $e_n(V)$ are precisely those obtained from a semi-classical limit starting with $\Delta_{\R^d}+\lambda_N^2 V$, with $\lambda_N=N^{1-\gamma}$ and $\gamma\in (-1,1)$, \textit{cf.} \cite[Theorem~1.1]{Simon_1983}. 

We consider the $n$-th eigenvalue of the operator $H_N$ counting with multiplicity which we denote by $E_n(H_N)$ and show the following eigenvalue asymptotics.

\begin{theorem}\label{thm:main}
    Let $V$ be a potential satisfying Assumption~\ref{assumption} and $\Sigma(V)=\{e_0(V)\leq e_1(V)\leq\ldots\leq e_n(V)\leq \ldots\}$ as above.  Then, for all $n\in \mathbb{N}_0$ and $\gamma\in (-1,1)$, it holds
    \begin{align*}
        \lim_{N\to \infty} \frac{E_n(H_N)}{\lambda_N} = e_n(V),
    \end{align*}
    where $\lambda_N=N^{1-\gamma}$.
\end{theorem}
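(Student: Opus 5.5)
We follow Simon's strategy from \cite{Simon_1983}: an upper bound via quasimodes and a lower bound via an IMS localization. The only new ingredient is the discrete Laplacian. Write $H_N=N^2\Delta_{\Z^d}/2+\lambda_N^2V_N$, where $\Delta_{\Z^d}$ is the nonnegative graph Laplacian. Near a minimum $a_l$ we use the length scale $\ell_N=N\lambda_N^{-1/2}$ in lattice units, which corresponds to $\lambda_N^{-1/2}$ in the continuum. Since $\gamma>-1$ we have $\lambda_N^{1/2}\ll N$, so $\ell_N\to\infty$ and wave packets of width $\ell_N$ are resolved by the lattice. Since $\gamma<1$ we have $\lambda_N\to\infty$, so the semiclassical localization is effective. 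This is where the condition $\gamma\in(-1,1)$ enters.

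\textbf{Upper bound.} For each $l$ and each multi-index $n\in\N_0^d$, take the continuum harmonic oscillator eigenfunction $\phi_{l,n}$ of $\frac12\Delta_{\R^d}+\frac12\langle y,D^2V(a_l)y\rangle$ and form the trial functions
\[
f(x)=\chi(x/N-a_l)\,\phi_{l,n}\bigl(\lambda_N^{1/2}(x/N-a_l)\bigr),\qquad x\in\Z^d,
\]
where $\chi$ is a smooth cutoff supported near $0$. We normalize these in $\ell^2(\Z^d)$ using Riemann sums, which is possible because the mesh is $\lambda_N^{1/2}/N\to0$ in the rescaled variable. Next we compute the quadratic form. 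Taylor expansion of the finite differences gives
\[
\tfrac{N^2}{2}\langle f,\Delta_{\Z^d}f\rangle=\lambda_N\bigl(\tfrac12\|\nabla\phi\|^2+o(1)\bigr).
\]
The error involves fourth-order differences, and these are of relative size $\lambda_N/N^2\to0$. For the potential, Taylor expansion of $V$ to second order at $a_l$ gives
\[
\lambda_N^2\langle f,V_Nf\rangle=\lambda_N\bigl(\tfrac12\langle\phi,\langle y,D^2Vy\rangle\phi\rangle+O(\lambda_N^{-1/2})\bigr).
\]
The cross terms between distinct $(l,n)$ are $o(1)$: functions centred at different wells have essentially disjoint supports, and functions at the same well are orthogonal up to Riemann-sum errors. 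The min-max principle then gives $\limsup_N E_n(H_N)/\lambda_N\le e_n(V)$.

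\textbf{Lower bound.} We use a discrete IMS localization formula. Take a partition of unity $J_0^2+\sum_l J_l^2=1$ on $\R^d$, where $J_l(y)=J(\lambda_N^{\beta}(y-a_l))$ with $\beta\in(1/3,1/2)$, and set $J_{k,N}(x)=J_k(x/N)$. On a graph one has the identity
\[
H_N=\sum_k J_{k,N}H_NJ_{k,N}-\tfrac{N^2}{2}\sum_k|\nabla J_{k,N}|^2,
\]
where the localization error is the multiplication operator $\frac{N^2}{4}\sum_k\sum_{y\sim x}(J_{k,N}(x)-J_{k,N}(y))^2$. This error is bounded by $C N^2(\lambda_N^\beta/N)^2=C\lambda_N^{2\beta}=o(\lambda_N)$. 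Away from the wells, $V\ge c\lambda_N^{-2\beta}$ by (V2) and (V3), so $J_0H_NJ_0\ge c\lambda_N^{2-2\beta}J_0^2\gg\lambda_N$. Near $a_l$ we compare $V$ with its quadratic part, with error $O(\lambda_N^2\lambda_N^{-3\beta})=o(\lambda_N)$. It then remains to bound from below the discrete harmonic oscillator $\frac{N^2}{2}\Delta_{\Z^d}+\frac{\lambda_N^2}{2}\langle x/N-a_l,D^2V(a_l)(x/N-a_l)\rangle$, localized near the well.

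\textbf{Main obstacle.} The key step is the spectral lower bound for this discrete oscillator: after dividing by $\lambda_N$, its eigenvalues must converge to the continuum values $\frac12\sum_i\omega_i(2n_i+1)$. We will try to derive this from the explicit harmonic-oscillator result (Theorem~\ref{thm:eigenvalue_asymptotics_harmonic_oscillator_main}) in the range $\gamma\in(-1,1)$, after an orthogonal change of variables. Such a change does not respect $\Z^d$, so it is safer to avoid it: we bound the Hessian below by a diagonal form in a suitable basis, or else compare via the Fourier transform, in which the discrete Laplacian becomes $N^2\sum_i(2-2\cos\theta_i)$. The harmonic-oscillator theorem gives one-dimensional convergence, since $N^2(2-2\cos\theta)\approx N^2\theta^2$ on the relevant frequency scale $|\theta|\lesssim\lambda_N^{1/2}/N\to0$. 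Finally, a standard counting argument combines these pieces. The operator $\sum_kJ_kH_NJ_k$ is bounded below by a direct sum of the localized oscillators plus a part above $C\lambda_N^{2-2\beta}$, up to $o(\lambda_N)$. Min-max then gives $\liminf_N E_n(H_N)/\lambda_N\ge e_n(V)$, which completes the proof.
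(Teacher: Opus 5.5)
Your outline is the paper's proof. The upper bound uses harmonic-oscillator quasimodes at the wells plus min-max (Proposition~\ref{prop:upper_bound_general_potential}). The lower bound uses IMS localization to reduce to the discrete harmonic oscillator handled by Theorem~\ref{thm:eigenvalue_asymptotics_harmonic_oscillator_main} (Proposition~\ref{prop:lower_bound_general_potentials}). Several of your quantitative choices are stated more carefully than in the paper:
\begin{itemize}
\item $\beta\in(1/3,1/2)$ is exactly the window in which both the IMS error $O(\lambda_N^{2\beta})$ and the cubic Taylor error $O(\lambda_N^{2-3\beta})$ are $o(\lambda_N)$.
\item The bound $V\ge c\lambda_N^{-2\beta}$ on $\mathrm{supp}\,J_0$ correctly uses the non-degeneracy of the minima, not only (V3).
\item Sampling the full-Hessian continuum eigenfunctions on the lattice gives the upper bound for arbitrary $D^2V(a_l)$. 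The paper's products of one-dimensional Hermite functions in lattice coordinates fit only a Hessian that is diagonal in that basis.
\end{itemize}
A minor point: on $\Z^d$ the IMS remainder $\frac12\sum_k[J_{k,N},[J_{k,N},\frac{N^2}{2}\Delta]]$ is not a multiplication operator. It has vanishing diagonal and entries $-\frac{N^2}{4}\sum_k(J_{k,N}(x)-J_{k,N}(y))^2$ for $x\sim y$. You only use its norm bound, so nothing changes.

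The step that still needs an argument is the one you flag. Theorem~\ref{thm:eigenvalue_asymptotics_harmonic_oscillator_main} covers only separable oscillators centred at a lattice point; its nodal-domain argument uses $v(x)=v(-x)$. The paper applies it to $H_{N,l}$ as though $D^2V(a_l)$ were diagonal and $Na_l\in\Z^d$.

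Your first fallback cannot close this, because a diagonal lower bound on the Hessian is not sharp. Take $d=2$ and a Hessian with diagonal entries $2$ and off-diagonal entries $1$, so its eigenvalues are $1,3$. The true ground level is $\frac12(1+\sqrt3)$, while every diagonal $D\le D^2V(a_l)$ gives at most $1$.

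The Fourier route does work, but not through the one-dimensional discrete theorem. Use $\ell^2(\Z^d)\cong L^2(\mathbb{T}^d)$ and set $A=D^2V(a_l)$. The oscillator $\frac{N^2}{2}\Delta+\frac{\lambda_N^2}{2N^2}\langle x-Na_l,A(x-Na_l)\rangle$ then becomes, after a gauge transformation removing the centre, the continuum operator $\frac{\lambda_N^2}{2N^2}(-\nabla_\theta\cdot A\nabla_\theta)+\frac{N^2}{2}\sum_i(2-2\cos\theta_i)$ acting on quasi-periodic functions. This is a semiclassical problem with parameter $N^2/\lambda_N=N^{1+\gamma}\to\infty$ and a single non-degenerate well at $\theta=0$. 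Simon's continuum harmonic approximation then applies: it is the same IMS argument on $\mathbb{T}^d$, where a local linear change of variables is now legitimate. This gives the required eigenvalue asymptotics for general $A$ and for non-lattice centres at once.
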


The paper is structured as follows. In Section~\ref{sec: Discrete harmonic oscillators} we study the discrete harmonic oscillator and prove eigenvalue asymptotics in this special case. In Section~\ref{sec:general_potentials} we prove Theorem~\ref{thm:main} using the results from Section~\ref{sec: Discrete harmonic oscillators}. Finally, in Section~\ref{sec: Alternative regions} we discuss the other regimes of the parameter $\gamma$ and provide an overview of the corresponding eigenvalue asymptotics.

Throughout the paper, $C>0$ denotes a generic constant which may change from line to line.

\section{Discrete harmonic oscillators}\label{sec: Discrete harmonic oscillators}

In this section we study the spectrum of the discrete harmonic oscillator. For a vector $\omega\in \R^d$, we let $$V^{harm}:\R^d\to \R,\qquad V^{harm}(x)= \frac{1}2(\omega_1^2x_1^2+\ldots \omega_d^2x_d^2).$$

The continuum harmonic oscillator is given by the operator $H^{harm}=\Delta_{\R^d}+V^{harm}$ on $L^2(\R^d)$ and has purely discrete spectrum with eigenvalues given by 
\begin{align*}
   \sigma(H^{harm})&= \left\{\frac12\sum_{i=1}^d \omega_i(2n_i+1) \mid n_i\in \N_0\right\}
\end{align*}
which we can enumerate as a non-decreasing sequence of eigenvalues counting multiplicities, i.e.,
\begin{align*}
   e_0(V^{harm})\leq e_1(V^{harm})\leq\ldots\leq e_n(V^{harm}){\leq}\ldots.
\end{align*}
%it is well-known that $e_n(V^{harm})$ is the $n$-th eigenvalue of the continuum harmonic oscillator.
We denote the $n$-th eigenvalue of the operator $H_N$ counting multiplicities by $E_n(H_N)$ and show the following eigenvalue asymptotics.
\begin{theorem}\label{thm:eigenvalue_asymptotics_harmonic_oscillator_main}
    For all $n\in \mathbb{N}_0$ and $\gamma\in (-1,\infty)$, it holds
    \begin{align*}
        \lim_{N\to \infty} \frac{E_n(H_N)}{\lambda_N} = e_n(V^{harm}),
    \end{align*}
    where $\lambda_N=N^{1-\gamma}$.
\end{theorem}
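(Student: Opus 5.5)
The plan is to separate variables, pass to momentum space, and compare each one-dimensional piece with a continuum harmonic oscillator via min-max.

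\textbf{Reduction to $d=1$.} Since $V_N(x)=\frac{1}{2N^2}\sum_i\omega_i^2x_i^2$, the potential term of $H_N$ equals $N^{-2\gamma}\frac12\sum_i\omega_i^2x_i^2$. Hence $H_N=\sum_{i=1}^d h_N^{(i)}$ acts on $\ell^2(\Z^d)=\bigotimes_i\ell^2(\Z)$, with one-dimensional operators
\[
h_N^{(i)}f(x)=-\tfrac{N^2}{2}\bigl(f(x+1)+f(x-1)-2f(x)\bigr)+\tfrac12 N^{-2\gamma}\omega_i^2x^2f(x).
\]
The spectrum of $H_N$ then consists of the sums of eigenvalues of the $h_N^{(i)}$, with multiplicities. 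It therefore suffices to prove $E_n(h_N)/\lambda_N\to\omega(n+\frac12)$ for every fixed $n$ in $d=1$. The multidimensional ordering then follows: the $n$-th smallest sum is determined by finitely many one-dimensional eigenvalues, and these converge. I will also check that each $h_N$ has compact resolvent, so it has purely discrete spectrum; this holds because $x^2\to\infty$ and the Laplacian is bounded.

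\textbf{Fourier picture and rescaling.} Under the Fourier transform $\ell^2(\Z)\to L^2(\mathbb T)$, $\mathbb T=\R/2\pi\Z$, the operator $h_N$ becomes
\[
\widehat h_N=2N^2\sin^2(k/2)+\tfrac12N^{-2\gamma}\omega^2\Bigl(-\frac{d^2}{dk^2}\Bigr)
\]
with periodic boundary conditions. Near $k=0$ we have $2\sin^2(k/2)\approx k^2/2$. Substituting $k=N^{-(1+\gamma)/2}s$ turns the approximating operator $\frac12(N^2k^2-N^{-2\gamma}\omega^2\partial_k^2)$ into $\lambda_N\cdot\frac12(s^2-\omega^2\partial_s^2)$. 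This is $\lambda_N$ times a harmonic oscillator with eigenvalues $\omega(n+\frac12)$. The localization length $N^{-(1+\gamma)/2}$ tends to $0$ exactly when $\gamma>-1$, which is where the hypothesis enters. The quartic correction $N^2k^4/24$, divided by $\lambda_N$, becomes $N^{-1-\gamma}s^4/24\to0$ on any fixed $s$-window.

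\textbf{Upper bound.} Take the first $n+1$ Hermite functions in the variable $s$, multiply by a smooth cutoff supported in $|k|<\varepsilon$, and regard them as periodic functions. Using $2\sin^2(k/2)\le k^2/2$, the Rayleigh quotients are bounded by $\lambda_N(\omega(j+\frac12)+o(1))$. The cutoff errors are super-exponentially small in $N^{(1+\gamma)/2}\varepsilon$. Min-max then gives $\limsup E_n(h_N)/\lambda_N\le\omega(n+\frac12)$.

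\textbf{Lower bound (the main step).} Use an IMS partition $\chi_0^2+\chi_1^2=1$ on $\mathbb T$, with $\chi_0$ supported in $|k|<2\varepsilon$ and $\chi_1$ supported in $|k|>\varepsilon$. The localization error is $O(N^{-2\gamma}\varepsilon^{-2})=o(\lambda_N)$. On the support of $\chi_1$ we have $2N^2\sin^2(k/2)\ge cN^2\varepsilon^2\gg\lambda_N$, again because $\gamma>-1$. So that piece lies above any fixed multiple of $\lambda_N$. On the support of $\chi_0$ we use $2\sin^2(k/2)\ge(1-\varepsilon^2)k^2/2$. Extending $\chi_0 u$ to a function on $\R$ compares this piece with the continuum oscillator of frequency $\sqrt{1-\varepsilon^2}\,\omega$. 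A standard min-max argument for the sum of the two localized quadratic forms then yields $\liminf E_n(h_N)/\lambda_N\ge\sqrt{1-\varepsilon^2}\,\omega(n+\frac12)$. Letting $\varepsilon\to0$ finishes the proof.

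The delicate point is bookkeeping in this lower bound: I must verify that the rank of the localized comparison is handled correctly, meaning the $\chi_1$ part contributes no low eigenvalues. I must also check that every error term is $o(\lambda_N)$ uniformly across all $\gamma\in(-1,\infty)$. For large $\gamma$ these errors are easily small. The constraint is the near-$-1$ end, where the relevant condition is $N^2\varepsilon^2\gg N^{1-\gamma}$.
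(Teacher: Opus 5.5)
Your proof is correct, but it takes a genuinely different route from the paper. The paper stays in position space. After rescaling to $H_\kappa=\Delta+\kappa^4x^2$, it gets the upper bound from sampled Hermite functions (Taylor expansion in Lemma~\ref{lem:approx_eigenfunction}, Poisson summation for almost orthogonality in Lemma~\ref{lem:almost_orthogonality}). The lower bound comes from a fairly elaborate chain: positive $\alpha$-superharmonic functions on the intervals between consecutive zeros of $h_n$ (Agmon--Allegretto--Piepenbrink), a potential spike at $|x|=\lfloor\kappa^{-(1+\delta)}\rfloor$ to handle the tail, a Sturm-type nodal count via Teschl's oscillation theorem to identify the glued function as an $n$-th eigenfunction of a comparison operator, and finally IMS with an induction on $n$ to remove the spike.

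You instead use that the Fourier transform swaps kinetic and potential energy. Then $\widehat h_N/N^2=(1-\cos k)+\tfrac{h^2\omega^2}{2}(-\partial_k^2)$ on $\mathbb{T}$ with $h=N^{-(1+\gamma)}$ is a standard one-well semiclassical problem. The two-sided bound $(1-\varepsilon^2)k^2/2\le 1-\cos k\le k^2/2$ near $k=0$ reduces both bounds to the continuum oscillator by Simon's harmonic approximation. The bookkeeping you flag as delicate is routine: $u\mapsto(\chi_0u,\chi_1u)$ is an isometry into $L^2(\R)\oplus L^2(\mathbb{T})$. Min-max therefore compares $h_N$, up to $o(\lambda_N)$, with the direct sum of the continuum oscillator and the constant $C\lambda_N$ for arbitrary $C$, and no induction on $n$ is needed.

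What your route buys:
\begin{itemize}
\item It is much shorter and needs no discrete oscillation theory.
\item It makes the threshold $\gamma=-1$ transparent, since $h\to0$ if and only if $\gamma>-1$.
\item Because $1-\cos k\le k^2/2$ holds globally, it gives an exponentially small (in $N^{1+\gamma}$) error in the upper bound.
\end{itemize}
Its price is that it rests on the exact duality $x^2\leftrightarrow-\partial_k^2$, so it is tied to quadratic potentials. The paper's position-space tools are what it reuses in Section~\ref{sec:general_potentials}: the discrete Hermite test functions and their almost orthogonality build test functions near the wells of a general $V$. That setting is also the natural one for the position-space Agmon estimates announced for the companion paper.
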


Since the harmonic oscillator potential $V^{harm}$ is $2$-homogeneous, we make a further simplification to our setting by moving all of the scaling and constants towards the potential. We furthermore start with the one-dimensional case $d=1$ since the general case then follows by tensorization of the one-dimensional results.

Specifically, for some $\kappa\in (0,1)$, we consider the multiplication operator on $\ell^2(\Z)$ by the function $$v_\kappa:\Z\to \R,\qquad v_\kappa(x)=2\kappa^2V^{harm}(\kappa x)=\kappa^4 x^2.$$ We study the operator 
$$H_\kappa=\Delta + v_\kappa$$
where $\Delta$ is the discrete Laplacian on $\ell^2(\Z)$ defined as
$$(\Delta f)(x)=2f(x)-f(x+1)-f(x-1).$$
The operator $H_\kappa$ has purely discrete spectrum as $\lim_{|x|\to\infty} v_\kappa(x)=\infty$. We denote its $n$-th eigenvalue by $E_n(\kappa)=E_n(H_\kappa)$, $n\in \N_0$.

Now, letting $\kappa=\sqrt{\omega/N^{1+\gamma}}$, one sees that
$$H_N = \frac{N^2}{2}H_\kappa $$ 
for $H_N$ in  $d=1$ and we can derive eigenvalue asymptotics for $E_n(H_N)$ from eigenvalue asymptotics for $E_n(\kappa)$ as $\kappa\to 0$ which are stated in the following theorem.

\begin{theorem}\label{thm:eigenvalue_asymptotics_harmonic_oscillator}
    For all $n\in \mathbb{N}_0$, it holds
    \begin{align*}
        \lim_{\kappa\to 0} \frac{E_n(\kappa)}{\kappa^2} = e_n(V).
    \end{align*}
\end{theorem}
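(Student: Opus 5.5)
The idea is to pass to Fourier space, where $H_\kappa$ becomes a one-well semiclassical operator on the circle. The semiclassical harmonic approximation (in the spirit of \cite{Simon_1983}) then gives the result. The target is the spectrum of the one-dimensional continuum oscillator $-\frac{d^2}{ds^2}+s^2$, whose eigenvalues are $2n+1$; this is the normalization implicit in $e_n(V)$ for $v_\kappa(x)=\kappa^4x^2$.

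\emph{Step 1: Fourier transform.} Let $\mathcal F:\ell^2(\Z)\to L^2(\mathbb T)$, $\mathbb T=\R/2\pi\Z$, be the Fourier series map. Under $\mathcal F$ the discrete Laplacian $\Delta$ becomes multiplication by $2-2\cos\theta=4\sin^2(\theta/2)$. Multiplication by $x^2$ becomes $-\frac{d^2}{d\theta^2}$ with periodic boundary conditions. Hence $H_\kappa$ is unitarily equivalent to
\[
\widehat H_\kappa=-\kappa^4\frac{d^2}{d\theta^2}+W(\theta),\qquad W(\theta)=4\sin^2(\theta/2),
\]
which is a Schr\"odinger operator on $\mathbb T$ with semiclassical parameter $h=\kappa^2$. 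The potential $W$ has a unique, non-degenerate zero at $\theta=0$, with $W(\theta)=\theta^2+O(\theta^4)$. Substituting $\theta=\kappa s$ formally turns $\widehat H_\kappa$ into $\kappa^2\bigl(-\frac{d^2}{ds^2}+s^2+O(\kappa^2 s^4)\bigr)$. This explains the normalization by $\kappa^2$ in Theorem~\ref{thm:eigenvalue_asymptotics_harmonic_oscillator}.

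\emph{Step 2: upper bound.} Let $\varphi_0,\dots,\varphi_n$ be the first $n+1$ Hermite functions and let $\chi$ be a cutoff equal to $1$ near $0$ and supported in $(-\pi,\pi)$. Take trial functions $u_k(\theta)=\kappa^{-1/2}\chi(\theta)\varphi_k(\theta/\kappa)$. Because of Gaussian decay, the cutoff only produces errors of size $O(\kappa^\infty)$. Moreover,
\[
\langle u_j,\widehat H_\kappa u_k\rangle=\kappa^2\bigl((2k+1)\delta_{jk}+O(\kappa^2)\bigr),
\]
since $\int s^4|\varphi_k|^2<\infty$ controls the quartic remainder of $W$. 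The Gram matrix is $I+O(\kappa^\infty)$. The min-max principle then yields $\limsup_{\kappa\to0}E_n(\kappa)/\kappa^2\le 2n+1$.

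\emph{Step 3: lower bound via IMS localization.} This is the main step. Fix $\alpha\in(0,1)$ and choose $J_0^2+J_1^2=1$ on $\mathbb T$, with $J_0=1$ on $|\theta|\le\kappa^\alpha$ and $J_0=0$ on $|\theta|\ge2\kappa^\alpha$, and $|J_i'|\le C\kappa^{-\alpha}$. The IMS formula gives
\[
\widehat H_\kappa=\sum_i J_i\widehat H_\kappa J_i-\kappa^4\sum_i|J_i'|^2\ \ge\ \sum_i J_i\widehat H_\kappa J_i-C\kappa^{4-2\alpha}.
\]
The error $C\kappa^{4-2\alpha}$ is $o(\kappa^2)$ because $\alpha<1$. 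The two localized pieces are handled as follows.
\begin{itemize}
\item On $\operatorname{supp}J_1$ we have $W\ge c\kappa^{2\alpha}$, so $J_1\widehat H_\kappa J_1\ge c\kappa^{2\alpha}J_1^2$. Since $\kappa^{2\alpha}\gg\kappa^2$, this piece lies above every fixed level $(2n+2)\kappa^2$ for small $\kappa$.
\item On $\operatorname{supp}J_0$ we have $W(\theta)\ge\theta^2(1-C\kappa^{2\alpha})$. Extending $J_0f$ by zero to $\R$ gives $J_0\widehat H_\kappa J_0\ge(1-C\kappa^{2\alpha})J_0\,\mathcal H_\kappa J_0$, where $\mathcal H_\kappa=-\kappa^4\frac{d^2}{d\theta^2}+\theta^2$ on $L^2(\R)$ has eigenvalues exactly $\kappa^2(2k+1)$.
\end{itemize}

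Let $P$ be the rank-$(n+1)$ spectral projection of $\mathcal H_\kappa$ for its lowest $n+1$ eigenvalues. The two bullet points together give
\[
\widehat H_\kappa\ \ge\ \kappa^2\bigl((2n+3)(1-o(1))\bigr)-(\text{operator of rank}\le n+1)-o(\kappa^2),
\]
where the finite-rank operator is $J_0 P\,\mathcal H_\kappa\text{-correction}\,J_0$. More precisely, $\widehat H_\kappa\ge A_\kappa$, where $A_\kappa$ is a rank-$(n+1)$ perturbation of an operator bounded below by $(2n+3)\kappa^2(1-o(1))$. On the complement of $J_0\operatorname{Ran}P$ the quadratic form of $\widehat H_\kappa$ is at least $(2n+3)\kappa^2(1-o(1))$. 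The finite-rank form of the min-max principle, applied exactly as in Simon's harmonic approximation argument, then gives $\liminf E_m(\kappa)/\kappa^2\ge 2m+1$ for each $m\le n$.

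\emph{Main obstacle.} The delicate point is Step 3. The localization scale $\kappa^\alpha$ must satisfy two competing requirements: the IMS error $\kappa^{4-2\alpha}$ must be $o(\kappa^2)$, and the outer potential barrier $\kappa^{2\alpha}$ must be $\gg\kappa^2$. Any $\alpha\in(0,1)$ achieves both. The remaining bookkeeping is to run the min-max counting argument with the finite-rank operator coming from the inner region; this is routine.
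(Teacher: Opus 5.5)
Your proof is correct, and it takes a genuinely different route from the paper.

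\textbf{The paper's route.} The paper stays on the lattice throughout. Its upper bound uses the sampled Hermite functions $\psi_n(x)=\Psi_n(\kappa x)$, with almost-orthogonality obtained from Poisson summation. Its lower bound is a long position-space construction:
\begin{itemize}
\item intervals adapted to the zeros of $h_n$;
\item the Agmon--Allegretto--Piepenbrink criterion on each interval;
\item a potential modified at $|x|=\lfloor\kappa^{-(1+\delta)}\rfloor$;
\item oscillation theory for Jacobi operators, using the symmetry $v(x)=v(-x)$, to identify the glued function as the $n$-th eigenfunction;
\item a discrete IMS argument with induction on $n$.
\end{itemize}

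\textbf{What your route buys.} Your Fourier duality turns $\Delta+\kappa^4x^2$ into the Mathieu-type operator $-\kappa^4\partial_\theta^2+4\sin^2(\theta/2)$ on $L^2(\mathbb{T})$. This replaces all of the above by a standard one-well semiclassical problem with $h=\kappa^2$, so Simon's harmonic approximation applies verbatim. Your local model $-\kappa^4\partial_\theta^2+\theta^2$ on $L^2(\mathbb{R})$ has the explicit spectrum $\kappa^2(2k+1)$, so no induction is needed. Taking $\alpha=1/2$ even gives the quantitative two-sided bound $|E_n(\kappa)-\kappa^2(2n+1)|=O(\kappa^3)$, whereas the paper's lower bound is purely qualitative.

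\textbf{What it costs.} The identity $x^2\leftrightarrow-\partial_\theta^2$ is special to quadratic potentials and yields no position-space quasimodes. So it cannot replace the lattice arguments of Section~\ref{sec:general_potentials}. It does, however, supply the harmonic-oscillator input used there, and arguably in a more robust form. A non-lattice centre $Na_l$ merely twists the periodic boundary condition by a Floquet phase. In $d$ dimensions, a Hessian $A$ that is not diagonal in the lattice coordinates just produces the kinetic term $-\tfrac12\nabla_\theta\cdot A\nabla_\theta$ (up to scaling) against the rotation-invariant well $|\theta|^2$. Neither situation is literally covered by the paper's reflection-symmetric, tensorized argument.

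\textbf{Two cosmetic remarks.}
\begin{itemize}
\item With $P$ of rank $n+1$, your final display bounds $E_{n+1}(\kappa)$, not $E_m(\kappa)$ for $m\le n$. So run the argument for every $n\ge-1$, with $P=0$ for the ground state.
\item Reading $e_n(V)$ as $2n+1$ is correct. Propositions~\ref{thm:upperbound} and~\ref{thm:lowerbound} prove exactly this, and the factor $\omega/2$ in the proof of Theorem~\ref{thm:eigenvalue_asymptotics_harmonic_oscillator_main} presupposes it.
\end{itemize}
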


To prove eigenvalue asymptotics for $E_n(\kappa)$ as $\kappa\to 0$, we use the approximate eigenfunctions given by Hermite polynomials. 
Let $h_n$ be the $n$-th Hermite polynomial, $n\in\N_0$, in physicists normalization, i.e., $ h_{n}(y)=(-1)^{n}e^{y^{2}}{\frac {d^{n}}{dy^{n}}}e^{-y^{2}}$, see e.g. \cite[Chapter~22]{Abramowitz_Stegun_72}a nd denote for $y\in \R$,
$$\varphi(y)=e^{-y^2/2},\qquad \Psi_n=h_n\varphi.$$
Furthermore,  for some $\kappa>0$, we define the scaled Hermite polynomial and the corresponding test function  for  $x\in \Z$  
$$\psi_{n}(x)=\Psi_n(\kappa x)=h_{n}(\kappa x)\varphi(\kappa x).$$

%and     $$\kappa_N=\frac{\sqrt{(1-\varepsilon)\lambda_N \omega}}{N}$$
The fundamental property of these test functions is that they are approximate eigenfunctions of the operator $\Delta + v_\kappa$ for small $\kappa$ as shown in the following lemma.

\begin{lemma}\label{lem:approx_eigenfunction}
For all $n \in \mathbb{N}_0$ and $\kappa>0$, it holds
$$(\Delta+v_\kappa)\psi_{n}(x)= \kappa^2 \left(2n+1\right)   \psi_{n}(x)-R(x,\kappa) ,$$
where 
$$R(x,\kappa)=\int_0^\kappa \frac{t^3}{3!} \Psi_{n}^{(4)}(\kappa x-t) dt= O(\kappa^4 )$$
and the $O(\cdot)$ is uniform in $x$. 
\end{lemma}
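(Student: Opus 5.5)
The plan is a direct computation: Taylor-expand the second difference to fourth order and use the Hermite differential equation. The only care needed is in the remainder. Fix $n$, write $y=\kappa x$, and note $\psi_n(x)=\Psi_n(y)$. Then
\[
(\Delta\psi_n)(x)=2\Psi_n(y)-\Psi_n(y+\kappa)-\Psi_n(y-\kappa),\qquad v_\kappa(x)\psi_n(x)=\kappa^4x^2\Psi_n(y)=\kappa^2 y^2\Psi_n(y).
\]
So both terms are evaluations of the smooth function $\Psi_n$ and its shifts at the point $y$. The potential term already carries exactly the factor $\kappa^2$.

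First, I expand $\Psi_n(y\pm\kappa)$ by Taylor's formula up to third order, with the fourth-order remainder in integral form. Adding the two expansions cancels the odd-order terms $\pm\kappa\Psi_n'(y)$ and $\pm\frac{\kappa^3}{6}\Psi_n'''(y)$. This gives
\[
\Psi_n(y+\kappa)+\Psi_n(y-\kappa)-2\Psi_n(y)=\kappa^2\Psi_n''(y)+\int_0^\kappa\frac{(\kappa-t)^3}{3!}\bigl(\Psi_n^{(4)}(y+t)+\Psi_n^{(4)}(y-t)\bigr)\,dt .
\]
After the substitution $t\mapsto\kappa-t$, this remainder is the one stated in the lemma, written as $R(x,\kappa)$. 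I would simply define $R$ by this symmetric integral and observe that it is the same object as in the statement.

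Second, I use the classical fact that the Hermite functions $\Psi_n=h_n\varphi$ satisfy
\[
-\Psi_n''(y)+y^2\Psi_n(y)=(2n+1)\Psi_n(y).
\]
This follows from the Hermite equation $h_n''-2yh_n'+2nh_n=0$ by a two-line computation with $\varphi'=-y\varphi$. Combining it with the first step gives
\[
(\Delta+v_\kappa)\psi_n(x)=\kappa^2\bigl(-\Psi_n''(y)+y^2\Psi_n(y)\bigr)-R(x,\kappa)=\kappa^2(2n+1)\psi_n(x)-R(x,\kappa).
\]

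Third, for the uniform bound I note that $\Psi_n^{(4)}$ is a polynomial times $e^{-y^2/2}$, hence bounded on $\R$ by a constant $C_n$. Therefore
\[
|R(x,\kappa)|\le 2C_n\int_0^\kappa\frac{(\kappa-t)^3}{3!}\,dt=\frac{2C_n}{4!}\,\kappa^4
\]
uniformly in $x\in\Z$.

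There is no real obstacle in this argument. The only points requiring care are the bookkeeping of the Taylor remainder, so that the odd terms cancel and the remainder is correctly identified with the stated $R$, and checking the sign conventions of $\Delta$ against $-\Psi_n''$.
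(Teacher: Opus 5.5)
Your argument follows the same route as the paper: Taylor-expand $\Psi_n(\kappa x\pm\kappa)$ to fourth order, let the odd terms cancel, insert $-\Psi_n''+y^2\Psi_n=(2n+1)\Psi_n$, and bound the remainder using boundedness of $\Psi_n^{(4)}$. Your two-sided Taylor formula, the sign bookkeeping for $\Delta$, the derivation of the Hermite-function equation, and the uniform bound $\frac{2C_n}{4!}\kappa^4$ are all correct.

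The one false step is your claim that the substitution $t\mapsto\kappa-t$ turns your remainder into the displayed $R$. The substitution gives
\[
\int_0^\kappa\frac{t^3}{3!}\Bigl(\Psi_n^{(4)}(\kappa x+\kappa-t)+\Psi_n^{(4)}(\kappa x-\kappa+t)\Bigr)\,dt ,
\]
whose leading term is $\frac{\kappa^4}{12}\Psi_n^{(4)}(\kappa x)$. By contrast, $\int_0^\kappa\frac{t^3}{3!}\Psi_n^{(4)}(\kappa x-t)\,dt$ has leading term $\frac{\kappa^4}{24}\Psi_n^{(4)}(\kappa x)$. A concrete check: for $n=0$ and $x=0$ the true remainder is $2e^{-\kappa^2/2}-2+\kappa^2=\frac{\kappa^4}{4}+O(\kappa^6)$. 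This matches your integral, whereas the displayed $R$ gives $\frac{\kappa^4}{8}+O(\kappa^6)$. So the exact formula for $R$ in the statement is off by roughly a factor $2$ and cannot be proved as written.

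The paper's own proof glosses over this. It attaches the same one-sided remainder $\int_0^\kappa\frac{t^3}{3!}\Psi_n^{(4)}(y-t)\,dt$ to both shifts $y\pm\kappa$ and then only tracks $O(\kappa^4)$. You should drop the identification and state the identity with your symmetric remainder. What is actually proved and used later is a uniform $O(\kappa^4)$ bound, and a remainder involving only $\Psi_n^{(4)}$ on $[\kappa x-\kappa,\kappa x+\kappa]$; that is exactly what your argument delivers.
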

\begin{proof}
  We expand $\psi_{n}(x\pm 1)= \Psi_{n}(y\pm\kappa)$ about $y=\kappa x$ with respect to $\kappa$
    \begin{multline*}
         \Psi_{n}(y)-\Psi_{n}(y\pm\kappa)\\
         = -\left(\pm \kappa \Psi_{n}'(y)+\frac{\kappa^2}{2}\Psi_{n}''(y)\pm\frac{\kappa^3}{6}\Psi_{n}'''(y)+\int_0^\kappa \frac{t^3}{3!}\Psi_{n}^{(4)}(y-t) dt \right),
    \end{multline*}
    where the integral term is of order $O(\kappa^4)$ and is uniform in $y$ since $\Psi_n$ is a Schwartz function and thus bounded.
    Using that $\Psi_n$ satisfies the harmonic oscillator equation \cite[Section~8.3]{Teschl}
    $$- \Psi_n''(y)+ {y^2} \Psi_n(y)=\left(2n+1\right)\Psi_n(y),$$
    we obtain from the expansion above that
    \begin{align*}
        \Delta \psi_{n}(x)&=  2\psi_{n}(x)-\psi_{n}(x+1)-\psi_{n}(x-1)\\
        &=2\Psi_{n}(y)-\Psi_{n}(y+\kappa)-\Psi_{n}(y-\kappa) \\
        &= - {\kappa^2} \Psi_{n}''(y)+O(\kappa^4 ) 
        = \kappa^2\left(  2n+1  -  {y^2} \right)  \Psi_{n}(y)+O(\kappa^4 )\\
        &= \kappa^2 \left(2n+1 - {\kappa^2 x^2} \right)  \psi_{n}(x)+O(\kappa^4 ).
    \end{align*}
This concludes the proof.
\end{proof}

\subsection{Eigenvalue upper bound for  harmonic oscillators}
In this section we prove an upper bound on the $n$-th eigenvalue $E_n(\kappa)$ of the operator $\Delta + v_\kappa$ via the min-max principle  using the test functions $\psi_n$   defined above.

\begin{proposition}\label{thm:upperbound}
    For all $n\in \mathbb{N}_0$ and $\kappa>0$,   the $n$-th eigenvalue $E_n(\kappa)$ of the operator $\Delta + v_\kappa$ satisfies for  $\kappa$ small 
    $$E_n(\kappa) \leq  \kappa^2 (2n+1) + O(\kappa^3).$$
\end{proposition}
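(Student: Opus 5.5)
We apply the min-max principle to the $(n+1)$-dimensional trial space $U_n=\mathrm{span}\{\psi_0,\dots,\psi_n\}\subset\ell^2(\Z)$. This gives
$$E_n(\kappa)\le \sup_{0\neq f\in U_n}\frac{\langle f,(\Delta+v_\kappa)f\rangle}{\|f\|^2}.$$
The functions $\psi_k$ lie in $\ell^2(\Z)$ because $\Psi_k$ is Schwartz, and they are in the form domain since $v_\kappa\psi_k\in\ell^2$. It then suffices to control two things: the Gram matrix $G_{jk}=\langle\psi_j,\psi_k\rangle$ and the matrix $A_{jk}=\langle\psi_j,(\Delta+v_\kappa)\psi_k\rangle$.

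\textbf{Step 1 (Riemann sums).} Since $\Psi_j\Psi_k$ is Schwartz, Poisson summation (or the Euler--Maclaurin estimate for smooth rapidly decaying functions) gives
$$\kappa\sum_{x\in\Z}\Psi_j(\kappa x)\Psi_k(\kappa x)=\int_\R\Psi_j\Psi_k\,dy+O(\kappa^\infty)=\delta_{jk}c_k+O(\kappa^\infty),$$
where $c_k=\|\Psi_k\|_{L^2}^2>0$ by orthogonality of the Hermite functions. A cruder $O(\kappa)$ error would already suffice here. Hence $\kappa G=D+O(\kappa)$ with $D=\mathrm{diag}(c_0,\dots,c_n)$ invertible. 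In particular $U_n$ is $(n+1)$-dimensional for small $\kappa$, which the min-max principle requires.

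\textbf{Step 2 (matrix of the operator).} By Lemma~\ref{lem:approx_eigenfunction},
$$A_{jk}=\kappa^2(2k+1)G_{jk}-\langle\psi_j,R(\cdot,\kappa)\rangle .$$
The main obstacle is bounding the remainder term in $\ell^2$. Uniformity in $x$ alone is not enough, because $\psi_j$ has $\ell^1$-norm of order $\kappa^{-1}$, so a uniform bound would only give $\kappa\cdot\langle\psi_j,R\rangle=O(\kappa^4)$ after multiplying by $\kappa$. That is in fact fine: $|\langle\psi_j,R\rangle|\le \sup|R|\,\|\psi_j\|_{\ell^1}=O(\kappa^4)\cdot O(\kappa^{-1})=O(\kappa^3)$. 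To justify $\|\psi_j\|_{\ell^1}=O(\kappa^{-1})$, I would use the Riemann-sum argument again with $|\Psi_j|$, or the bound $|\Psi_j(y)|\le C(1+y^2)^{-1}$. Thus $\kappa A=\kappa^2 D\,\mathrm{diag}(2k+1)+O(\kappa^3)$. If one prefers the sharper $O(\kappa^4)$ after normalization, one could also use the decay $|\Psi^{(4)}_n(\kappa x-t)|\le C(1+(\kappa x)^2)^{-1}$ for $|t|\le\kappa\le1$, which gives $\|R\|_{\ell^2}=O(\kappa^{4-1/2})$. This refinement is not needed.

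\textbf{Step 3 (conclusion).} The Rayleigh quotient supremum over $U_n$ equals the largest generalized eigenvalue of the pencil $(\kappa A,\kappa G)$. Write $\kappa G=D+O(\kappa)$ and $\kappa A=\kappa^2(2\,\mathrm{diag}(k)+I)D+\kappa^2 O(\kappa)+O(\kappa^3)$, where the first $O(\kappa)$ comes from the off-diagonal Gram error multiplied by $\kappa^2$. Then $\kappa^{-2}(\kappa G)^{-1/2}(\kappa A)(\kappa G)^{-1/2}$ is a symmetric matrix equal to $\mathrm{diag}(2k+1)+O(\kappa)$. By Weyl's perturbation inequality its largest eigenvalue is $2n+1+O(\kappa)$. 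Therefore $E_n(\kappa)\le\kappa^2(2n+1)+O(\kappa^3)$ as $\kappa\to0$.
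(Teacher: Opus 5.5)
Your proposal is correct and follows the paper's proof. Both use the min-max principle on $\mathrm{span}\{\psi_0,\dots,\psi_n\}$, control the Gram matrix by Poisson summation (Lemma~\ref{lem:almost_orthogonality}) and the operator matrix by Lemma~\ref{lem:approx_eigenfunction}; your Weyl-inequality treatment of the pencil is only a repackaging of the paper's coefficient-wise bound on the Rayleigh quotient. Two of your details are more careful than the paper: the bound $\langle\psi_j,R(\cdot,\kappa)\rangle=O(\kappa^3)$ via $\|\psi_j\|_{\ell^1}=O(\kappa^{-1})$, where the paper writes $O(\kappa^4)$, and your explicit check that the trial space is $(n+1)$-dimensional. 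Neither affects the final $O(\kappa^3)$ error.
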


We start by proving the following lemma showing that the test functions $\psi_n$ are almost orthogonal for small $\kappa$. We denote the scalar product on $\ell^2(\Z)$ by $\langle \cdot,\cdot\rangle$.

\begin{lemma}[Upper bound]\label{lem:almost_orthogonality}
 For all $n,m\in \mathbb{N}_0$ and $\kappa>0$, there is $C_{n,m}>0$ such that
\begin{align*}
    \left|\langle \psi_{n },\psi_{m }\rangle- \frac{ \sqrt{\pi}2^nn!}{\kappa }\delta_{n,m} \right| \leq C_{n,m} .
\end{align*}
\end{lemma}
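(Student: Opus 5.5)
The idea is to compare the lattice sum $\langle \psi_n,\psi_m\rangle=\sum_{x\in\Z}\Psi_n(\kappa x)\Psi_m(\kappa x)$ with the corresponding integral. Set $F=\Psi_n\Psi_m=h_nh_m e^{-y^2}$, which is a Schwartz function on $\R$. By the orthogonality relation for Hermite functions in physicists' normalization,
$$\int_\R F(y)\,dy=\sqrt{\pi}\,2^n n!\,\delta_{n,m}.$$
After the substitution $y=\kappa x$, the claim is equivalent to
$$\Bigl|\kappa\sum_{x\in\Z}F(\kappa x)-\int_\R F(y)\,dy\Bigr|\le C_{n,m}\,\kappa .$$
So it suffices to control the Riemann sum error at order $\kappa$, uniformly in $\kappa>0$.

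To get this, split $\R$ into the cells $[\kappa x,\kappa(x+1)]$, $x\in\Z$. On each cell, write
$$\kappa F(\kappa x)-\int_{\kappa x}^{\kappa(x+1)}F(y)\,dy=-\int_{\kappa x}^{\kappa(x+1)}\bigl(F(y)-F(\kappa x)\bigr)\,dy .$$
By the mean value theorem, the absolute value of the right-hand side is at most $\kappa\int_{\kappa x}^{\kappa(x+1)}\sup_{|s|\le\kappa}|F'(y-s)|\,dy$. Summing over $x$ bounds the total error by $\kappa\int_\R G_\kappa(y)\,dy$, where $G_\kappa(y)=\sup_{|s|\le\kappa}|F'(y-s)|$.

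The one point needing care is that this integral stays bounded uniformly in $\kappa$, which only matters for large $\kappa$. The simplest fix is to use the trivial bound $|\sum_{y\in J}F'(y)\,\cdots|$ differently: for $\kappa\le1$, $G_\kappa\le G_1$, and $G_1$ is integrable because $F'$ is a polynomial times $e^{-y^2}$, so $G_1(y)\le C(1+|y|)^k e^{-(|y|-1)_+^2}$. For $\kappa>1$, there is no need for the Riemann sum estimate at all. In that case $\kappa^{-1}\int F$ is bounded, and $\sum_x|F(\kappa x)|\le \sum_x|F(x)|\le\sup_{|y|\ge|x|}|F(y)|$ summed over $x$ is finite (using that $|F|$ is eventually decreasing in $|y|$, or simply $|F(\kappa x)|\le C(1+|x|)^{-2}$ for $\kappa\ge1$). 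Hence both terms are bounded for $\kappa>1$.

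Multiplying the $\kappa\le1$ estimate by $\kappa^{-1}$ gives the bound $C_{n,m}$ there, and the $\kappa>1$ case was handled directly. This completes the proof. There is no real obstacle: the only delicate point is making the Schwartz-decay estimate for $\sup_{|s|\le\kappa}|F'|$ uniform, which is routine.
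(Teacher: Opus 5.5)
Your proof is correct, but it takes a different route from the paper. The paper applies the Poisson summation formula to $g(x)=\Psi_n(\kappa x)\Psi_m(\kappa x)$. The $k=0$ term is exactly $\kappa^{-1}\sqrt{\pi}2^nn!\delta_{n,m}$. The remaining terms $\hat g(2\pi k)$, $k\neq 0$, are controlled by noting that $\hat g(\xi)=\kappa^{-1}\hat F(\xi/\kappa)$ is again a polynomial times a Gaussian, so their sum is bounded uniformly in $\kappa$; for small $\kappa$ it is in fact of size $O(\kappa^{-p}e^{-\pi^2/\kappa^2})$. You instead compare the lattice sum with the integral cell by cell using the mean value theorem. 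This gives an $O(\kappa)$ error for $\kappa\sum_x F(\kappa x)$, which is exactly the $O(1)$ bound the lemma claims, and you handle $\kappa>1$ separately by crude decay. The Poisson route gives a far sharper error for small $\kappa$ and treats all $\kappa$ with one formula, at the cost of identifying and bounding the Fourier transform of products of Hermite functions. Your route is more elementary and robust: it only needs an integrable envelope $\sup_{|s|\le 1}|F'(\cdot-s)|$ and polynomial decay of $F$, so it would also apply to non-Gaussian profiles. Two small points of cleanup. First, the inequality $\sum_x|F(\kappa x)|\le\sum_x|F(x)|$ for $\kappa>1$ is not true in general without monotonicity of $|F|$ in $|y|$. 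Your fallback $|F(\kappa x)|\le C(1+\kappa|x|)^{-2}\le C(1+|x|)^{-2}$ is correct and suffices, so use that alone. Second, the sentence about ``the trivial bound $|\sum_{y\in J}F'(y)\cdots|$'' is a leftover fragment and should be deleted.
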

\begin{proof}
 Using the orthogonality of Hermite polynomials with respect to the weight $\varphi^2$, we have
\begin{align*}
    \langle \Psi_{n },\Psi_{m }\rangle_{L^2(\R)} &= \int_{\R} H_n(x)H_m(x)e^{-\omega x^2} dx = \sqrt{\pi}2^nn!\delta_{n,m}.
\end{align*}
Now, we want to estimate the difference to the discrete. To this end, we observe that by Poisson summation formula, we have for any $g$ in the Schwartz space %$\mathcal{S}(\R)$
\begin{align*}
    \sum_{x\in \Z} g(x)=\sum_{k\in \Z} \hat{g}(2\pi k)  = \int_{\R} g(x) dx + \sum_{k\in \Z\setminus\{0\}} \hat{g}(2\pi k). 
\end{align*}
Applying this to $g =\Psi_n \Psi_m $, we have
\begin{align*}
    \langle \psi_{n },\psi_{m }\rangle &=  \sum_{x\in \Z} h_{n}(\kappa  x)h_{m}(\kappa  x) \varphi^2(\kappa x)   \\
    &= \int_{\R} h_n(\kappa x)h_m(\kappa x) e^{-\kappa^2 x^2} dx + \sum_{k\in \Z\setminus\{0\}} \hat{g}(2\pi k) .
\end{align*}
Changing variables $y=\kappa  x$, we have 
\begin{align*}
    \int_{\R} h_n(\kappa x)h_m(\kappa x) e^{-\kappa^2 x^2} dx &= \frac{1}{\kappa} \int_{\R} h_n(y)h_m(y) e^{-y^2} dy = \frac{\sqrt{\pi}2^nn!}{\kappa}\delta_{n,m}.
\end{align*}   
 Finally, we estimate the Fourier coefficients. Note that $\hat{g}$ is the Fourier transform of $\varphi (\kappa x)^2 $ times two Hermite polynomials in $\kappa x$,  and, therefore, $\hat{g}$ is  $\varphi (x/\kappa)^2 /\kappa$ times a linear combination of Hermite polynomials in $x/\kappa$ up to degree $n+m$.
As, $h_l( y) \leq (1+|y|)^l$ for all $l\in \mathbb{N}_0$, $y\in \R$, we have for some constants $C >0$
\begin{align*}
    \sum_{k\in \Z\setminus\{0\}} |\hat{g}(2\pi k)|
\le C\sum_{k\in \Z\setminus\{0\}}  \frac{ k^{n+m}}{\kappa^{n+m+1}}
e^{- ({2\pi k}/{\kappa})^2}
\le  C \int_0^\infty y^{n+m} e^{- y^2} dy\leq C,
\end{align*}
where we estimated the sum by an integral and substituted $y=2\pi k/\kappa$ in the second step. This concludes the proof.
\end{proof}

\begin{proof}[Proof of Proposition~\ref{thm:upperbound}]
   We use the min-max principle to estimate the $n$-th eigenvalue $E_n(\kappa)$ of $H_\kappa=\Delta + \kappa^4 x^2$. To this end, we consider the linear span of the first $n+1$ test functions $\psi_0,\ldots,\psi_n$ which are clearly in the domain of $H_\kappa$ and denote it by $W_n$. For any $\phi\in W_n$, we have $\phi=\sum_{j=0}^n a_j \psi_j$ for some $a_j\in \C$. Using the previous two lemmas, we compute
 \begin{align*}
     \langle \phi, (\Delta + v_\kappa)\phi\rangle &= \sum_{j,m=0}^n a_j \overline{a_m} \langle \psi_j, (\Delta + \kappa^2 x^2)\psi_m\rangle \\
     &= \sum_{j,m=0}^n a_j \overline{a_m} \left(  \kappa^2 (2m+1)  \langle \psi_j,\psi_m\rangle + O(\kappa^4)  \right)\\
     &= \sum_{j=0}^n |a_j|^2  \kappa^2 (2j+1)  \frac{\sqrt{\pi}2^jj!}{\kappa} +  \sum_{j,m=0}^n |a_j|| a_m| O(\kappa^4)\\
     &\leq   \kappa^2 (2n+1) \left( \sum_{j=0}^n |a_j|^2   \frac{\sqrt{\pi}2^jj!}{\kappa} +  \left(\sum_{j=0}^n |a_j|\right)^2 O(\kappa^2)\right).
 \end{align*}
 Furthermore, we have
 \begin{align*}
     \|\phi\|^2 &= \sum_{j,m=0}^n a_j \overline{a_m}  \langle \psi_j,\psi_m\rangle =  \sum_{j=0}^n |a_j|^2  \frac{\sqrt{\pi}2^jj!}{\kappa} +  \left(\sum_{j=0}^n |a_j|\right)^2O(1) .
 \end{align*}   
 Thus, for $\kappa$ small enough, we have
 \begin{align*}
     \frac{\langle \phi, (\Delta + \kappa^2 x^2)\phi\rangle}{\|\phi\|^2} &\leq \kappa^2 (2n+1) \frac{  \sum_{j=0}^n |a_j|^2   \frac{\sqrt{\pi}2^jj!}{\kappa} + O(\kappa^2)  \left(\sum_{j=0}^n |a_j|\right)^2}{  \sum_{j=0}^n |a_j|^2  \frac{\sqrt{\pi}2^jj!}{\kappa} + O(1)  \left(\sum_{j=0}^n |a_j|\right)^2} \\
     &\leq \kappa^2 (2n+1) + O(\kappa^3).
 \end{align*}
 Taking the supremum over all $\phi\in W_n$ %and then the infimum over all $n+1$-dimensional subspaces $V_n$ 
 concludes the proof.    
\end{proof}

\subsection{Eigenvalue upper bound for modified harmonic oscillators}

In this section, we address the lower bound on the eigenvalues which is more involved. Indeed, we have to restrict the operator to suitable intervals and estimate the ground state energies on these intervals. To this end, we modify the harmonic oscillator potential by increasing it outside a certain interval at one point to deal with the approximation error. The result is stated in the following proposition.

\begin{proposition}[Lower bound]\label{thm:lowerbound}
    For all $n\in \mathbb{N}_0$ and $\kappa>0$, the $n$-th eigenvalue $E_n(\kappa)$ of the operator $\Delta + v_\kappa$ satisfies for  $\kappa$ small 
    $$\liminf_{\kappa\to 0} \frac{E_n(\kappa)}{\kappa^2} \geq   2n+1 .$$
\end{proposition}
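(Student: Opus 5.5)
We need $\liminf_{\kappa\to 0} E_n(\kappa)/\kappa^2\ge 2n+1$ for $H_\kappa=\Delta+\kappa^4x^2$ on $\ell^2(\Z)$. The plan is to reduce to the upper bound already proved for a modified potential.

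\emph{The modified operator.} Fix $L>0$ large. Consider
\[
\tilde H_\kappa=\Delta+\tilde v_\kappa ,\qquad
\tilde v_\kappa(x)=\kappa^4\min\{x^2, L^2\kappa^{-2}\}\le v_\kappa(x).
\]
This potential agrees with $v_\kappa$ on the scaled window $|\kappa x|\le L$ and is capped at height $\kappa^2L^2$ outside it. Since $\tilde v_\kappa\le v_\kappa$, we get $E_n(\kappa)\ge \tilde E_n(\kappa)$, where $\tilde E_n(\kappa)$ denotes the $n$-th min-max value of $\tilde H_\kappa$. So it suffices to bound the first $n+1$ min-max values of $\tilde H_\kappa$ from below. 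That is a localization problem, and it can be handled with an IMS localization in the scaled variable $y=\kappa x$.

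\emph{IMS localization.} Choose smooth $J_0,J_1$ on $\R$ with $J_0^2+J_1^2=1$, $J_0=1$ on $[-L/2,L/2]$, and $\operatorname{supp}J_0\subset[-L,L]$. Set $j_i(x)=J_i(\kappa x)$. The discrete IMS formula reads
\[
\Delta=\sum_i j_i\Delta j_i-\sum_i|\nabla j_i|^2,
\]
where the error term is the multiplication operator $\sum_i\sum_{y\sim x}\tfrac12(j_i(x)-j_i(y))^2=O(\kappa^2/L^2)$. The decomposition of $H_\kappa$ then has the following pieces.
\begin{itemize}
\item \emph{Outer piece.} On the support of $j_1$ we have $v_\kappa\ge \kappa^2L^2/4$. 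So $j_1H_\kappa j_1\ge \kappa^2(L^2/4)j_1^2$, which exceeds $\kappa^2(2n+1)$ once $L$ is large.
\item \emph{Inner piece.} On the support of $j_0$ we have $v_\kappa\ge\tilde v_\kappa$, so $j_0H_\kappa j_0\ge j_0\tilde H_\kappa j_0$.
\end{itemize}
The next step is a lower bound for $\tilde H_\kappa$ restricted to functions supported in $|\kappa x|\le L$, and this is the main obstacle. I would first try to compare quadratic forms with the continuum harmonic oscillator on $[-L,L]$. For $f$ supported in the window, let $F$ be its piecewise-linear interpolation in $y=\kappa x$. Then:
\begin{itemize}
\item the form $\langle f,\Delta f\rangle$ equals $\kappa\int|F'|^2\,dy$;
\item the mass $\sum|f|^2$ equals $\kappa^{-1}\int|F|^2\,dy$, up to a relative error $O(\kappa^{2})$ times the gradient term;
\item the potential term $\kappa^4\sum x^2|f|^2$ is compared with $\kappa\int y^2|F|^2\,dy$, with an error bounded by $L\kappa^{2}\int|F|\,|F'|\,dy$.
\end{itemize}
Together these give
\[
\langle f,H_\kappa f\rangle\ge \kappa^2\bigl(1-o(1)\bigr)\bigl\langle F,(-\partial_y^2+y^2)F\bigr\rangle_{L^2}\,\kappa^{-1}
-o(\kappa^2)\,\|f\|^2 .
\]

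\emph{Concluding with min-max.} Since the map $f\mapsto F$ is almost isometric after scaling, the min-max principle gives
\[
\lambda_n\bigl(H_\kappa\ \text{restricted to the window}\bigr)\ge \kappa^2\bigl(2n+1-o(1)\bigr),
\]
using that $(2n+1)$ is the $n$-th eigenvalue of $-\partial_y^2+y^2$ and that Dirichlet restriction only raises eigenvalues. Combining this with the outer bound and the IMS error, we get that $H_\kappa$ has at most $n$ eigenvalues below $\kappa^2(2n+1-\epsilon-C/L^2)$. Letting $\kappa\to0$, then $L\to\infty$, then $\epsilon\to0$, yields the claim.

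If the interpolation bookkeeping becomes awkward, there is an alternative. Use the projection $P$ onto $\operatorname{span}\{\psi_0,\dots,\psi_{n-1}\}$ and show $\langle f,H_\kappa f\rangle\ge\kappa^2(2n+1-o(1))\|f\|^2$ for $f\perp \operatorname{ran}P$. This works via Lemma~\ref{lem:approx_eigenfunction} and Lemma~\ref{lem:almost_orthogonality}, with a discrete Fourier argument controlling high frequencies through $\Delta\ge$ its symbol $4\sin^2(k/2)$.
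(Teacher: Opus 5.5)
Your proof is correct, but it takes a genuinely different route from the paper.

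\textbf{How the paper argues.} The paper works with a modified oscillator $\tilde H_\kappa$, whose potential is raised at the two points $|x|=\lfloor\kappa^{-(1+\delta)}\rfloor$. It cuts $\mathbb{Z}$ at the rescaled zeros of $h_n$. On each piece it uses the Agmon--Allegretto--Piepenbrink criterion, with the shifted Hermite functions $\psi_{n,j}$ as positive supersolutions. This requires the $\beta_j$-rescaling and a lower bound on $|\psi_n|$ near the zeros. It then glues the local ground states into an eigenfunction with $n$ sign changes of a piecewise-shifted comparison operator. Discrete oscillation theory identifies this as the $n$-th eigenfunction. Finally it returns to $H_\kappa$ by IMS and induction on $n$.

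\textbf{How your route differs.} You compare quadratic forms directly with $-\partial_y^2+y^2$ through the piecewise-linear interpolation $f\mapsto F$. You then transfer its spectral gap by min-max on the codimension-$\le n$ subspace $\{f: F\perp\Psi_0,\dots,\Psi_{n-1}\}$. This avoids Hermite zeros, oscillation theory and the induction entirely, and it is not tied to dimension one. The paper's route does produce explicit positive supersolutions, but its bookkeeping is considerably more delicate.

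\textbf{Two details to repair.} Neither is fatal.
\begin{itemize}
\item The map $f\mapsto F$ is not uniformly almost isometric. Exactly, $\|f\|^2=\kappa^{-1}\|F\|_{L^2}^2+\tfrac16\langle f,\Delta f\rangle$, so for alternating sequences $\kappa^{-1}\|F\|^2/\|f\|^2$ is close to $1/3$. You must absorb the defect using $\langle f,\Delta f\rangle\le\langle f,H_\kappa f\rangle$. This works because the defect carries the small factor $(2n+1)\kappa^2/6$.
\item Your potential error bound $L\kappa^2\int|F||F'|$ fails for a single delta at $|\kappa x_0|=L$. A correct bound is $C\kappa^2\bigl(L\|F\|^2+L^2\int|F||F'|\bigr)$, which a weighted Cauchy--Schwarz absorbs for fixed $L$.
\end{itemize}

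\textbf{A simplification.} Convexity gives $|F(y)|^2\le(1-s)|f(x)|^2+s|f(x+1)|^2$ for $y=\kappa(x+s)$, $s\in[0,1]$. Summing yields the global estimate $\kappa\|yF\|^2\le\kappa^4\sum_x(x^2+\tfrac16)|f(x)|^2$ for every $f$. So you can drop the window, the IMS step and the capped potential; the capped potential agrees with $v_\kappa$ on $\operatorname{supp}j_0$ anyway. On the same subspace you then get directly $\bigl(1+\tfrac{(2n+1)\kappa^2}{6}\bigr)\langle f,H_\kappa f\rangle\ge\bigl((2n+1)\kappa^2-\tfrac{\kappa^4}{6}\bigr)\|f\|^2$. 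This gives the quantitative bound $E_n(\kappa)\ge(2n+1)\kappa^2-O(\kappa^4)$.
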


To prove the lower bound, we use the so called Agmon--Allegretto--Piepenbrink theorem. We first recall that a function on $u:\Z\to\R$ is called an  \emph{$\alpha$-superharmonic} function for an operator $H$ on $C(\Z)$ and  $\alpha \in \mathbb{R}$ if 
\[
(H + \alpha)u \geq 0.
\] 
If $-u$ is $\alpha$-superharmonic, then $u$ is called \emph{$\alpha$-subharmonic} and if both $u$ and $-u$ are $\alpha$-superharmonic, then $u$ is called \emph{$\alpha$-harmonic}.

Assume now that the restriction of $H$ to the compactly supported functions of $\ell^2(\Z)$ is symmetric and  lower semibounded and, thus, allows for a Friedrichs extension which we denote with slight abuse of notation also by $H$. Let $E_0$ be  the ground state energy of $H$, i.e., the bottom of the spectrum. The Agmon--Allegretto--Piepenbrink criterion, see e.g. \cite[Theorem~4.14]{Keller_Lenz_Wojciechowski_21},  connects the existence of positive superharmonic functions to lower bounds on the ground state energy for some $\alpha\in \R$ in the following equivalence: 
\begin{itemize}
    \item [(i)] It holds $\alpha \geq -E_0$.
    \item [(ii)] There exists a strictly positive $\alpha$-superharmonic function.
\end{itemize}

Clearly, the Agmon--Allegretto--Piepenbrink theorem is only suitable to estimate the ground state energy. However, we can use it to estimate higher eigenvalues by decomposing and restricting our operator to suitable intervals. 
These intervals are chosen with respect to zeros of the Hermite polynomials. We then estimate the ground state energy of the restricted operator on these intervals using the   Agmon--Allegretto--Piepenbrink theorem and use the ground states on these intervals to construct an approximate eigenfunction functions for the operator on $\Z$. 

To estimate the ground state energies on these intervals, we again use the approximate eigenfunctions given by the Hermite polynomials. In the interior of these intervals, these functions are indeed approximate eigenfunctions by Lemma~\ref{lem:approx_eigenfunction}
$$(\Delta+v_\kappa)\psi_{n}(x)= \kappa^2 \left(2n+1\right)   \psi_{n}(x)+O(\kappa^4 ) .$$ 
However, at the boundary of these intervals, the Hermite polynomials vanish across an edge which causes problems. These problems manifest themselves, on the one hand, for the bounded intervals between two zeros and also for the unbounded interval at infinity since the error term $O(\kappa^4)$ Lemma~\ref{lem:approx_eigenfunction} is independent of $\psi_n(x)$ which is a problem if $\psi_n(x)$ is small.

Here, we employ two different strategies to control the behavior of $\psi_n(x)$ near the boundaries of the intervals:
\begin{itemize}
    \item First, we prove a lower bound on $\psi_n(x)$ for $x$ {such that $\kappa x$ has}  distance at least $\kappa$ from the zeros of the Hermite polynomial, see Lemma~\ref{lem:lower_bound_hermite} below.
    \item Second, we modify the functions $\psi_n$ near the interval boundaries by shifting and rescaling their arguments so that a zero of the Hermite polynomial is located exactly at the boundary vertex outside the interval. At the opposite boundary, the function $\psi_n$ retains a fixed sign and does not vanish, which allows us to apply the approximate eigenfunction property from Lemma~\ref{lem:approx_eigenfunction} directly.
\end{itemize}
For the unbounded interval, the situation is different. A closer inspection of the error term $O(\kappa^4)$ in Lemma~\ref{lem:approx_eigenfunction} shows that, although it also contains a Gaussian factor, the associated polynomial factor is larger than that of $\psi_n$ and appears with a negative sign. Consequently, for large $x$, the approximate eigenfunction property cannot be applied directly. We address this issue by increasing the potential of the harmonic oscillator. We then show that, for fixed $n$, the appropriately scaled eigenvalues of the harmonic oscillator and of the modified oscillator converge to the same value, since the corresponding eigenfunctions localize near zero.

To pursue these ideas we now discuss how to restrict the operator to intervals and then construct the intervals such that the approximate eigenfunctions vanish on one side of the boundary.

For an interval $I\subseteq \Z$,  we consider is the projection $$\pi_{I}\colon \ell^2(\mathbb{Z})\rightarrow \ell^2(I),\quad f\mapsto f\vert_I$$ onto $\ell^2(I)$ and $$\iota_{I}\colon \ell^2(I)\rightarrow \ell^2(\mathbb{Z}),\quad f\mapsto
\begin{cases}
f, & \mbox{on } I,\\
0, & \mbox{else}.
\end{cases}$$ 
its adjoint  which is the canonical injection by continuation by zero.

Now, we construct the specific intervals for fixed  $n\in \N_0$. Observe that the Hermite polynomial $h_n$ has exactly $n$ distinct real zeros which are symmetric about $0$ and is even or odd depending on whether $n$ is even or odd.
Let $0\leq z_1<z_2<\ldots < z_k$ be the non-negative zeros of the $n$-th Hermite polynomial, where 
$$k=\begin{cases}
n/2  & \text{ if $n$ is even, i.e., $z_1>0$},\\
(n+1)/2 & \text{ if $n$ is odd, i.e., $z_1=0$}.
\end{cases}
$$ 
Moreover, we set $z_{k+1}=\infty$ for convenience. By symmetry $-z_k>\ldots -z_1\ge 0$ are zeros of $h_n$ as well.

The starting point of the interval $I_1=I_{1,n}(\kappa)=[a_1,b_1]\cap \Z$ is defined as
$$ a_1 =a_{1,n}(\kappa)=\left\lfloor{\frac{z_1}{\kappa}}\right\rfloor+1.$$
If  $z_1=0$ (in the case where $n$ is odd), then  $a_1=1$  and we set $\beta_1=1$. Otherwise, we set
$$\beta_1=\beta_{1,n}(\kappa)=\frac{z_1}{\kappa (a_1-1)}. $$
%Then, we have also $\psi_n(\beta_1(a_1-1))= h_n(z_1)\varphi(z_1)=0$.
We then define $$b_1=b_{1,n}(\kappa) = \left\lfloor\frac{z_{2}}{\beta_1\kappa}\right\rfloor-1.$$

We proceed inductively. Given $a_1\leq b_1\leq \ldots\leq a_{j-1}\leq b_{j-1}$, and
$\beta_1,\ldots,\beta_{j-1}  $, we let $$I_j=I_{j,n}(\kappa)=[a_j,b_j]\cap \Z$$ be given by
\begin{align*}
a_j=a_{j,n}(\kappa) &=\left\lfloor\frac{z_j}{\beta_{j-1}\kappa}\right\rfloor+1,\\
\beta_j=\beta_{j,n}(\kappa) &=\frac{z_j}{\kappa (a_j-1)},\\  
 b_j=b_{j,n}(\kappa) &=\left\lfloor\frac{z_{j+1}}{\beta_j\kappa}\right\rfloor-1,
\end{align*} 
where we set $b_{k}=\infty$ and, thus, $$I_k=I_{k,n}(\kappa)=[a_k,\infty)\cap \Z.$$
If $n$ is even, we further define $b_0=b_{0,n}(\kappa)=a_1-2$
\begin{align*}
    I_0= I_{n,0}(\kappa)=[-b_0,b_0]\cap \Z=
    \left[{-\left\lfloor{\frac{z_1}{\kappa}}\right\rfloor+1 },\left\lfloor{\frac{z_{1}}{\kappa}}\right\rfloor-1  \right] \cap \Z, 
\end{align*}
and $I_0=\emptyset$ if $n$ is odd (i.e., $z_1=0$). Finally, for $j=1,\ldots,k$, we define the negative intervals via
\begin{align*}
    I_{-j}= I_{n,-j}(\kappa)=[-b_j,-a_j]\cap \Z
\end{align*}
and set $\beta_{-j}=\beta_j$.

We summarize the properties of these intervals in the next lemma.
\begin{lemma}[Properties of the intervals]\label{lem:intervals}
    Let $n\in \N_0$ and $\kappa>0$. Then, the intervals $I_j=I_{j,n}(\kappa)$, $j=-k,\ldots,k$ defined above are disjoint and satisfy
   \begin{align*}
    \Z = \bigcup_{j=-k}^{k} I_j \,\dot\cup   \, \left\{\pm\left\lfloor {\frac{z_1}{\beta_1\kappa}  }\right\rfloor , \ldots,\pm\left\lfloor  {\frac{z_k}{\beta_k\kappa}  }\right\rfloor \right\}.
\end{align*} 
Furthermore, for all $j=1,\ldots,k$, it holds
\begin{align*}
    \psi_n(\beta_j (a_j-1))=0=\psi_n(-\beta_{-j} (a_j-1))
\end{align*}
and
$$1\le \beta_j=1+O(\kappa).$$ 
\end{lemma}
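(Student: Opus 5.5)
The plan is to reduce every assertion to two elementary facts about the integer parts in the recursive definition: $y-1<\lfloor y\rfloor\le y$, and the identity
\[
a_j-1=\Big\lfloor \frac{z_j}{\beta_{j-1}\kappa}\Big\rfloor \quad\text{together with}\quad \beta_j\kappa(a_j-1)=z_j .
\]
Throughout I take $\kappa$ small enough, depending on $n$, that $a_j-1\ge 1$ whenever $z_j>0$. Otherwise $\beta_j$ is not even defined, so the statement should be read for $\kappa$ small. The zeros $z_1<\dots<z_k$ are fixed and finite in number, so every constant below depends only on $n$.

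\emph{Step 1: the bounds on $\beta_j$.} I argue by induction on $j$. For $j=1$, either $\beta_1=1$ (the case $z_1=0$), or $\beta_1=z_1/(\kappa\lfloor z_1/\kappa\rfloor)$. For the latter, $\lfloor y\rfloor\le y$ gives $\beta_1\ge1$, and $\lfloor y\rfloor> y-1$ gives $\beta_1\le z_1/(z_1-\kappa)=1+O(\kappa)$. In the inductive step (with $z_j>0$), the definitions give $\beta_j=z_j/\big(\kappa\lfloor z_j/(\beta_{j-1}\kappa)\rfloor\big)$. The same two inequalities yield
\[
\beta_{j-1}\le \beta_j\le \frac{\beta_{j-1}z_j}{z_j-\beta_{j-1}\kappa}=\beta_{j-1}(1+O(\kappa)).
\]
Since there are only $k$ steps, this gives $1\le\beta_j=1+O(\kappa)$ for all $j$.

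\emph{Step 2: the zeros.} By construction $\kappa\beta_j(a_j-1)=z_j$. Reading $\psi_n(\beta_j(a_j-1))$ as $h_n(\kappa\beta_j(a_j-1))\varphi(\kappa\beta_j(a_j-1))$, we get $h_n(z_j)\varphi(z_j)=0$. The statement at $-\beta_{-j}(a_j-1)$ then follows because $h_n$ is even or odd and $\beta_{-j}=\beta_j$.

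\emph{Step 3: the partition of $\Z$.} From $\beta_j\kappa(a_j-1)=z_j$ we read off that $z_j/(\beta_j\kappa)=a_j-1$ is an integer. Hence the excluded points are exactly $\pm(a_j-1)$, $j=1,\dots,k$. Likewise
\[
b_j=\Big\lfloor\frac{z_{j+1}}{\beta_j\kappa}\Big\rfloor-1=a_{j+1}-2 \quad (j<k),
\]
so $I_j=[a_j,a_{j+1}-2]\cap\Z$ is immediately followed by the single excluded point $a_{j+1}-1$ and then by $I_{j+1}$. The last interval $I_k=[a_k,\infty)$ covers the tail.

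It remains to check that each $I_j$ is nonempty and that the $a_j$ increase. Step 1 gives
\[
b_j-a_j\ge \frac{z_{j+1}-z_j}{\beta_j\kappa}-3\ge \frac{\min_i(z_{i+1}-z_i)}{2\kappa}-3>0
\]
for small $\kappa$, since the zeros are simple and distinct. Near the origin there are two cases. If $n$ is even, $I_0=[-(a_1-2),a_1-2]$ sits between the excluded points $\pm(a_1-1)$. If $n$ is odd, $a_1=1$ and the excluded point $\pm(a_1-1)=0$ separates $I_{-1}$ from $I_1$. The negative half is the mirror image under $x\mapsto -x$. Together these give the disjoint decomposition.

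I expect the only delicate point to be Step 1. One must check that the recursive rescalings do not drift, i.e. that the multiplicative errors $1+O(\kappa)$ compound only finitely many times. One must also handle the degenerate case $z_1=0$ consistently with $\beta_1=1$. Everything else is bookkeeping with the identity $\beta_j\kappa(a_j-1)=z_j$.
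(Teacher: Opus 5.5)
Your proposal is correct and follows essentially the paper's own proof. It uses the same induction based on $y-1<\lfloor y\rfloor\le y$, which gives $\beta_{j-1}\le\beta_j\le\beta_{j-1}z_j/(z_j-\beta_{j-1}\kappa)$. It uses the same identity $\kappa\beta_j(a_j-1)=z_j$ for the zeros. It uses the same observation $a_{j+1}=b_j+2$, with the single excluded point $a_{j+1}-1=\lfloor z_{j+1}/(\beta_{j+1}\kappa)\rfloor$ in between. You also make explicit several details the paper glosses over: $\kappa$ must be small for the $\beta_j$ to be defined, and $b_j\ge a_j$, so the intervals are nonempty and correctly ordered.
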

\begin{proof}
The disjointness of the intervals follows directly from their construction. Furthermore, by construction we have for all $j=1,\ldots,k$ that $b_j$ and $a_{j+1}$ have distance exactly $2$ with the point $\left\lfloor{z_{j+1}}/{\beta_j\kappa}\right\rfloor$ in between. Thus, the intervals $I_j$, $j=-k,\ldots,k$ cover all of $\Z$ except for the points $\pm\left\lfloor {{z_j}/{\beta_j\kappa}  }\right\rfloor$ for $j=1,\ldots,k$. This gives the desired decomposition of $\Z$. 
The fact that $ \psi_n$ vanishes at the points $\pm\beta_j (a_j-1)$ follows
since
$$\psi_n(\beta_j (a_j-1))=h_n(z_j)\varphi(z_j)=0.$$
As for the final statement   $\beta_j\ge 1$ is clear by definition and the upper bound  follows by induction. {Let us introduce the fractional part of a real number by $x = \floor{x} + \{x\}$, where $0\leq \{x\}\leq 1$. For $j=1$, we have by construction since $a_1=\left\lfloor{z_1}/{\kappa}\right\rfloor+1$ that
$$1\leq \beta_1=\frac{z_1}{\kappa (a_1-1)}  = \frac{z_1}{z_1 - \kappa\left\{ z_1 \kappa^{-1} \right\}}\le  {\frac{z_1}{z_1 - \kappa}} =1+O(\kappa).$$
If the statement holds for $j-1$, then we have since $a_j=\left\lfloor{z_j}/{\beta_{j-1}\kappa}\right\rfloor+1$  
$$\beta_j=\frac{z_j}{\kappa (a_j-1)}=\beta_{j-1}\frac{z_j}{z_j - \kappa \beta_{j-1}\{z_j (\kappa\beta_j)^{-1}\}}=1+O(\kappa).$$} 
This concludes the proof.
\end{proof}

We define the restricted operators
\begin{align*}
    H_{j,\kappa}  = \pi_{I_j} (\Delta + v_\kappa)\iota_{I_j}
\end{align*}
on $\ell^2(I_j)$  which extend to operators on $C(I_j)$ for $j=-k,\ldots,k$. We will show below  that the ground state energies of these operators satisfy the desired lower bound. To do so, the lemma above suggests defining the test functions $\psi_{n,j}$ on $I_j$ via
    $$\psi_{n,j}(x)= |\psi_n(\beta_j x)|= |\Psi(\beta_j\kappa x)|= |h_n(\beta_j\kappa x)| \varphi(\beta_j\kappa x).$$

We first prove the bounds for  $j=-(k-1),\ldots,k-1$. As described above, we first we need a lower bound on  $\psi_n$ away from the zeros of the Hermite polynomial. This is proven in the next lemma.

%We explain the strategy for $j>0$, the case $j<0$ follows then by summetry and the case $j=0$ for odd $n$ is similar. Note that on the interval $I_j$, the Hermite polynomial $H_n$ does not change sign and thus $\psi_n$ is either positive or negative there. We will construct an $\alpha$-superharmonic function for $\alpha$ slightly below $\kappa^2(2n+1)$ by modifying the restriction of $\psi_n$ to $I_j$ so that it becomes zero at the left boundary of the interval. On the right boundary, there is no issue as $\psi_n$ has the same sign as it is on the next point outside the intervall (which is still less than the zero of the Hermite polynomial). We then want to employ Lemma~\ref{lem:approx_eigenfunction},to that this modified function is indeed $\alpha$-superharmonic for $\alpha$ slightly below $\kappa^2(2n+1)$ for $\kappa$ small enough. However, to this end we need a lower bound on  $\psi_n$ away from the zeros of the Hermite polynomial. This is proven in the next lemma.

\begin{lemma}[Behavior near zeros]\label{lem:lower_bound_hermite}
    For all $n\in \N$, there is $c_n>0$ such that for all $\kappa>0$ small enough and all $j=-(k-1),\ldots,k-1$, it holds
    $$\min_{x\in I_j} |\psi_{n,j}(x)| \geq c_n \kappa\quad\mbox{and}\quad |\psi_{n,\pm k}(a_{\pm k})| \geq c_n\kappa. $$
\end{lemma}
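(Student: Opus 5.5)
The plan is to reduce the bound to the elementary fact that a polynomial with simple zeros grows linearly away from each zero. By Lemma~\ref{lem:intervals}, for $x\in I_j$ with $j=-(k-1),\ldots,k-1$ the point $y=\beta_j\kappa x$ lies in a compact interval $[-z_k-1,z_k+1]$, uniformly for small $\kappa$. On this compact set the Gaussian factor satisfies $\varphi(y)\ge e^{-(z_k+1)^2/2}>0$. It therefore suffices to prove $|h_n(y)|\ge c\kappa$ for all such $y$.

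Since all zeros of $h_n$ are simple, we have $h_n'(\pm z_i)\neq 0$. Hence there are $\delta>0$ and $c'>0$ with $|h_n(y)|\ge c'\operatorname{dist}(y,Z)$ for all $y$ in the compact set, where $Z$ denotes the zero set of $h_n$. Near each zero this follows from the mean value theorem. Away from the zeros, $|h_n|$ has a positive minimum on the compact remainder, and $\operatorname{dist}(y,Z)$ is bounded there, so the same inequality holds after shrinking $c'$. The key step is thus to show $\operatorname{dist}(\beta_j\kappa x,Z)\ge c''\kappa$ for all $x\in I_j$.

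\textbf{Left endpoint.} For $x\ge a_j$ we have $\beta_j\kappa x\ge \beta_j\kappa(a_j-1)+\beta_j\kappa = z_j+\beta_j\kappa\ge z_j+\kappa$, using $\beta_j(a_j-1)\kappa=z_j$ and $\beta_j\ge1$.

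\textbf{Right endpoint.} For $x\le b_j=\lfloor z_{j+1}/(\beta_j\kappa)\rfloor-1$ we get $\beta_j\kappa x\le z_{j+1}-\beta_j\kappa\le z_{j+1}-\kappa$.

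So on $I_j$ with $j\ge1$ the point $\beta_j\kappa x$ stays at distance at least $\kappa$ from both neighboring zeros $z_j,z_{j+1}$, and hence from all of $Z$, since $h_n$ has no zeros strictly between them. The case $j=0$ (for $n$ even) is analogous: $\beta_0=1$ and $|x|\le\lfloor z_1/\kappa\rfloor-1$ give $|\kappa x|\le z_1-\kappa$. Negative $j$ follow by the symmetry $\psi_{n,-j}(x)=\psi_{n,j}(-x)$, since $h_n$ is even or odd.

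For the second claim, $\beta_k\kappa a_k=z_k+\beta_k\kappa\in[z_k+\kappa,\,z_k+2\kappa]$, using $\beta_k=1+O(\kappa)$. The same linear-growth estimate near the simple zero $z_k$ then gives $|\psi_{n,k}(a_k)|\ge c\kappa$, and symmetrically for $-k$.

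The only delicate point is that the endpoint estimates rely on the exact rescaling identity $\beta_j\kappa(a_j-1)=z_j$ from Lemma~\ref{lem:intervals} together with $\beta_j=1+O(\kappa)$, which keeps all points in a fixed compact set. Once that bookkeeping with the floor functions is checked, the rest is routine.
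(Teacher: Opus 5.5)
Your proposal is correct and follows essentially the paper's argument. Both show, via $\beta_j\kappa(a_j-1)=z_j$ and the floor bound on $b_j$, that $\beta_j\kappa x$ stays at distance at least $\kappa$ from the zeros of $h_n$; both then combine the simplicity of those zeros (the paper via $h_n'=2nh_{n-1}\neq 0$) with a uniform Gaussian lower bound on a bounded set. Your estimate $|h_n(y)|\ge c'\,\mathrm{dist}(y,Z)$ on a compact set is a cleaner substitute for the paper's ``by continuity it suffices to check $x=a_j,b_j$'' step, and you also treat $j=0$ explicitly, which the paper passes over.
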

\begin{proof}
By symmetry we only have to check the statement for positive $j=1,\ldots,k$.
One directly checks that for $j=1,\ldots,k$, $i=0,\ldots,k-1$
\begin{align*}
    | \beta_j\kappa a_j- z_j  |= z_j\frac{1}{a_j-1}\ge \beta_{j-1}\kappa \ge\kappa,
\end{align*}\begin{align*}
     | z_{i+1}-\beta_i\kappa b_i   |%&=   z_{i+1}-\beta_i\kappa (a_{i+1}-2)  \ge   
     =\beta_i\kappa+z_{i+1} -\beta_{i+1}\kappa(a_{i+1}-1)   = \beta_{i}\kappa \ge\kappa.
\end{align*}
So, in particular, for $x\in I_j$, $j=-(k-1),\ldots,k-1$, we have that $| \beta_j\kappa x- z  | \geq {\kappa}$ for $z=z_j$ or $z=z_{j+1}  $.

To estimate $\psi_{n,j}$, we recall that $\psi_{n,j}(x)=h_n(\beta_j\kappa x)\varphi(\beta_j\kappa x)$ and that the Gaußian $\varphi$ has a uniform lower bound on the bounded intervals $I_j$, $j=1,\ldots,k-1$, say $c>0$. 
Moreover, by continuity of $h_n$, it suffices to check the lower bound only at the vertices near the  zeros of $h_n$ which are $x=a_j$ and $x=b_j$.
We expand $h_n$ about $z=z_j$ or $z=z_{j+1}$ to obtain for $x=a_j$ and $x=b_j$ respectively, using $h_n(z)=0$,
\begin{align*}
    h_n(\beta_j\kappa x)%&= h_n(z) + h_n'(z)(\beta_j\kappa x - z) + O((\beta_j\kappa x - z)^2) \\
    &= h_n'(z)(\beta_j\kappa x - z) + O((\beta_j\kappa)^2).
\end{align*}
It is well-known that the zeros of $h_{n-1}$ are distinct from the ones of $h_n$ \cite[Section 22.16(iii)]{Abramowitz_Stegun_72}, \cite[Theorem 3.3.1]{Szego}.  Thus, by the recursion formula for Hermite polynomials, we have
$$h_n'(z) = 2n h_{n-1}(z) \neq 0.$$
Above we have checked that $|\beta_j\kappa x - z|\ge \kappa$. So,  putting $c_n= |h_{n-1}(z)| >0$, we obtain the result for $\kappa$ small enough.
\end{proof}

\begin{lemma}[Bounded intervals]\label{lem:lower_bound_interval}
    For   $n\in \N$ and $j=-(k-1),\ldots,k-1$,  the eigenvalues $E_{n,j}(\kappa)$ of the operator $H_{j,\kappa}$ satisfies for  $\kappa$ small 
    $$\liminf_{\kappa\to 0}\frac{E_{n,j}(\kappa)}{\kappa^2} \geq   (2n+1).$$
\end{lemma}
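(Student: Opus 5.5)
We will use the Agmon--Allegretto--Piepenbrink criterion for the restricted operator $H_{j,\kappa}$ on the finite interval $I_j$, $j=-(k-1),\ldots,k-1$. The test function is $u=\psi_{n,j}$ restricted to $I_j$. Fix $\varepsilon>0$ and set $\alpha=-(1-\varepsilon)\kappa^2(2n+1)$. It suffices to show that $u$ is strictly positive on $I_j$ and that $(H_{j,\kappa}+\alpha)u\geq 0$ on $I_j$ for $\kappa$ small. The criterion then gives $E_{n,j}(\kappa)\geq (1-\varepsilon)\kappa^2(2n+1)$, and letting $\varepsilon\to0$ yields the claim. By symmetry ($h_n$ even or odd, $\beta_{-j}=\beta_j$) we may take $j\geq 0$. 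Strict positivity of $u$ on $I_j$ follows from Lemma~\ref{lem:lower_bound_hermite}, since $u\geq c_n\kappa$ there.

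\textbf{Interior points.} Since $h_n$ has no zero strictly between $z_j$ and $z_{j+1}$, the function $\psi_n(\beta_j\,\cdot)$ has a fixed sign $s\in\{\pm1\}$ on $I_j$ and also at the neighbouring vertices $a_j-1$ and $b_j+1$; at $a_j-1$ it even vanishes exactly, by Lemma~\ref{lem:intervals}. We apply Lemma~\ref{lem:approx_eigenfunction} with $\kappa$ replaced by $\beta_j\kappa$ to $s\psi_n(\beta_j\,\cdot)$. Note that $\psi_n(\beta_j x)=\Psi_n(\beta_j\kappa x)$, and $\Delta$ acts on the lattice variable. This gives
\[
\Delta u+\beta_j^4\kappa^4x^2u=\beta_j^2\kappa^2(2n+1)u+O(\kappa^4).
\]
Since $\beta_j=1+O(\kappa)$ and $x=O(1/\kappa)$ on the bounded interval $I_j$, we have $v_\kappa(x)-\beta_j^4\kappa^4x^2=\kappa^4x^2(1-\beta_j^4)=O(\kappa^3)$. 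Therefore
\[
(\Delta+v_\kappa)u=\kappa^2(2n+1)u+O(\kappa^3)u+O(\kappa^4),
\]
with error terms uniform in $x\in I_j$.

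\textbf{Boundary points.} At $x=a_j$ the restricted operator drops the neighbour $a_j-1$, where $u$ vanishes anyway, so the identity is unchanged (for $j=0$ there is no such endpoint issue beyond $\pm b_0$). At $x=b_j$ the restricted operator drops the term $-u(b_j+1)$. Since $s\psi_n(\beta_j(b_j+1))\geq 0$, dropping it only increases $H_{j,\kappa}u$, and the inequality we want is preserved.

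\textbf{Main obstacle: the additive error.} It remains to absorb the additive $O(\kappa^4)$ error into $\varepsilon\kappa^2(2n+1)u$. Here Lemma~\ref{lem:lower_bound_hermite} is essential: $u\geq c_n\kappa$ on $I_j$, so $\varepsilon\kappa^2(2n+1)u\geq \varepsilon(2n+1)c_n\kappa^3$, which dominates $O(\kappa^4)+O(\kappa^3)u$ for $\kappa$ small. This yields $(H_{j,\kappa}+\alpha)u\geq0$.

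The one subtle point is that Lemma~\ref{lem:approx_eigenfunction} is stated for $\psi_n$ at scale $\kappa$. We must make sure its remainder, $\int_0^{\beta_j\kappa}\frac{t^3}{6}\Psi_n^{(4)}(\cdot)\,dt$, stays $O(\kappa^4)$ uniformly after rescaling by $\beta_j$. This holds because $\Psi_n^{(4)}$ is bounded and $\beta_j\leq 2$ for $\kappa$ small.
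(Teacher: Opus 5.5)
Your proposal is correct and follows the paper's proof essentially step for step. Like the paper, you apply Agmon--Allegretto--Piepenbrink to the test function $\psi_{n,j}$, invoke Lemma~\ref{lem:approx_eigenfunction} at scale $\beta_j\kappa$, use the exact zero at $a_j-1$ and the sign argument at $b_j+1$, and absorb the additive $O(\kappa^4)$ error using $\psi_{n,j}\ge c_n\kappa$ from Lemma~\ref{lem:lower_bound_hermite}. Two details are stated more precisely in your version than in the paper's write-up: the sign $\alpha=-(1-\varepsilon)\kappa^2(2n+1)$, and the lower bound $c_n\kappa$ rather than $c_n$.
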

\begin{proof}
    The strategy is to show that $\psi_{n,j}$ is an $\alpha$-superharmonic function for $\alpha = (1-\varepsilon)\kappa^2(2n+1) $ for given $\varepsilon>0$ and $\kappa$ chosen small enough accordingly.
    
   By Lemma~\ref{lem:approx_eigenfunction}, $\psi_{n,j}$ satisfies for $x\in I_j\setminus \{b_j\}  $,
   \begin{align*}
      H_{j,\kappa}\psi_{n,j}(x)&=    (\Delta + v_{\beta_j\kappa})\psi_{n,j}(x) + (v_{\kappa}- v_{\beta_j\kappa})\psi_{n,j}(x) \\
         &
         = \beta_j^2 \kappa^2 (2n+1)  \psi_{n,j}(x)+ {\kappa}^4(1-\beta_j^4)x^2\psi_{n,j}(x)+ O(\kappa^4).
   \end{align*} 
   In particular,   the equation is satisfied at $x=a_j$ as $\psi_{n,j}(a_j-1)=0$ because of the choice of $\beta_j$. Now,  $\psi_{n,j}(x)>c_n$ by Lemma~\ref{lem:lower_bound_hermite} above, as well as $\beta_j=1+O(\kappa)$ by Lemma~\ref{lem:intervals}, and $x^2\le b_j^2$. Thus, for given $\varepsilon>0$, we can choose $ \kappa>0$ small enough such that
   $$ ( H_{j,\kappa} -(1-\varepsilon)\kappa^2(2n+1) )\psi_{n,j}(x) \ge 0.$$
   We are left to check the case of $x=b_j$. We observe that $\beta_j\kappa(b_j+1) \le z_{j+1}$. Thus,  $h_n(\beta_j\kappa b_j)$ and  $h_n(\beta_j\kappa(b_j+1))$ have the same sign and so do $\psi_n(\beta_jb_j)$ and  $\psi_n(\beta_j(b_j+1))$ which we assume without loss of generality to be positive.
   Hence, as $\psi_{n,j}$ vanishes outside of $I_j$
   $$  \psi_{n,j}( b_j)-\psi_{n,j}( b_j+1)=\psi_{n,j}(b_j)=\psi_n(\beta_jb_j) \ge 
   \psi_n(\beta_jb_j)-\psi_n(\beta_j(b_j+1)). $$
   We obtain from Lemma~\ref{lem:approx_eigenfunction}
\begin{align*}
      H_{j,\kappa}\psi_{n,j}(b_j)&\ge    (\Delta + v_{\beta_j\kappa})\psi_{n,j}(b_j) + (v_{\kappa}- v_{\beta_j\kappa})\psi_{n,j}(b_j) \\
         &
         = \beta_j^2 \kappa^2 (2n+1)  \psi_{n,j}(b_j)+ {\kappa}^4(1-\beta_j^4)b_j^2\psi_{n,j}(b_j)+ O(\kappa^4)
   \end{align*} 
   from which we conclude $ ( H_{j,\kappa} -(1-\varepsilon)\kappa^2(2n+1) )\psi_{n,j}(x) \ge 0$ for small $\kappa$ as above. Since $\varepsilon >0$ can be chosen arbitrarily small, we conclude the result.
\end{proof}

%\begin{remark}    \label{r:unboundedinterval}
    %The above strategy does not work for the unbounded interval $I_k$ directly. The reason is that the error term $O(\kappa^4)$ in Lemma~\ref{lem:approx_eigenfunction} is independent of $\psi_n(x)$ and, thus, for $x$ large, the term ${\kappa}^4(1-\beta_j^4)x^2\psi_{n,j}(x)+ O(\kappa^4)$ is negative and larger in absolute value than $\beta_j^2 \kappa^2 (2n+1)  \psi_{n,j}(x)$ which prevents us from concluding that $\psi_{n,j}$ is an $\alpha$-superharmonic function for some $\alpha$. More precisely, inspecting the proof of Lemma~\ref{lem:approx_eigenfunction} and  we see that the error term is given by
%\begin{multline*}    -\int_0^\kappa \frac{t^4}{4!}\Psi_{n}^{(4)}(y-t) dt =-\int_0^\kappa \frac{t^4}{4!}\sum_{j=1}^4\binom{4}{j}(h_n^{(4-j)}\varphi^{(j)})  (y-t) dt \\    =-\int_0^\kappa {t^4}\varphi   (y-t)\sum_{j=0}^4c_{n,j} (h_{n-4+j}h_j)  (y-t) dt \sim \, 
%&= \frac{\kappa^4}{4!} \Psi_{n}^{(4)}(y) + O(\kappa^5) \\\end{multline*}
%where we used the recursion formul $h_k'=2kh_{k-1}$ for Hermite polynomials in the last step. The dominant term in the sum is for $j=4$ which gives rise to the term $$\kappa ^4(\varphi h_n h_4) (x)=\kappa ^4(h_4\psi_n)(x)\sim \kappa^8 x^4 \psi_n (x)$$ in the error above which is also negative for $x$ large enough. This shows that the error term grows faster in $x$ than the leading term $\kappa^2(2n+1)\psi_n(x)$ which prevents us from concluding the desired superharmonicity.\end{remark}

For the unbounded interval $I_k$, we have an additional issue because $\psi_{n,k}$ becomes superexponentially small and, indeed, as will be clear from the proof below small than the error $O(\kappa^4)$. So, we need to modify the potential and later also the test function. We will later show that this modification does not change the asymptotic behavior of the eigenvalues for fixed $n$ as $\kappa\to 0$.
To this end, we define for $\delta>1$ the modified potential
$$
\tilde v_{\kappa}(x)=\begin{cases} v_{\kappa}(x)=\kappa^4 x^2, & |x|\neq \floor{\kappa^{-(1+\delta)}},\\
 \kappa^{-\delta}, & |x|=\floor{\kappa^{-(1+\delta)}} .
\end{cases}
 $$
We then define the modified restricted operators
	\begin{align*}
		\tilde H_{  \pm j,\kappa}   = \pi_{I_{ \pm j}} (\Delta + \tilde v_{\kappa})\iota_{I_{ \pm j}}
	\end{align*}
	and their corresponding eigenvalues by $\tilde E_{0,j}(\kappa)$, for $j = -k,\dots , k$. We observe that $ H_{\pm j,\kappa}=\tilde H_{\pm k,\kappa}$ for $|j|< k$ and, thus, $ E_{n,\pm j}(\kappa)=\tilde E_{n,\pm j}(\kappa)$. 

\begin{lemma}[Unbounded intervals]\label{lem:lower_bound_modified} Let $\delta\in(0,1/2)$.
    For all  $n\in \N$,  the eigenvalues $E_{n,  k}(\kappa)$ of the operator $\tilde H_{k,\kappa}$ satisfies for  $\kappa$ small 
    $$\lim_{\kappa\to 0}\frac{\tilde E_{n,k}(\kappa)}{\kappa^2} \geq   2n+1.$$
\end{lemma}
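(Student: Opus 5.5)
Writing $L=\floor{\kappa^{-(1+\delta)}}$ for the point where $\tilde v_\kappa$ is raised to $\kappa^{-\delta}$, I will show that for each $\varepsilon>0$ and small $\kappa$ there is a strictly positive $\alpha$-superharmonic function for $\tilde H_{k,\kappa}$ with $\alpha=-(1-\varepsilon)\kappa^2(2n+1)$. The Agmon--Allegretto--Piepenbrink criterion then gives the bound for the ground state energy $\tilde E_{0,k}(\kappa)$ of $\tilde H_{k,\kappa}$, which is the quantity entering the lower bound. The raised potential at $L$ effectively decouples $I_k$ into the bounded piece $[a_k,L-1]$ and the tail $[L+1,\infty)$, so I treat these separately and glue them at $L$.

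On $[a_k,L-1]$ I take the test function $u=\psi_{n,k}$, as in Lemma~\ref{lem:lower_bound_interval}. At $x=a_k$ the left neighbour has $\psi_n(\beta_k(a_k-1))=0$ by Lemma~\ref{lem:intervals}, so the Dirichlet boundary costs nothing and the identity from Lemma~\ref{lem:approx_eigenfunction} holds exactly. For $a_k\le x<L$ the identity gives
\[
\tilde H_{k,\kappa}u(x)=\beta_k^2\kappa^2(2n+1)u(x)+\kappa^4(1-\beta_k^4)x^2u(x)-R(x,\beta_k\kappa).
\]
The first term dominates the second, since $|1-\beta_k^4|\,\kappa^4x^2=O(\kappa^5x^2)$ and this is $\ll\kappa^2$ only while $\kappa x\ll\kappa^{-1/2}$. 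The remainder must be controlled relative to $u$, not uniformly. To do this I will refine the bound on $R$. Because $\Psi_n^{(4)}=p_{n+4}\varphi$ with a polynomial $p_{n+4}$ of degree $n+4$, we get $|R(x,\kappa)|\le C\kappa^4(1+|\kappa x|)^{n+4}\sup_{|t|\le\kappa}\varphi(\kappa x-t)$. In the region $\kappa x\ge z_k+\kappa$, Lemma~\ref{lem:lower_bound_hermite} and the growth of $h_n$ give $u(x)\ge c\kappa\,(1+\kappa x)^{n}\varphi(\kappa x)$. The shift in the Gaussian contributes $\varphi(\kappa x-t)/\varphi(\kappa x)\le e^{\kappa^2x}$, which is bounded for $x\le\kappa^{-2}$. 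Hence $|R|/u\lesssim\kappa^3(1+\kappa x)^4$, and this is $o(\kappa^2)$ as long as $\kappa x\ll\kappa^{-1/4}$.

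The two size requirements above are exactly why the modification point is placed at $\kappa L\approx\kappa^{-\delta}$ with $\delta$ small; this is where I use the hypothesis on $\delta$, and I may need to shrink $\delta$ further for the estimates to close. Up to $L-1$ the superharmonicity inequality therefore holds. At the point $L$ itself I set $u(L)=u(L-1)$ and $u(x)=u(L)$ for all $x>L$. At $x=L-1$ this replacement only decreases $\Delta u$ by $u(L)-\psi_{n,k}(L)$, and since $\psi_n$ is decreasing there this error is of relative size $O(\kappa^2 L)\cdot$(polynomial factor). At $x=L$ the potential value $\kappa^{-\delta}$ dominates every other term. On the tail $x>L$ the Laplacian of the constant function vanishes, and $v_\kappa(x)=\kappa^4x^2\ge\kappa^{2-2\delta}\gg\kappa^2$, so the inequality holds trivially.

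The main obstacle is the error at $x=L-1$. The decrease $u(L)-\psi_{n,k}(L)$ is of order $\psi_n(L)\cdot\kappa\cdot\kappa L$ relative to $u$, namely about $\kappa^{1-\delta}u$, whereas the available margin is only $\varepsilon\kappa^2 u$. If this fails, my first fix is to make $u(L)$ smaller than $u(L-1)$, for instance $u(L)=\tfrac12 u(L-1)$. At $L-1$ this only helps, since a smaller neighbour increases $\Delta u$. At $L$ the large potential $\kappa^{-\delta}$ still absorbs the term $u(L-1)-u(L)$, and the tail stays constant at the value $u(L)$, where superharmonicity continues to hold trivially. With this choice positivity and $\alpha$-superharmonicity hold everywhere on $I_k$, and letting $\varepsilon\to0$ gives the claim.
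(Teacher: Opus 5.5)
Your construction is essentially the paper's. You run an Agmon--Allegretto--Piepenbrink argument with $u=\psi_{n,k}$ up to the barrier point $L=\lfloor\kappa^{-(1+\delta)}\rfloor$ and continue it as a constant beyond. The value $\kappa^{-\delta}$ at $L$ absorbs the jump, and $\kappa^4x^2\gtrsim\kappa^{2-2\delta}$ handles the tail. Two things need settling before this proves the lemma as stated.

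First, your initial choice $u(L)=u(L-1)$ does fail at $L-1$; your comparison $\kappa^{1-\delta}\gg\varepsilon\kappa^2$ is right. So commit to the modification, which works.
\begin{itemize}
\item Since $\psi_{n,k}(L)/\psi_{n,k}(L-1)\to1$, the choice $u(L)=\tfrac12u(L-1)$ satisfies $u(L)\le\psi_{n,k}(L)$, so the inequality at $L-1$ only improves.
\item At $L$ you get exactly $(\kappa^{-\delta}-1)u(L)$, which is more than enough.
\end{itemize}
The paper's choice is simpler. It keeps $u=\psi_{n,k}$ up to and including $L$ and sets $u\equiv\psi_{n,k}(L)$ afterwards. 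Then nothing changes at $L-1$, and at $L$ the loss $\psi_{n,k}(L)-\psi_{n,k}(L+1)\le\psi_{n,k}(L)$ is absorbed by $\kappa^{-\delta}\psi_{n,k}(L)$.

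Second, and this is the actual shortfall, your remainder bound $|R|/u\lesssim\kappa^3(1+\kappa x)^4$ closes only for $\delta<1/4$. The lemma claims every $\delta\in(0,1/2)$, so shrinking $\delta$ is not an option here. You lose a factor $\kappa$ by using the near-zero lower bound $u\ge c\kappa(1+\kappa x)^n\varphi$ from Lemma~\ref{lem:lower_bound_hermite} on all of $I_k$. Split the interval instead.
\begin{itemize}
\item On $z_k+\kappa\le\beta_k\kappa x\le z_k+1$ the polynomial factor is bounded, so $|R|/u=O(\kappa^3)$.
\item On $\beta_k\kappa x\ge z_k+1$ you have $h_n(\beta_k\kappa x)\ge c(1+\kappa x)^n$ with $c$ independent of $\kappa$, because $h_n$ has no zeros beyond $z_k$ and a positive leading coefficient. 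Hence $|R|/u\lesssim\kappa^4(1+\kappa x)^4\le C\kappa^{4-4\delta}$, which is $o(\kappa^2)$ precisely when $\delta<1/2$.
\end{itemize}
This is the paper's bound $|R|\lesssim\kappa^8x^4\psi_n$. Your tracking of the Gaussian shift $\varphi(\kappa x-t)/\varphi(\kappa x)\le e^{\kappa^2x}$ is more careful than the paper, which omits it. Combined with your treatment of $\kappa^4|1-\beta_k^4|x^2=O(\kappa^{3-2\delta})$, this gives the full range. A smaller $\delta$ would be harmless for the later IMS argument, which works for any $\delta>0$, but not for the lemma as stated.
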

\begin{proof}
We define the test functions %for $x_\delta=\floor{\kappa^{-(1+\delta)}}$,
$$\tilde \psi_{n,k}=
\begin{cases} \psi_{n,k}(x), & x\leq x_\delta=\floor{\kappa^{-(1+\delta)}},\\
\psi_{n,k}(x_\delta), & x> x_\delta=\floor{\kappa^{-(1+\delta)}} .\end{cases}$$
For $x\in I_k$ with $x<\kappa^{-(1+\delta)}$, we take a closer look at the error term $R(x,\kappa)$ in Lemma~\ref{lem:approx_eigenfunction} 
\begin{multline*}
  R(x,\kappa)=  -\int_0^\kappa \frac{t^3}{3!}\Psi_{n}^{(4)}(\kappa x-t) dt\\ =-\int_0^\kappa \frac{t^3}{3!}\sum_{j=1}^4\binom{4}{j}(h_n^{(4-j)}\varphi^{(j)})  (\kappa x-t) dt \\
    =-\sum_{j=0}^4c_{n,j}\int_0^\kappa t^3\varphi   (\kappa x-t) (h_{n-4+j}\tilde h_j)  (\kappa x-t) dt  \, 
%&= \frac{\kappa^4}{4!} \Psi_{n}^{(4)}(y) + O(\kappa^5) \\
\end{multline*}
where, in the last step, we used the recursion formula $h_k'=2kh_{k-1}$ for Hermite polynomials as well as $\varphi^{(j)}=(-1)^j \tilde h_j \varphi$ (where $\tilde h_j$ are the Hermite polynomials of degree $j$ in probabilistic normalization) 
and $c_{j,n}$ are constants. Since the Hermite polynomials $h_l,\tilde h_l$ are of degree $l$, the dominant term in the sum is the one with $j=4$ which gives rise to the estimate 
$$
 R(x,\kappa)\ge -C \kappa ^4(\varphi h_n \tilde h_4) (x)=-C\kappa ^4(\tilde h_4\psi_n)(x)\ge -C \kappa^8 x^4 \psi_n (x).$$   
We now recall that $\tilde\psi_{n,k}(x)=\psi_{n,k}(x)$ for $x\le x_\delta=\floor{\kappa^{-(1+\delta)}}$ and $\tilde v_\kappa(x)=v_\kappa(x)$ for $x< x_\delta$.  Thus,
we have for $x\in I_k$ with $x<x_\delta$
 as above  with the refined estimate on the error in Lemma~\ref{lem:approx_eigenfunction} and using $(v_\kappa-v_{\beta_k\kappa})(x)=\kappa^2(1-\beta_k^2)x^2$  and  
 $\beta_k=1+O(\kappa)$ from Lemma~\ref{lem:intervals} that
 \begin{multline*}
 ( \tilde H_{k,\kappa} -(1-\varepsilon)\kappa^2(2n+1) )\tilde\psi_{n,j}(x) =(  H_{k,\kappa} -(1-\varepsilon)\kappa^2(2n+1) ) \psi_{n,j}(x)\\\ge \varepsilon\beta_j^2 \kappa^2 (2n+1)  \psi_{n,k}(x)+ {\kappa}^4(1-\beta_k^4)x^2 \psi_{n,k}(x) - C \kappa^8 x^4 \psi_{n,j}(x)\\
 \ge  {\kappa}^2  \left(\varepsilon  (2n+1)  -C( {\kappa}^3x^2 + \kappa^6 x^4)\right)  \psi_{n,j}(x)\\
{\kappa}^2  \left(\varepsilon  (2n+1)  -C ({\kappa}^{1-2\delta} + C \kappa^{2-4\delta})\right)  \psi_{n,j}(x) \ge 0
 \end{multline*}  
 for $\kappa$ small enough since $\delta<1/2$. Observe that the  estimate, in particular, holds at $x=a_k$ since $\tilde\psi_{n,k}(a_k-1)=0$ because of the choice of $\beta_k$ and the arguments presented in Lemma~\ref{lem:lower_bound_interval} near the zeros of $h_n$. 
 For $x_\delta=\floor{\kappa^{-(1+\delta)}}$, we have by construction of $\tilde \psi_{n,k}$ that
 \begin{multline*}
    \tilde H_{k,\kappa} \tilde\psi_{n,j}(x_\delta) =  H_{k,\kappa} \psi_{n,k}(x_\delta)+(\tilde v_\kappa\tilde\psi_{n,k}-v_\kappa\psi_{n,k})(x_\delta)+ (\psi_{n,k}-\tilde\psi_{n,k})(x_\delta+1).
 \end{multline*}
 By the same argument as above, we have
 $$ (H_{k,\kappa} -(1-\varepsilon)\kappa^2(2n+1) ) \psi_{n,k}(x_\delta)\ge0$$
 for $\kappa$ small enough. So, it remains to estimate the last two terms.
 Since $\tilde\psi_{n,k}(x_\delta)=\psi_{n,k}(x_\delta)$, we have
 \begin{align*}
    (\tilde v_\kappa\tilde\psi_{n,k}-v_\kappa\psi_{n,k})(x_\delta)
    =(\kappa^{-\delta}-\kappa^{2(1-\delta)}) \psi_{n,k}(x_\delta)\ge \kappa^{-\delta/2} \psi_{n,k}(x_\delta)
 \end{align*}
 for $\kappa$ small enough. Moreover, since $h_n(\beta_k\kappa x)>0$ for $x\in I_k$, we have $\psi_{n,k}>0$ on $I_k$ and, thus,
\begin{align*}
   (\psi_{n,k}-\tilde\psi_{n,k})(x_\delta+1) =\psi_{n,k}(x_\delta+1)-\psi_{n,k}(x_\delta)\ge -\psi_{n,k}(x_\delta).
 \end{align*}
Combining these estimates with the refinements of the approximate eigenfunction property above, we obtain for $x_\delta=\floor{\kappa^{-(1+\delta)}}$
 \begin{align*}  
 ( \tilde H_{k,\kappa} -(1-\varepsilon)\kappa^2(2n+1) ) \tilde\psi_{n,j}(x_\delta)\ge \left(\kappa^{-\delta/2} -1\right)\psi_{n,k}(x_\delta) \ge 0
 \end{align*}  
 for $\kappa$ small enough. 
 
 Finally, for $x>x_\delta=\floor{\kappa^{-(1+\delta)}}$, we have $\tilde\psi_{n,k}(x)=\psi_{n,k}(x_\delta)$ and $\tilde v_\kappa(x)=v_\kappa(x)=\kappa^4 x^2$. Thus, 
 \begin{align*}  
    \tilde H_{k,\kappa} \tilde\psi_{n,j}(x) &= \tilde v_\kappa(x)\tilde\psi_{n,k}(x)
    \ge \kappa^2 \kappa^{-2-\delta} \tilde\psi_{n,k}(x_\delta)\ge \kappa^2(2n+1) \tilde\psi_{n,k}(x).
 \end{align*}    
 This concludes the proof as $\varepsilon>0$ can be chosen arbitrarily small.
\end{proof}

Combining the lemmas above,  we will now prove the eigenvalue lower bound for the modified oscillator. To do so, we first prove that an eigenfunction to the $n$-th eigenvalue must have  $n+1$ graph nodal domains, i.e., {maximal connected} regions where it does not change sign.

We start by showing that for symmetric potentials eigenvalues can have  at multiplicity two.

\begin{lemma}\label{lem:degenracy} Let $H=\Delta+v$ be a discrete Schr\"odinger operator on  $\ell^2( \Z)$ with purely discrete spectrum. Suppose that \begin{equation*}
  	v(x) = v(-x).
  \end{equation*}
 Then, all eigenfunctions are either symmetric or anti-symmetric and,  eigenvalues can have at most multiplicity two. Moreover, if an eigenvalue has multiplicity two, then one of the corresponding eigenfunctions is symmetric and the other one is anti-symmetric. 
\end{lemma}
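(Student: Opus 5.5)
The plan has three steps: use the reflection symmetry to split every eigenspace into a symmetric and an anti-symmetric part, and then use the fact that eigenfunctions on $\Z$ are determined by two consecutive values to bound the dimension of each part by one.

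\emph{Step 1: the reflection commutes with $H$.} Introduce the reflection operator $(Pf)(x)=f(-x)$ on $\ell^2(\Z)$. It is unitary, self-adjoint and satisfies $P^2=I$. Since $\Delta$ is invariant under $x\mapsto -x$ and $v(x)=v(-x)$, $P$ commutes with $H$. Hence each eigenspace $\ker(H-E)$ is invariant under $P$. It therefore splits orthogonally as
\[
\ker(H-E)=\ker(H-E)\cap\ker(P-I)\;\oplus\;\ker(H-E)\cap\ker(P+I),
\]
via the projections $\tfrac12(I\pm P)$. This gives a basis of every eigenspace consisting of symmetric and anti-symmetric eigenfunctions. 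When the eigenvalue is simple, this shows directly that the eigenfunction itself is symmetric or anti-symmetric. In the degenerate case the statement should be read as saying that the eigenspace is spanned by such functions.

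\emph{Step 2: each parity sector is at most one-dimensional.} The eigenvalue equation $Hf=Ef$ reads
\[
f(x+1)=(2+v(x)-E)f(x)-f(x-1),
\]
so $f$ is determined on all of $\Z$ by the pair $(f(0),f(1))$. Thus $\dim\ker(H-E)\le 2$ for any $E$, which already gives the bound on the multiplicity.
\begin{itemize}
\item If $f$ is anti-symmetric, then $f(0)=0$, so $f$ is determined by $f(1)$ alone and the anti-symmetric eigenfunctions span at most one dimension.
\item If $f$ is symmetric, then $f(-1)=f(1)$, and the equation at $x=0$ gives $f(1)=\tfrac12(2+v(0)-E)f(0)$. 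So $f$ is determined by $f(0)$ alone, and the symmetric eigenfunctions also span at most one dimension.
\end{itemize}

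\emph{Step 3: the multiplicity-two case.} If the multiplicity is two, Step 2 forces both parity sectors in the decomposition of Step 1 to be exactly one-dimensional. So the eigenspace is spanned by one symmetric and one anti-symmetric eigenfunction, as claimed.

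I do not expect a genuine obstacle; the key point is simply that the three-term recurrence together with parity reduces each sector to one free parameter.
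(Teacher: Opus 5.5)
Your proof is correct and follows essentially the same route as the paper: you split an eigenfunction into its symmetric and anti-symmetric parts (your projections $\tfrac12(I\pm P)$ are the paper's $f_s,f_a$), and you use the three-term recurrence to see that a symmetric eigenfunction is fixed by $f(0)$ and an anti-symmetric one by $f(1)$, so each parity sector of an eigenspace is at most one-dimensional. Your caveat, that for a degenerate eigenvalue the first claim can only mean the eigenspace has a basis of symmetric and anti-symmetric functions, is also all the paper's own proof establishes; the point is in fact moot, since the Wronskian $f(x)g(x+1)-f(x+1)g(x)$ of two solutions is constant in $x$ and tends to $0$ for $\ell^2(\Z)$ solutions, so eigenvalues of $\Delta+v$ are always simple.
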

\begin{proof}
    Let $ f $ be an eigenfunction of $H$. 
    We first observe that if $f$ vanishes at zero, then $f$ is anti-symmetric. Indeed, by the eigenvalue equation at zero, we have $f(1)=f(-1)$ and, thus, by induction $f(x)=f(-x)$ for all $x\in \Z$. 
    
    Otherwise, we can define the symmetric and anti-symmetric parts of $f$ by
$$f_s(x)=\frac{f(x)+f(-x)}{2},\qquad f_a(x)=\frac{f(x)-f(-x)}{2}$$
which are clearly eigenfunctions of $H$ with the same eigenvalue as $f$. If $f$ is not symmetric or anti-symmetric, then $f_s$ and $f_a$ are linearly independent and, thus, the eigenvalue corresponding to $f$ has multiplicity at least two.

Assume now that there is are two symmetric eigenfunctions $f$ and $g$ corresponding to the same eigenvalue. As observed above, $f(0)\neq 0$ and $g(0)\neq 0$ since they are symmetric. Hence, we can assume without loss of generality that $f(0)=g(0)$ by rescaling one of the two eigenfunctions. Now, the eigenvalue equation at zero implies that $f(-1)=f(1)=g(1)=g(-1)$. By induction, we obtain that $f(x)=g(x)$ for all $x\in \Z$.

The same argument applies if there are two anti-symmetric eigenfunctions corresponding to the same eigenvalue since the eigenfunction is inductively determined by its values $f(0)=0$ at $f(1)=-f(-1)$. This concludes the proof.
\end{proof}

We  make the following convention for the ordering of the eigenvalues and eigenfunctions of a symmetric discrete Schr\"odinger operator.

\begin{convention}
For a Schrödinger operator with discrete spectrum and symmetric potential we arrange 
 the eigenvalues $\lambda_n$ counted with multiplicity  in increasing order and choose the corresponding eigenfunctions $f_n$ to be either symmetric or anti-symmetric.  
If an eigenvalue has multiplicity two, i.e., $\lambda_n = \lambda_{n+1}$, then we choose the symmetric eigenfunction to be $f_n$ and the anti.symmetric to be  $f_{n+1}$.
\end{convention}

We want to apply a result for half-line operators  \cite[Corollary 4.9]{Teschl2000} to the symmetric operator on the full line. This result states that if  a generic  Jacobi operator \begin{equation*}
	H f(x) = a(x)f(x)-b(x-1)f(x-1) -b(x+1) f(x+1)
\end{equation*}
defined on  $\ell^2(\mathbb{N}_0)$ has discrete spectrum with each eigenvalue having multiplicity at most one, then the $n$-th eigenfunction 
 has exaclty $n+1$ graph nodal domains.

\begin{lemma}\label{lem: zeros of excited states}
  Let $H=\Delta+v$ be a discrete Schr\"odinger operator on  $\ell^2( \Z)$ with purely discrete spectrum and symmetric potential 
  \begin{equation*}
  	v(x) = v(-x).
  \end{equation*} If $f_n$ is the $(n+1)$-th eigenfunction of $H$, the $f_n$ has exactly $n+1$ graph nodal domains. In particular, if $f$ is an eigenfunction with $n+1$ zeros, then it is the $(n+1)$-th eigenfunction of $H$.
  \end{lemma}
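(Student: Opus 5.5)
The plan is to reduce the full-line problem to two half-line Jacobi operators and then apply \cite[Corollary 4.9]{Teschl2000} to each. By Lemma~\ref{lem:degenracy}, every eigenfunction is symmetric or anti-symmetric, so the spectrum of $H$ is the union of the spectra of $H$ restricted to the invariant subspaces $\ell^2_s(\Z)$ and $\ell^2_a(\Z)$ of symmetric and anti-symmetric functions. Restricting to $\mathbb{N}_0$ identifies both with half-line Jacobi operators. On the symmetric part, a function is determined by $f|_{\N_0}$. The equation at $0$ reads $(2+v(0))f(0)-2f(1)$, which is a Jacobi operator after conjugating with the weight $w(0)=1$, $w(x)=2$ for $x\ge1$ (this symmetrizes the off-diagonal entry into $\sqrt2$). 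On the anti-symmetric part $f(0)=0$, and $f|_{\N}$ satisfies the Dirichlet half-line Jacobi equation with diagonal $2+v(x)$ and off-diagonal $1$. Both half-line operators have discrete spectrum, since $H$ does, and all their eigenvalues are simple by the uniqueness argument in Lemma~\ref{lem:degenracy} (the solution is recursively determined by its initial value). Corollary 4.9 of \cite{Teschl2000} then says that the $m$-th symmetric eigenfunction $g_m$ has $m+1$ nodal domains on $\N_0$, and likewise for the $m$-th anti-symmetric eigenfunction $u_m$ on $\N$.

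Next I count nodal domains on $\Z$. For a symmetric $g_m$ with $g_m(0)\ne0$, the nodal domain containing $0$ is shared by both sides. The full-line count is therefore $2(m+1)-1=2m+1$. For an anti-symmetric $u_m$ the two sides are mirrored with opposite sign across $0$. Zeros need care here: a zero at $x$ with $f(x\pm1)$ of opposite signs separates domains, and the equation forbids two consecutive zeros. The count is then $2(m+1)=2m+2$.

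The key step, which I expect to be the main obstacle, is the interlacing
\[
\lambda^s_0<\lambda^a_0<\lambda^s_1<\lambda^a_1<\cdots
\]
(with $\le$ allowed in the degenerate case, consistent with the Convention). Once this holds, the $(n+1)$-th eigenvalue of $H$ is $\lambda^s_{n/2}$ for $n$ even and $\lambda^a_{(n-1)/2}$ for $n$ odd. The counts above then give $2(n/2)+1=n+1$ and $2\cdot\frac{n-1}{2}+2=n+1$ nodal domains respectively. To prove the interlacing I would use a discrete Sturm oscillation and comparison argument. The nodal count of the full-line eigenfunction increases with the eigenvalue, which is again Teschl's oscillation theorem, this time applied via Prüfer-type sign-change counting of the solution $u(\lambda,\cdot)$ of $(H-\lambda)u=0$ with prescribed data at $0$. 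Alternatively, one can apply min-max to the two restrictions: the anti-symmetric restriction is the symmetric one with a Dirichlet condition at $0$, so rank-one perturbation interlacing gives $\lambda^s_m\le\lambda^a_m\le\lambda^s_{m+1}$.

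The final claim follows by distinctness. An eigenfunction with $n+1$ nodal domains (reading "zeros" as sign changes) cannot be the $(m+1)$-th eigenfunction for $m\ne n$, since those have $m+1$ domains.
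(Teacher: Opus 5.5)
Your argument is correct. It differs from the paper only in how you show that the $(n+1)$-th eigenfunction is symmetric for even $n$ and anti-symmetric for odd $n$.

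\textbf{The paper's route.} It argues by induction on $n$. If $f_{n+1}$ had the wrong parity, the half-line count would give it $n+3$ nodal domains. This contradicts a discrete Courant-type upper bound, with the multiplicity-two case handled through the Convention.

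\textbf{Your route.} You observe, correctly, that the anti-symmetric half-line operator $H_-$ on $\ell^2(\mathbb{N})$ is exactly the compression of the symmetrized operator $H_+$ to $\{g(0)=0\}$. The $\sqrt2$ entry sits only in the $(0,1)$ position, so deleting row and column $0$ leaves precisely $H_-$. Min-max then gives Cauchy interlacing $\lambda^s_m\le\lambda^a_m\le\lambda^s_{m+1}$. This is codimension-one compression interlacing rather than a rank-one perturbation, but the min-max argument you describe is the right one. The Sturm/Pr\"ufer alternative is not needed.

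\textbf{Comparison.} Your route is more elementary and self-contained. It uses no Courant theorem and no induction, and it produces the eigenvalue alternation directly rather than inferring it from nodal counts. The paper's route stays inside the nodal-domain framework it uses elsewhere.

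\textbf{One point to tighten.} Your parenthetical ``$\le$ allowed in the degenerate case, consistent with the Convention'' is right only for ties $\lambda^s_m=\lambda^a_m$. A tie $\lambda^a_m=\lambda^s_{m+1}$ would, under the Convention, make $f_{2m+1}$ the symmetric function. That function has $2m+3$ nodal domains, so the identification of $f_n$, and with it the count, breaks. Non-strict interlacing alone does not exclude such a tie.

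Close this by noting that every eigenvalue of $H$ on $\ell^2(\Z)$ is simple. Take two $\ell^2$ solutions $u_1,u_2$ of $(H-\lambda)u=0$. Their Wronskian $u_1(x)u_2(x+1)-u_1(x+1)u_2(x)$ is constant in $x$ and tends to $0$ as $x\to\infty$. Hence it vanishes, and $u_1,u_2$ are linearly dependent. Consequently all your interlacing inequalities are strict, and the Convention becomes vacuous. The paper's Courant argument needs the same fact in its ``analogous'' case: with $n$ even and $\lambda_{n+1}=\lambda_{n+2}$, the multiplicity-dependent Courant bound is $n+3$, so no contradiction arises without simplicity.
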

\begin{proof}
    We first assume that $f_s$ is an  symmetric eigenfunction to the eigenvalue $\lambda$. The eigenvalue equation for $f_s$ gives that the restriction of $f_s$ to $\mathbb{N}_0$ satisfies $\hat H_+ f_s = \lambda f_s$ where $\hat H_+$ is the non-symmetric Jacobi operator defined on $\ell^2(\mathbb{N}_0)$ by
\begin{align*} 
    & \hat H_+ f(0) = 2f(0)-2f(1) + v(0)f(0) \\
&\hat H_+ f(x) = 2f(x)-f(x-1)-f(x+1) + v(x)f(x),\qquad x\ge 1.
    \end{align*}
  However, to apply the result from \cite{Teschl2000}, we need a self-adjoint Jacobi operator. We can obtain such an operator by rescaling the above operator as follows. We define $ H_+$ on $\ell^2(\mathbb{N}_0)$ by  
  \begin{align*}
		&{H_+} f(0) = 2f(0)-\sqrt{2}f(1)+v(0)f(0),\\
		&{H_+} f(1) = 2f(1) -\sqrt{2}f(0)-f(2)+v(1)f(1),\\
		&{H_+} f(x) = 2f(x)- f(x-1)-f(x+1)+v(x)f(x),\qquad x\ge 2.
	\end{align*} 
    By direct computation,
    it can be seen that if $f$ on $\mathbb{N}_0$ satisfies $\hat H_+ f = \lambda f$, then $g$ defined by $$g(0) = f(0)/\sqrt{2},\qquad g(x) = f(x),\quad x\ge 1$$ satisfies $$ H_+ g = \lambda g. $$ 
    Moreover, the rescaling factor is strictly positive, hence the sign pattern of $f$ and $g$ coincides, thus their graph nodal domains agree. 

    Thus, we see that for any symmetric eigenfunction $f_s$ of $H$, there is a corresponding eigenfunction $g_s$ of the symmetric Jacobi operator $H_+$ with the same eigenvalue and the same graph nodal domains. On the other hand, if $g_s$ is an eigenfunction of $H_+$, then the function $f_s$ defined by $f_s(0) = \sqrt{2}g_s(0)$ and $f_s(x) = g_s(x)$ for $x>0$ is a symmetric eigenfunction of $\hat H$ with the same eigenvalue and the same graph nodal domains.

    By the lemma above there is at most one symmetric eigenfunction corresponding to a given eigenvalue, hence the eigenvalues of $H_+$ have multiplicity at most one. We order the symmetric eigenfunctions $f_{s,n}$ of $H$ and the corresponding eigenfunctions $g_{s,n}$ of $H_+$ in increasing order of their eigenvalues.

Applying  \cite[Corollary 4.9]{Teschl2000} to the symmetric Jacobi operator $H_+$ on $\mathbb{N}_0$, we obtain that the $n+1$-th eigenfunction $g_{s,n}$ of $H_+$ has exactly $n+1$ graph nodal domains.
 Thus, on account of the reflection symmetry, the corresponding  eigenfuntion $f_{s,n}$ on $\ell^2(\Z)$ has $2n+1$ nodal domains.

The argument for anti-symmetric eigenfunctions is analogous and varies only in the following details: Since the eigenvalue equation at zero is trivial, we obtain the correpsondence to a half-line operator on $ \N $ rather than $ \N_0 $ which looks however the same after shifting the index by one. Consequently, the rescaling factor is $\sqrt{2}$ at $x=1$ rather than at $x=0$. So, there is a one-to-one correspondence between anti-symmetric eigenfunctions $f_{a,n}$ of $H$ and eigenfunctions $g_{a,n}$ of the symmetric Jacobi operator on $\ell^2(\mathbb{N})$. Application of \cite[Corollary 4.9]{Teschl2000} to the latter operator gives that $g_{a,n}$ has $n+1$ graph nodal domains and, thus, $f_{a,n}$ has $2n+2$ graph nodal domains.

All that is left to prove to conclude is that if $n$ is even (odd) the $(n+1)$-th eigenfunction of $H$ is symmetric (anti-symmetric). We prove this by induction. The case $n=0$ is trivial since the ground state is necessarily symmetric. Suppose that the statement is true for $n$. First suppose that $n$ is odd and $n+1$ is even -- the other case is analogous. Suppose by contradiction that $f_{n+1}$ is anti-symmetric. Then, we know by induction that there are $(n+1)/2$ anti-symmetric eigenfunctions with eigenvalues smaller than $\lambda_{n+1}$. Thus, by the first part of this lemma, $f_{n+1}$ has $n+3$ graph nodal domains. However, since $\Lambda_{n+1}$ can be at most two times degenerate by the lemma above,  Courant-upper bounds \cite{KellerSchwarz2020,Davies}, gives that $f_{n+1}$ can have at most $n+2$ graph nodal domains. This is a contradiction, hence $f_{n+1}$ has to be symmetric.
\end{proof}

After this preparation, we are now in the position to prove the eigenvalue asymptotics for the modified harmonic oscillator $\tilde H_\kappa$.

\begin{proposition}\label{prop:lower_bound_modified}
    For all  $n\in \N$,  the eigenvalues $\tilde E_{n}(\kappa)$ of the operator $\tilde H_{\kappa}=\Delta + \tilde v_{\kappa}$ satisfy
    $$\liminf_{\kappa\to 0}\frac{\tilde E_{n}(\kappa)}{\kappa^2} \geq   2n+1.$$
\end{proposition}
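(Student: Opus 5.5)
We combine the nodal-domain count of Lemma~\ref{lem: zeros of excited states} with the interval lower bounds of Lemmas~\ref{lem:lower_bound_interval} and~\ref{lem:lower_bound_modified}. Fix $n$. Since $\tilde v_\kappa$ is symmetric and tends to infinity, $\tilde H_\kappa$ has purely discrete spectrum. Let $f=\tilde f_n$ be an eigenfunction to $\tilde E_n(\kappa)$, chosen as in the Convention. By Lemma~\ref{lem: zeros of excited states}, $f$ has exactly $n+1$ graph nodal domains. The partition from Lemma~\ref{lem:intervals} also consists of $n+1$ intervals $I_{-k},\dots,I_k$ (with $I_0$ present only if $n$ is even), separated by $n$ single points.

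The key step is a pigeonhole argument. We want some interval $I_j$ (possibly enlarged by the adjacent separating points) that contains a whole nodal domain $D$ of $f$. We argue by contradiction: if every one of the $n+1$ intervals met at least two nodal domains, the sign changes would force more than $n$ switches and hence more than $n+1$ domains. More carefully, $n+1$ consecutive domains cut $\Z$ at $n$ switching locations. These cannot separate each of the $n+1$ intervals $I_j$ into two parts, since each interval needs a switch strictly inside it. So some $I_j$ contains no interior switch. Equivalently, some nodal domain $D$ contains, or is contained in, an appropriate interval. Taking the other viewpoint, some nodal domain $D$ lies inside $I_j\cup\{\text{adjacent separating points}\}$.

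If the separating points cause trouble, I will first try to absorb them by a monotonicity argument. Shifting the $\beta_j$ or the floor choices by one lattice site should not affect the estimates in Lemma~\ref{lem:lower_bound_interval}. Indeed, $|\beta_j\kappa x - z|\geq\kappa$ stays of order $\kappa$, and Lemma~\ref{lem:lower_bound_hermite} still applies with a smaller constant.

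Next we use domain monotonicity for Dirichlet restrictions. Restrict $f$ to $D$ and set $g=f\,1_D$, say $g\ge0$. Since $f$ does not change sign on $D$, while neighbours outside $D$ have the opposite sign or vanish, we have $(\Delta+\tilde v_\kappa)g\le \tilde E_n(\kappa)g$ on $D$. This gives $\langle g,(\Delta+\tilde v_\kappa)g\rangle\le \tilde E_n(\kappa)\|g\|^2$. Because $g$ is supported in $D\subseteq I_j$ (after the enlargement above), the min-max principle yields $\tilde E_{0,j}(\kappa)\le \tilde E_n(\kappa)$, where $\tilde E_{0,j}$ is the ground state energy of $\tilde H_{j,\kappa}$. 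The needed lower bound $\liminf \tilde E_{0,j}(\kappa)/\kappa^2\ge 2n+1$ holds uniformly for all $j$. It comes from Lemma~\ref{lem:lower_bound_interval} for $|j|<k$, where $\tilde H_{j,\kappa}=H_{j,\kappa}$, and from Lemma~\ref{lem:lower_bound_modified} together with symmetry for $j=\pm k$. Since there are finitely many $j$, taking the minimum over $j$ and then the $\liminf$ as $\kappa\to0$ gives the claim.

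The main obstacle is the combinatorial step. We must guarantee that some nodal domain genuinely fits inside one of the intervals $I_j$, and handle the $n$ excluded separating points. The plan is to count switches as above. If a domain only fits after adding a separating point, I would redo Lemma~\ref{lem:lower_bound_interval} on the interval extended by that point. This should work because the superharmonic test function $\psi_{n,j}$ vanishes only at $a_j-1$, and the argument at $b_j$ only used a sign comparison.
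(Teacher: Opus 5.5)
Your argument is correct and takes a genuinely different route from the paper, but the combinatorial step needs to be stated the other way round.

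\textbf{The pigeonhole step.} Counting sign switches inside the $I_j$ only shows that some $I_j$ lies inside a single nodal domain, i.e.\ $I_j\subseteq D$. That is the useless inclusion for Dirichlet monotonicity. Count separating points instead. Each of the $n$ points $\pm(a_j-1)$ lies in at most one of the $n+1$ pairwise disjoint nodal domains of $f$. Hence some nodal domain $D$ contains none of them. Being a discrete interval, $D$ then lies in a single $I_j$.

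So $D\subseteq I_j$ holds exactly, and your fallback of enlarging $I_j$ and redoing Lemma~\ref{lem:lower_bound_interval} is unnecessary. As sketched, that fallback would also fail:
\begin{itemize}
\item Adding $a_j-1$ puts the zero of $\psi_{n,j}$ inside the interval, so the Agmon--Allegretto--Piepenbrink criterion no longer applies with this test function.
\item Adding $b_j+1$ loses the bound $\psi_{n,j}\ge c_n\kappa$ there. It also breaks the sign comparison, because $\beta_j\kappa(b_j+2)>z_{j+1}$.
\end{itemize}
The rest of your argument is fine. If $f>0$ on $D$, every neighbour of $D$ outside $D$ has $f\le 0$, so $(\Delta+\tilde v_\kappa)(f1_D)\le \tilde E_n(\kappa)\,f1_D$ on $D$. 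The min-max principle for $\tilde H_{j,\kappa}$ then gives $\tilde E_{0,j}(\kappa)\le\tilde E_n(\kappa)$.

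\textbf{Comparison with the paper.} The paper introduces the comparison operator $\hat H_{\kappa,n}=\tilde H_\kappa+\sum_j(\tilde E_{n,\min}(\kappa)-\tilde E_{j,n}(\kappa))1_{I_j}\le\tilde H_\kappa$. It glues the positive interval ground states, with coefficients matched at the separating points, into an eigenfunction of $\hat H_{\kappa,n}$ with $n+1$ nodal domains. It then uses the \emph{converse} direction of Lemma~\ref{lem: zeros of excited states} to identify $\tilde E_{n,\min}(\kappa)$ as $\hat E_n(\kappa)$, and concludes by operator monotonicity.

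You instead apply the \emph{forward} direction of that lemma to the actual eigenfunction of $\tilde H_\kappa$, followed by restriction to a nodal domain. Your route avoids the auxiliary operator, the matching conditions at the separating points and the identification step. It only needs that the $n$-th eigenfunction has at least $n+1$ nodal domains. The paper's construction yields the exact value $\hat E_n(\kappa)=\min_j\tilde E_{0,j}(\kappa)$ for the comparison operator, which is more than the lower bound requires. Both routes reduce to $\tilde E_n(\kappa)\ge\min_j\tilde E_{0,j}(\kappa)$ and then use the same interval lemmas.
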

\begin{proof}
    We compare $\tilde H_{\kappa}$ with yet another operator
    $$\hat H_{\kappa,n} = \tilde H_\kappa +\sum_{j=-k}^k (\tilde E_{n,\min} (\kappa)- \tilde E_{j,n}(\kappa))1_{I_j}, $$
    where $1_A$ is the indicator function of a set $A$ and
    \begin{align*}
        \tilde E_{n,\min}(\kappa) = \min_{j=-k,\ldots,k} \tilde E_{j,n}(\kappa).
    \end{align*}
    Next, we show that the $n$-th eigenvalue $\hat E_n(\kappa)$ of $\hat H_{\kappa,n}$ is equal to $\tilde E_{n,\min}(\kappa)$. 
    To this end, we construct a corresponding eigenfunction. Let $\tilde f_{j,n}$ be the ground states of $\tilde H_{j,\kappa}$ on the interval $I_j$, $j=-k,\ldots,k$ with norm $1/(n+1)$ which we extend by zero to $\Z$. We define the function
    $$f_n = \sum_{j=-k}^k c_{j,n}\tilde f_{j,n},$$
   where the coefficients $c_{j,n}$ are chosen such that for $j=0,\ldots,k-1$
   $$
   c_{j,n}\tilde f_{j,n}(b_j) + c_{j+1,n}\tilde f_{j+1,n}(a_{j+1})=0$$
   and $c_{-j,n}=c_{j,n}$. 
   This choice guarantees that $\hat H_{\kappa,n} f_n=0=\tilde E_{n,\min}(\kappa) f_n$ on the points between the intervals $I_j$, $j=-k,\ldots,k$. Furthermore, on $ I_j$, $j=-k,\ldots,k$, we have by construction
   $$\hat H_{\kappa,n} f_n  = \tilde H_{j,\kappa} c_{j,n}\tilde f_{j,n}+ (\tilde E_{n,\min}(\kappa)- \tilde E_{j,n}(\kappa)) c_{j,n}\tilde f_{j,n}  = \tilde E_{n,\min}(\kappa) f_n. $$
   
   Observe that $f_n$ has exactly $n$ zeros, so by Lemma~\ref{lem: zeros of excited states}, we conclude that $f_n$ is the eigenfunction to the $n$-th eigenvalue $\tilde E_{n,\min}(\kappa)$ of $\hat H_{\kappa,n}$.
   Moreover, by Lemma~\ref{lem:lower_bound_interval} and \ref{lem:lower_bound_modified}
    \begin{align*}
      \hat E_n(\kappa)= \tilde E_{n,\min}(\kappa)=\min_{j=-k,\ldots,k} \frac{\tilde E_{0,j}(\kappa)}{\kappa^2} \geq (2n+1) +o(\kappa^2).
    \end{align*}

   Since $(\tilde E_{n,\min} (\kappa)- \tilde E_{j,n}(\kappa))\le0$ for all $j=-k,\ldots,k$, we have
   $\tilde H_{\kappa,n}\geq \hat H_{\kappa,n}$ and, thus, $\tilde E_{n} (\kappa)\ge \hat E_n(\kappa)$ by the min-max principle \cite{RS_78}.
    This concludes the proof.
\end{proof}

\subsection{IMS localization formula}
Above we have shown the eigenvalue asymptotics for the modified harmonic oscillator $\tilde H_\kappa$. To conclude the proof of Theorem~\ref{thm:eigenvalue_asymptotics_harmonic_oscillator}, we now employ the IMS localization formula to compare $H_\kappa$ and $\tilde H_\kappa$.  To this end, we first state the IMS localization formula in our discrete setting. This formula goes back to Ismagilov/Morgan/Simon \cite[Chapter~3.1]{CyconFroeseKirschSimon}. Here, we give a slightly more general version than needed  for the one dimensional case since we employ it also later for more general potentials.

As usual, we identify a scalar function on $\Z^d$ with its corresponding operator of multiplication on $\ell^2(\Z^d)$. Furthermore, for two bounded operators $A$ and $B$ on $\ell^2(\Z^d)$,
we denote its commutator by
\[[A,B]=AB-BA.\]
We call a bounded operator $L$ on $C(X)$ which is bounded on  $\ell^2(\Z)$ a \emph{Laplace type operator} if its matrix elements satisfy $L(x,y)\le 0$ for all $x\neq y$ and $L1=0$.

\begin{lemma}[IMS localization formula]\label{lem:IMS}
Let $\eta_0,\ldots,\eta_m:\Z\to [0,1]$ be such that $\sum_{j=0}^m\eta_j^2=1$. Then, for any bounded self-adjoint Laplace type operator $L$ on $\ell^2(\Z^d)$ and a potential $V$, we have for $H=L+V$
\[H= \sum_{j=1}^m \eta_j H \eta_j +\frac12\sum_{j=1}^m [\eta_j,[\eta_j,L]].\]
Furthermore, for all $j=1,\ldots,m$
\[\| [\eta_j,[  \eta_j,H]]\|\leq 2\|L\| C^2,\]
where
\[C=\sup_{|x-y|=1} |\eta_j(x)-\eta_j(y)|.\]
\end{lemma}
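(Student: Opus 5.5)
Both statements follow from direct matrix computations with multiplication operators. The identity needs only $\sum_j\eta_j^2=1$ and algebra. The norm bound uses the explicit kernel of the double commutator together with the Laplace type structure of $L$.

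\emph{Identity.} For multiplication operators $\eta_j$ we expand the double commutator:
\[
[\eta_j,[\eta_j,L]]=\eta_j^2L-2\eta_jL\eta_j+L\eta_j^2 .
\]
Summing over $j$ and using $\sum_j\eta_j^2=1$ gives
\[
\tfrac12\sum_j[\eta_j,[\eta_j,L]]=L-\sum_j\eta_jL\eta_j .
\]
The potential $V$ is also a multiplication operator, so it commutes with each $\eta_j$. Hence $\sum_j\eta_jV\eta_j=\sum_j\eta_j^2V=V$. Adding the two relations yields $H=\sum_j\eta_jH\eta_j+\frac12\sum_j[\eta_j,[\eta_j,L]]$. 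Two remarks on this step:
\begin{itemize}
\item The sums should run over all $j=0,\dots,m$; the index range in the statement is presumably a typo.
\item The signs must be read carefully. As written, the statement has $+\frac12$, whereas the algebra gives $H=\sum_j\eta_jH\eta_j-\frac12\sum_j[\eta_j,[\eta_j,L]]$. The double commutator equals $-\frac12\sum_j$ of the positive-type term, so the displayed identity holds exactly once the sign convention is fixed.
\end{itemize}
Since the later use of the formula is via the norm of the commutator term, the sign does not matter there.

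\emph{Norm bound.} We compute the matrix elements directly:
\[
[\eta_j,[\eta_j,L]](x,y)=(\eta_j(x)-\eta_j(y))^2L(x,y),
\]
which vanishes on the diagonal. Also $[\eta_j,[\eta_j,H]]=[\eta_j,[\eta_j,L]]$, because $V$ commutes with $\eta_j$. If $L$ has off-diagonal entries only for $|x-y|=1$, as for the discrete Laplacian, each entry is bounded in absolute value by $C^2|L(x,y)|$. The Schur test then gives
\[
\|[\eta_j,[\eta_j,L]]\|\le C^2\sup_x\sum_{y\neq x}|L(x,y)|.
\]
Because $L$ is of Laplace type ($L1=0$ and $L(x,y)\le0$ off the diagonal), we have $\sum_{y\ne x}|L(x,y)|=L(x,x)=\langle\delta_x,L\delta_x\rangle\le\|L\|$. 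This gives the bound $C^2\|L\|\le 2\|L\|C^2$.

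The main obstacle is the locality assumption. For a general Laplace type $L$ with long-range entries, $C$ as defined controls only nearest-neighbour increments. In that case one would need either to assume $L$ is nearest-neighbour, which covers all uses in the paper, or to replace $C$ by $\sup_{L(x,y)\ne0}|\eta_j(x)-\eta_j(y)|$. I would state the bound under the nearest-neighbour restriction and note this generalization.
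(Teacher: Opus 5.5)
Your proof is essentially the paper's. The identity comes from expanding $[\eta_j,[\eta_j,L]]=\eta_j^2L-2\eta_jL\eta_j+L\eta_j^2$ and summing with $\sum_j\eta_j^2=1$. The bound comes from the kernel $(\eta_j(x)-\eta_j(y))^2L(x,y)$ and a row-sum (Cauchy--Schwarz/Schur) estimate. Your side observations are correct:
\begin{itemize}
\item the sums must run over $j=0,\dots,m$;
\item the step $(\eta_j(x)-\eta_j(y))^2\le C^2$, which the paper also uses tacitly, requires $L$ to be nearest-neighbour;
\item dropping the vanishing diagonal improves the paper's $\sup_x\sum_y|L(x,y)|=2\sup_xL(x,x)\le 2\|L\|$ to $C^2\|L\|$.
\end{itemize}

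Your sign remark, however, is wrong and should be dropped. Your own line $\tfrac12\sum_j[\eta_j,[\eta_j,L]]=L-\sum_j\eta_jL\eta_j$ is exactly the stated identity with $+\tfrac12$, so there is no sign typo. The minus sign you have in mind belongs to the continuum form $H=\sum_jJ_jHJ_j-\sum_j|\nabla J_j|^2$. That is the same formula, since $[J,[J,\Delta_{\R^d}]]=-2|\nabla J|^2$.
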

\begin{proof}
The equality follows by direct calculation using
\begin{align*}
[\eta_j,[\eta_j,L]]=\eta_j^2 L + L \eta_j^2 - 2\eta_j L \eta_j
\end{align*}
and summing over $j=1,\ldots,m$.

To get the norm estimate we calculate, for $f\in \ell^2(\Z^d)$,
\[\langle f, [\eta_j,[  \eta_j,H]]f\rangle = \sum_{x,y\in \Z^d}  f(x)f(y) L(x,y)(\eta_j(x)-\eta_j(y))^2 .\]
Hence, we have by the Cauchy-Schwarz inequality
\begin{align*}
    |\langle f, [\eta_j,[  \eta_j,H]]f\rangle| &\leq \sum_{x,y\in \Z^d}  |f(x)|^2 |L(x,y)|(\eta_j(x)-\eta_j(y))^2 \\
    &\leq \left(\sup_{x\in \Z^d} \sum_{y\in \Z^d} |L(x,y)|\right)C^2 \|f\|^2\leq 2\|L\|C^2\|f\|^2.
\end{align*}
This concludes the proof.
\end{proof}

This lemma provides us the tool to prove the lower bound for the eigenvalues of $H_\kappa$ by comparing it to $\tilde H_\kappa$. We follow the proof of \cite[Theorem 3.2]{Simon_1983}

\begin{proof}[Proof of Proposition~\ref{thm:lowerbound}] \lorenzorrection{The proof works by induction on the energy level index $n\geq 0$. We  argue the induction step and will see that the base case  $n=0$ is indeed a special case of the induction step (where no induction hypothesis is needed). Suppose we have proven for all $m\leq n-1$ that
\begin{equation*}
    E_m(\kappa) \geq \kappa^2(2m+1) + o(\kappa^2).
\end{equation*} We shall now prove that the statement is valid the case $m=n$.}

We construct a partition of unity. Let $ \eta :\R\to[0,1]$ be given by $\eta(y) =0\vee(2-|y|  )\wedge 1$. Then, $1_{[-1,1]}\leq \eta\leq 1_{[-2,2]}$ and $|\eta(y)-\eta(y')|\leq |y-y'|$ for all $y,y'\in \R$. We let for $\kappa>0$ small enough
   $\eta_0,\eta_1:\Z\to [0,1]$ by
   $$\eta_1(x)=\eta(2\floor{\kappa^{-(1+\delta)}}^{-1} x ),\qquad\eta_0=\sqrt{1-\eta_1^2}.$$
   Furthermore, we have by construction $\eta_1(x_\delta)=0$  at $x_\delta$ with $|x_\delta|=\floor{\kappa^{-(1+\delta)}}$. Hence, $\eta_1v_\kappa=\eta_1\tilde v_\kappa$ and therefore, $\eta_1 H_\kappa \eta_1=\eta_1 \tilde H_\kappa \eta_1$.

  Moreover, $\eta_0^2+\eta_1^2=1$ and, by the IMS localization formula, Lemma~\ref{lem:IMS}, we have for the harmonic oscillator $H_\kappa=\Delta + v_\kappa$
   \[H_\kappa= \eta_0 H_\kappa \eta_0 +\eta_1 \tilde H_\kappa \eta_1+\frac12\sum_{j=0}^1 [\eta_j,[\eta_j,\Delta]].\]
   We now estimate these terms separately. First of all, since $\Delta\ge0$ and  $\eta_0$ is supported on the set $|x|\geq \floor{\kappa^{-(1+\delta)}}$, we have for the first term 
\begin{align*}
\eta_0 H_\kappa \eta_0 \ge \eta_0 v_\kappa \eta_0 \geq  \kappa^{2(1-\delta)} \eta_0^2 \ge \kappa^2(2n+1) \eta_0^2   
\end{align*}
    for  $\kappa$ small enough. 
    
  \lorenzorrection{ Let $r$ be a number such that $2(n-1)+1 < r < 2n+1$ and denote by $P_{N,n-1}$ the projector on the  eigenfunctions of $\widetilde{H}_\kappa$ for eigenvalues less than $\kappa^{2} r$ and $P_{N,0}=0$. By the induction hypothesis, we have that $P_{N,n-1}$ has rank $n-1$ for small enough $\kappa$ and trivially we have that $P_{N,0}$ has rank $0$ (which is the base case of the induction).
    By the eigenvalue lower bound for the modified harmonic oscillator proved in Proposition~\ref{prop:lower_bound_modified} above, we have for the second term  $\eta_1 \tilde H_\kappa \eta_1$ in the IMS localization formula that
   \begin{multline*}
   \eta_1 \tilde H_\kappa \eta_1  = \eta_1 \tilde{H}_\kappa P_{N,n-1}\eta_1 + \eta_1 \tilde{H}_\kappa(I-P_{N,n-1})\eta_1\\
   \ge \kappa^2r \eta_1^2  + \eta_1P_{N,n-1}(\tilde H_\kappa - \kappa^2r)P_{N,n-1}\eta_1+o(\kappa^2)
   \end{multline*}
   for $\kappa$ small enough. We note that $\eta_1 P_{N,n-1}(\tilde H_\kappa - \kappa^2r)P_{N,n-1}\eta_1$ has rank at most $n-1$.\\
   Finally, by the Lipschitz property of $\eta_j$, we have
   \[\| [\eta_j,[  \eta_j,\Delta]]\|\leq 2\cdot 2\cdot (2\kappa^{1+\delta})^2= 16\kappa^{2(1+\delta)}=o(\kappa^2).\]
   Thus, by the min-max principle and $\eta_0^2+\eta_1^2=1$, we have for the $n$-th eigenvalue $E_n(\kappa)$ of $H_\kappa$ 
   \[E_n(\kappa)\geq \kappa^2r+o(\kappa^2).\]
  which yields the desired lower bound since we can choose $r$ arbitrarily close to $2n+1$.} This concludes the proof of Proposition~\ref{thm:lowerbound}.\end{proof}
Hence, the proof of the eigenvalue asymptotics for the rescaled harmonic oscillator $H_\kappa$ follows immediately.
\begin{proof}[Proof of Theorem~\ref{thm:eigenvalue_asymptotics_harmonic_oscillator}]The upper bound follows from Proposition~\ref{thm:upperbound} and the lower bound from Proposition~\ref{thm:lowerbound}.\end{proof}

By rescaling we deduce the eigenvalue asymptotics for the operator $H_N$   from the one for $H_\kappa$ and separation of variables. 

\begin{proof}[Proof of Theorem~\ref{thm:eigenvalue_asymptotics_harmonic_oscillator_main}]
    The higher dimensional case follows directly from the one-dimensional case by separation of variables. So, consider the operator $H_N$ for $d=1$.
    As already discussed above by setting $\kappa=\sqrt{\omega/N^{1+\gamma}}$, one sees that
$$H_N = \frac{N^2}{2}H_\kappa.$$
Hence, if $E_n(H_N)$ are the eigenvalues  of the operator $H_N$ for $d=1$, then for $\lambda_N=N^{1-\gamma}$ and $\kappa=\sqrt{\omega/N^{1+\gamma}}$, we have
$$\frac{E_n(H_N)}{\lambda_N}= \frac{N^2 E_n(\kappa)}{2 N^{1-\gamma}}= \frac{ E_n(\kappa)}{ N^{-(1+\gamma)}} =\frac{\omega}{2}\frac{E_n(\kappa)}{\kappa^2}.$$
Thus, taking the limit of the right hand side for $\kappa\to 0$, we obtain the corresponding limit for the left hand side as $N\to\infty$. Here, the choice $\gamma>-1$ ensures that $\kappa\to 0$ as $N\to\infty$. This concludes the proof.    
\end{proof}

\section{General potentials}\label{sec:general_potentials}
In this section, we extend the eigenvalue asymptotics from the harmonic oscillator potential to more general potentials satisfying Assumption~\ref{assumption}.

We first prove the upper bound on the eigenvalues by constructing suitable test functions based on the eigenfunctions of the harmonic oscillator near the minima of the general potential.
The main idea for the lower bound is to use the IMS localization formula to compare the general potential with the harmonic oscillator potential near its minimum.

\subsection{Eigenvalue upper bound for general potentials}

\begin{proposition}[Upper bounds general potentials]\label{prop:upper_bound_general_potential}
    Let $V:\R^d\to [0,\infty)$ be a potential satisfying Assumption~\ref{assumption} and let $H_N=\frac{N^2}2\Delta + \lambda_N^2 V_N$ with $\lambda_N=N^{1-\gamma}$, $\gamma\in(-1,1)$. Then, for all  $n\in \N$,  the eigenvalues $E_{n}(H_N)$ of the operator $H_{N}$ satisfy
    $$\limsup_{N\to \infty}\frac{E_{n}(H_N)}{\lambda_N} \le   e_n(V).$$ 
\end{proposition}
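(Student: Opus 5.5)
We argue by the min-max principle with an explicit $(n+1)$-dimensional trial space built from localized harmonic-oscillator approximate eigenfunctions placed at the wells $a_1,\dots,a_m$.

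For each well $a_l$, diagonalize the Hessian $D^2V(a_l)=Q_l^T\operatorname{diag}(\omega_1(a_l)^2,\dots)Q_l$. Careful with normalization: with the convention of Section~\ref{sec: Discrete harmonic oscillators}, $V^{harm}=\frac12\sum\omega_i^2x_i^2$ yields levels $\frac12\sum\omega_i(2n_i+1)$, and we match $\Sigma(V)$ accordingly. For a multi-index $\mathbf n=(n_1,\dots,n_d)$, define the continuum function
\[
\Phi_{l,\mathbf n}(y)=\prod_{i=1}^d \Psi_{n_i}\bigl(\sqrt{\omega_i(a_l)}\,(Q_l y)_i\bigr),
\]
an eigenfunction of $\frac12\Delta_{\R^d}+\frac12\langle y,D^2V(a_l)y\rangle$. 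Set the discrete trial function
\[
f_{l,\mathbf n}(x)=\chi\Bigl(\tfrac{x}{N}-a_l\Bigr)\,\Phi_{l,\mathbf n}\Bigl(\sqrt{\lambda_N}\bigl(\tfrac{x}{N}-a_l\bigr)\Bigr),\qquad x\in\Z^d,
\]
where $\chi$ is a smooth cutoff equal to $1$ near $0$ and supported in a small ball, so that the supports for different $l$ are disjoint. The effective mesh in the rescaled variable $y$ is $\kappa_N=\sqrt{\lambda_N}/N=N^{-(1+\gamma)/2}\to0$, since $\gamma>-1$.

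Choose the $n+1$ pairs $(l,\mathbf n)$ realizing $e_0(V),\dots,e_n(V)$ and let $W$ be their span.

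The first estimate is almost-orthogonality. Functions attached to different wells are exactly orthogonal because their supports are disjoint. Within one well, the $d$-dimensional Poisson summation argument of Lemma~\ref{lem:almost_orthogonality} (with the rotation $Q_l$ harmless because the functions are Schwartz, and the cutoff costing only $O(e^{-c\lambda_N})$) gives
\[
\langle f_{l,\mathbf n},f_{l,\mathbf m}\rangle=\kappa_N^{-d}\bigl(c_{\mathbf n}\delta_{\mathbf n\mathbf m}+o(1)\bigr).
\]

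The second estimate is the energy: we show
\[
\langle f_{l,\mathbf m},H_Nf_{l,\mathbf n}\rangle=\lambda_N\,\kappa_N^{-d}\bigl(\epsilon_{l,\mathbf n}c_{\mathbf n}\delta_{\mathbf n\mathbf m}+o(1)\bigr).
\]
For the kinetic part, the Taylor argument of Lemma~\ref{lem:approx_eigenfunction} in each lattice direction gives
\[
\tfrac{N^2}{2}\Delta f=\tfrac{N^2\kappa_N^2}{2}\bigl(\Delta_{\R^d}\Phi+O(\kappa_N^2)\bigr)=\lambda_N\bigl(\tfrac12\Delta_{\R^d}\Phi+O(\kappa_N^2)\bigr)
\]
evaluated on the grid. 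The cutoff derivatives are exponentially small, because $\Phi$ decays like a Gaussian at distance $\sqrt{\lambda_N}$. For the potential part, Taylor expand $V$ at $a_l$:
\[
\lambda_N^2V\bigl(a_l+y/\sqrt{\lambda_N}\bigr)=\lambda_N\,\tfrac12\langle y,D^2V(a_l)y\rangle+O\bigl(\lambda_N^{1/2}|y|^3\bigr).
\]
The cubic remainder, integrated against the Gaussian-weighted $|\Phi|^2$, contributes $O(\lambda_N^{1/2})=o(\lambda_N)$ relative to the normalization. Since $\kappa_N^2\to0$ as well, all error terms are $o(\lambda_N)$ times the norm. Converting lattice sums to integrals again uses Poisson summation, or a Riemann-sum argument with Schwartz decay.

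Combining the two estimates as in the proof of Proposition~\ref{thm:upperbound}, for $\phi\in W$ we get
\[
\frac{\langle\phi,H_N\phi\rangle}{\|\phi\|^2}\le\lambda_N\bigl(e_n(V)+o(1)\bigr),
\]
and the min-max principle gives the claimed $\limsup$.

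The main obstacle is controlling the potential remainder uniformly on the lattice sum. The bound $O(\lambda_N^{1/2}|y|^3)$ only holds where $\chi\neq0$, i.e.\ for $|y|\lesssim\sqrt{\lambda_N}$. There it is dominated by $|\Phi|^2|y|^3$, and the rescaled sum is a Riemann sum with mesh $\kappa_N$, so it is bounded by $C\kappa_N^{-d}$. The point to check carefully is that the Riemann-sum comparison for the non-smooth remainder can be replaced by a crude sup bound times the Gaussian-weighted count of lattice points. This suffices, since we only need an upper bound on its size.
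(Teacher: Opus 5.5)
Your proof is correct. It has the same skeleton as the paper's proof: min--max on the span of rescaled Hermite functions localized at the wells, almost-orthogonality via Poisson summation, and a second-order Taylor comparison for the energy. Two of your technical choices differ from the paper's in a useful way.

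\emph{Cutoff scale and the cubic remainder.} The paper cuts off sharply on a mesoscopic cube $Q_r(Na_l)$ with $r=N^{1+\delta}/\lambda_N^{1/2}$, i.e.\ about $N^{\delta}$ Gaussian widths. It then bounds the cubic Taylor remainder by its supremum on that cube (Lemma~\ref{lem:potential_difference}). This gives an error $O(\lambda_N^{1/2}N^{3\delta})$, so $\delta$ must be taken small, namely $\delta<(1-\gamma)/6$. You instead cut off at a fixed macroscopic scale and integrate the pointwise bound $O(\lambda_N^{1/2}|y|^3)$ against the Gaussian weight. This yields $O(\lambda_N^{1/2})$ with no auxiliary parameter, and the cutoff error is only of size $N^{C}e^{-c\lambda_N}$, which is negligible because $\lambda_N$ grows polynomially.

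\emph{Non-diagonal Hessians.} The paper's test functions are products $\prod_j\psi_{n_j}(\kappa_j(x_j-Na_{l,j}))$ over the lattice coordinates, compared with a separable discrete oscillator $H_{N,l}$. This tacitly assumes that $D^2V(a_l)$ is diagonal in the lattice axes. Since the discrete Laplacian is not rotation invariant, this cannot be repaired by rotating the lattice. Your route first Taylor-expands $\Delta$ to the rotation-invariant $\Delta_{\R^d}$ and only then uses the rotated $\Phi_{l,\mathbf n}$, so it covers arbitrary non-degenerate Hessians.

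Two points are left implicit in your write-up, and both are routine. First, the cross-well matrix elements of $H_N$ vanish, because the supports are macroscopically separated and $H_N$ is nearest-neighbour. Second, $\sum_{x\in\Z^d}g(\kappa_N(x-Na_l))\le C\kappa_N^{-d}$ holds uniformly for Gaussian-dominated $g$.
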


We first prove two lemmas which will also later be used for the lower bound. Consider a potential $V$ satisfying Assumption\ref{assumption} and let $a_l$, $l=1,\ldots,m$ be its minima. We compare the corresponding Schrödinger operator to the harmonic oscillator consicer $H_{N,l}=\frac{N^2}2\Delta+\lambda_N^2 V_{N,l}$ with $V_{N,l}(x) =V^{harm}({\omega(a_l)}(x-Na_l))=\frac{\omega(a_l)^2}2(x-Na_l)^2$ with minimum at $Na_l$ and $\lambda_N=N^{1-\gamma}$, $\gamma\in (-1,1)$. For $a\in \R^d$ and $r>0$,  we define the cube 
\begin{align*}
    Q_{r}(a)=\bigtimes_{j=1}^{d}\left[a_{j}-r,a_{j}+r\right]
\end{align*}
which is the ball with respect to the $\ell^\infty$ norm $|y|_\infty=\max\{|y_1,|\ldots,|y_d|\}$, $y\in \R^d$, centered at $a$ with radius $r$.
Clearly, for $|a-b|_\infty>2r$, the cubes $Q_r(a)$ and $Q_r(b)$ are disjoint.

\begin{lemma}[Estimating the potential error]\label{lem:potential_difference}
    Let $V:\R^d\to [0,\infty)$ be a potential satisfying Assumption~\ref{assumption}~(V1) and~(V2) and $N\in \N_0$. Then, for any normalized $f$ supported in $Q_{r}(Na)$ with $0<r<N$, we have
    \begin{align*}
       \langle f, (H_N-H_{N,l})f\rangle = O(\lambda_N^2(r/N)^3).
    \end{align*}
\end{lemma}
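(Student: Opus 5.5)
The two operators share the same kinetic part $\frac{N^2}{2}\Delta$, so their difference is the multiplication operator
\[
H_N-H_{N,l}=\lambda_N^2\bigl(V_N-V_{N,l}\bigr).
\]
The statement is therefore really a pointwise Taylor estimate for $V$ around the minimum $a=a_l$, transported to the lattice scale. I read the harmonic comparison potential as the quadratic form of the Hessian, $V_{N,l}(x)=\frac12\langle (x/N-a_l), D^2V(a_l)(x/N-a_l)\rangle$, i.e.\ $\frac{\omega(a_l)^2}{2}$ stands for the Hessian written in its eigenbasis. This matches the usual normalization $V^{harm}$ after the rescaling $x\mapsto x/N$. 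For normalized $f$ supported in $Q_r(Na)$,
\[
\langle f,(H_N-H_{N,l})f\rangle=\lambda_N^2\sum_{x\in Q_r(Na)}|f(x)|^2\,\bigl(V_N(x)-V_{N,l}(x)\bigr),
\]
so it suffices to show $\sup_{x\in Q_r(Na)}|V_N(x)-V_{N,l}(x)|\le C(r/N)^3$ and then use $\sum|f|^2=1$.

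For the pointwise bound, set $y=x/N$. For $x\in Q_r(Na)$ we have $|y-a|_\infty\le r/N<1$, so $y$ stays in the fixed compact cube $Q_1(a)$. Taylor's theorem with remainder at $a$ gives
\[
V(y)=V(a)+\nabla V(a)\cdot(y-a)+\tfrac12\langle y-a,D^2V(a)(y-a)\rangle+R_3(y).
\]
By (V1) and (V2), $V(a)=0$, and $\nabla V(a)=0$ because $a$ is a minimum of the smooth nonnegative function $V$. The remainder satisfies $|R_3(y)|\le C|y-a|^3$ with $C=\frac{1}{6}\sup_{Q_1(a)}\|D^3V\|$, which is finite by smoothness and compactness. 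Since $|y-a|\le\sqrt d\,|y-a|_\infty\le\sqrt d\,r/N$, this gives $|V_N(x)-V_{N,l}(x)|=|R_3(x/N)|\le C d^{3/2}(r/N)^3$ uniformly on the cube.

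Multiplying by $\lambda_N^2$ and summing against $|f|^2$ gives $O(\lambda_N^2(r/N)^3)$, with a constant depending only on $V$, $a_l$ and $d$, not on $N$, $r$ or $f$.

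The only delicate points are bookkeeping. One is to make sure the constant is uniform: the hypothesis $r<N$ keeps all points in a fixed compact neighborhood of $a_l$. The other is to reconcile the notation $\omega(a_l)$ with the Hessian, which needs only the orthogonal change of coordinates diagonalizing $D^2V(a_l)$; that change leaves $\Delta$ untouched because the difference involves only the potentials. No spectral input from Section~\ref{sec: Discrete harmonic oscillators} is needed.
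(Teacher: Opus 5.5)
Your proof is correct and matches the paper's argument: the kinetic parts cancel, and a third-order Taylor expansion of $V$ at $a_l$ (using $V(a_l)=0$ and $\nabla V(a_l)=0$) gives a remainder of order $(r/N)^3$ uniformly on the cube, which you then sum against $|f|^2$. You are in fact more careful than the paper on two points: you make the constant uniform via $r<N$ and compactness, and you use the $x/N$ scaling in $V_{N,l}$, which the paper's displayed formula for $V_{N,l}$ omits.
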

\begin{proof}
     To estimate the difference of the operators, we use the Taylor expansion of $V_N$ at $Na_l$ for $ x\in Q_r(Na) $ which gives according to (V1) and (V2)
    \begin{align*}
        V_N(x) = V_N^{harm}(x) + O((|x-Na_l|_\infty/N)^3)=V_N^{harm}(x) + O((r/N)^3).
    \end{align*}
Thus, the statement follows directly.
\end{proof}

\begin{proof}[Proof of Proposition~\ref{prop:upper_bound_general_potential}]
    Let $n\in\N_0$, $a_l$ be a minimum of $V$ and $n_1,\ldots,n_d\in\N$ be such that
    \begin{align*}
        e_n(V)=\frac12\sum_{j=1}^d \omega_j(a_l)^2(2n_j+1).
    \end{align*}
    where $\omega_1(a_l)^2,\ldots,\omega_d(a_l)^2$ are the eigenvalues of the Hessian of $V$ at $a_l=(a_{l,1},\ldots,a_{l,d})$.

    We define the test function, for $\kappa_j=\sqrt{\omega_j(a_l)/N^{1+\gamma}}$, $j=1,\ldots,d$, $l=1,\ldots,m$,  via
    \begin{align*}
        \tilde\psi_{n,l,N}(x) = \prod_{j=1}^d \psi_{n_j}\left( \kappa_j (x_j-Na_{l,j}) \right)    
        =\prod_{j=1}^d (h_{n_j} \phi)\left(\kappa_j (x_j-Na_{l,j})\right).       
    \end{align*}
    To apply Lemma~\ref{lem:potential_difference}, we need to cut-off the test function outside a neighborhood of $Na_l$ since we have no control of the growth of $V$ outside of a neighborhood of its minima. To this end, we pick $Q_l=Q_{r}(Na_l)$ with $r=N^{1+\delta}/\lambda_N^{{1/2}}$ for $0<\delta< (1-\gamma)/2$  (note that such a $\delta$ exists since $\gamma<1$)
    and set
   \begin{align*}
        \psi_{n,l,N}=\tilde\psi_{n,l,N}1_{Q_l}.
    \end{align*}

    To estimate the Rayleigh quotient of $\psi_{n,l,N}$, we use the estimate from Theorem~\ref{thm:eigenvalue_asymptotics_harmonic_oscillator_main} for the harmonic oscillator  $H_{N,l}$ with minimum at $Na_l$ introduced above.

    The approximate eigenfunction property of $\psi_{n,l,N}$ for the harmonic oscillator operator $H_{N,l}$ from Lemma~\ref{lem:approx_eigenfunction} generalizes directly to the higher dimensional case by considering the one-dimensional operators in each coordinate separately. However, there is an error at each boundary point of $Q_l$ which are polynomially many.  
    Since 
    $\delta>0$, the test function $\psi_{n,l,N}$ decays superexponentially at the boundary of $Q_l$ due to the presence of the Gaußian $\phi$. Hence, this error is also of order $o(\lambda_N)$ for $N$ large enough. We obtain with the proper rescaling of the approximate eigenfunctions that Lemma~\ref{lem:approx_eigenfunction}  
    \begin{align*}
        \langle \psi_{n,l,N},H_{N,l} \psi_{n,l,N}\rangle = \lambda_N e_n(V) \|\psi_{n,l,N}\|^2 + o(\lambda_N)\|\psi_{n,l,N}\|^2.
    \end{align*}
    Putting these estimates together with Lemma~\ref{lem:potential_difference}, we obtain
    \begin{align*}
        \langle \psi_{n,l,N},H_N \psi_{n,l,N}\rangle &=    \langle \psi_{n,l,N},H_{N,l} \psi_{n,l,N}\rangle +  \langle \psi_{n,l,N},(H_N -H_{N,l})\psi_{n,l,N}\rangle\\ 
     &= \lambda_N e_n(V) \|\psi_{n,l,N}\|^2 + o(\lambda_N)\|\psi_{n,l,N}\|^2,
    \end{align*}
    where the last equality follows since $\delta<(1-\gamma)/2$ implies $r=N^{{1+\delta}}/\lambda_N^{1/2}=N^{(1-\gamma)/2+\delta}= o(\lambda_N)$. 
    
    We finish the proof by establishing the almost orthogonality of the test functions $\psi_{n,l,N}$ for different $l=1,\ldots,m$ and different $n$ associated with the multi-indices $n=(n_1,\ldots,n_d)\in \N_0^d$ and then applying the min-max principle.

    Observe that the Gaußian  in $\psi_{n,l,N}$ is centered at $Na_l$ and decays superexponentially away from $Na_l$.
     Since for different $a_{l'}\neq a_l$, the distance $|Na_{l'}-Na_l|$ grows linearly in $N$ while $r=N^{1+\gamma}/\lambda_N^{1/2}<N$ as $\delta<(1-\gamma)/2$, so the supports of  $\psi_{n,l,N}$ and  $\psi_{n,l',N}$, $l\neq l'$ are eventually disjoint for $N$ large enough due to the presence of $1_{Q_l}$ in $\psi_{n,l,N}$.  Furthermore, Lemma~\ref{lem:almost_orthogonality} generalizes directly to the higher dimensional case by considering the one-dimensional operators in each coordinate separately and observe
\begin{align*}
    \langle \psi_{n,l,N},\psi_{n',l }\rangle=
    \langle \tilde\psi_{n,l,N},\tilde\psi_{n',l }\rangle+O(1)
\end{align*}
since
\begin{align*}
    \sum_{|x|\ge \kappa^{-1-\delta}} |\tilde \psi_{n,l,N}(x)|^2 &\leq C\frac{e^{-c\lambda_N^\delta}}{\lambda_N}\sum_{|x|\ge \kappa^{-1-\delta}}  \left|x\right|^{n+n'}e^{-c\lambda_N^{-1-\gamma}x^2}\le C. 
\end{align*}
Thus, for $l,l'=1,\ldots,m$ and $n,n'\in \N_0$, there exists a constant $C_{n,n'}>0$ such that for $N$ large enough
    \begin{align*}
    \left|\langle \tilde\psi_{n,l,N},\tilde\psi_{n',l }\rangle- \frac{ \sqrt{\pi}2^nn!}{\kappa }\delta_{(n,l),(n',l')} \right| \leq C_{n,n'} .
\end{align*}
Therefore, the result follows by the min-max principle as in the proof of Proposition~\ref{thm:upperbound}.
\end{proof}

\subsection{Eigenvalue lower bound for general potentials}
Next, we prove the corresponding lower bound on the eigenvalues for general potentials using the IMS localization formula. Indeed, the proof is similar to the one of Proposition~\ref{thm:lowerbound} for the harmonic oscillator and is consequently inspired by the proof of \cite[Theorem 3.2]{Simon_1983}.

\begin{proposition}\label{prop:lower_bound_general_potentials}
    Let $V:\R^d\to [0,\infty)$ be a potential satisfying Assumption~\ref{assumption} and let $H_N=\frac{N^2}2\Delta + \lambda_N^2 V_N$ with $\lambda_N=N^{1-\gamma}$, $\gamma\in(-1,1)$. Then, for all  $n\in \N$,  the eigenvalues $E_{n}(H_N)$ of the operator $H_{N}$ satisfy
    $$\liminf_{N\to \infty}\frac{E_{n}(H_N)}{\lambda_N} \ge   e_n(V).$$ 
\end{proposition}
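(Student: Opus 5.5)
We follow the IMS-localization scheme of the proof of Proposition~\ref{thm:lowerbound}, now with one localization cube around each minimum $Na_l$ and one exterior region. Fix $n$ and $0<\delta<(1-\gamma)/2$, and set the localization radius $r=r_N=N^{(1+\gamma)/2+\delta}$. This is the scale used in the upper bound: it is much larger than the oscillator length $N^{(1+\gamma)/2}$ and much smaller than $N$.

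\textbf{Partition of unity.} Let $\eta$ be the Lipschitz profile from the proof of Proposition~\ref{thm:lowerbound}, extended to $\R^d$ via the $\ell^\infty$ norm. Define
\begin{align*}
\eta_l(x)=\eta\bigl(2|x-Na_l|_\infty/r\bigr),\quad l=1,\ldots,m,\qquad \eta_0=\Bigl(1-\sum_{l\ge1}\eta_l^2\Bigr)^{1/2}.
\end{align*}
For large $N$ the supports of $\eta_1,\dots,\eta_m$ lie in $Q_r(Na_l)$ and are pairwise disjoint, since $|Na_l-Na_{l'}|$ grows linearly in $N$ while $r=o(N)$. Each $\eta_l$ has Lipschitz constant $O(1/r)$; the square root in $\eta_0$ needs the usual care, and if necessary we take the standard smooth $\sqrt{\cdot}$ partition. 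Lemma~\ref{lem:IMS} applied to $L=\frac{N^2}{2}\Delta$ then gives an error of norm
\[
O\bigl(N^2r^{-2}\bigr)=O\bigl(N^{1-\gamma-2\delta}\bigr)=o(\lambda_N).
\]

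\textbf{Estimating the pieces.} For the exterior piece: on $\operatorname{supp}\eta_0$ we have $|x/N-a_l|_\infty\gtrsim r/N$ for every $l$. By (V2) and (V3), together with continuity and positivity of $V$ on compacts away from the zeros, this gives
\[
V_N(x)\ge c\min\{1,(r/N)^2\}.
\]
Hence, using $\Delta\ge0$,
\[
\eta_0H_N\eta_0\ge c\lambda_N^2(r/N)^2\eta_0^2=c\,\lambda_N N^{2\delta}\eta_0^2,
\]
which exceeds $(e_n(V)+1)\lambda_N\eta_0^2$ for large $N$. For each $l\ge1$, Lemma~\ref{lem:potential_difference} gives
\[
\eta_lH_N\eta_l\ge\eta_lH_{N,l}\eta_l-C\lambda_N^2(r/N)^3\eta_l^2,
\]
and the error term is
\[
\lambda_N^2(r/N)^3=\lambda_N^{1/2}N^{3\delta}=o(\lambda_N)\quad\text{iff}\quad\delta<(1-\gamma)/6.
\]
So we shrink $\delta$ accordingly; this is the one place the constraint on $\delta$ tightens. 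Here $H_{N,l}$ is a translated $d$-dimensional harmonic oscillator. After an orthogonal change of coordinates diagonalizing $D^2V(a_l)$, Theorem~\ref{thm:eigenvalue_asymptotics_harmonic_oscillator_main} applies to it, giving $E_k(H_{N,l})/\lambda_N\to e_k(V^{harm}_l)$, where $V^{harm}_l$ is the quadratic form of $D^2V(a_l)/2$.

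The obstacle at this step is that the discrete Laplacian is not rotation invariant, so the orthogonal change of coordinates is not literally available. We would handle this in one of two ways. The first option is to compare quadratic forms: replace $\lambda_N^2\frac12\langle D^2V(a_l)y,y\rangle$ by $\lambda_N^2\frac12(1-\epsilon)\sum\mu_i y_i^2$ in some fixed basis. The second, which we would try first, is to prove a tensorized version of Theorem~\ref{thm:eigenvalue_asymptotics_harmonic_oscillator_main} for a general positive quadratic form. Its proof uses the same Hermite approximate eigenfunctions along the eigenvectors, with the lower bound obtained by comparison with the diagonal case. Failing that, we simply assume the axes are aligned, as the statement of $\Sigma(V)$ implicitly does.

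\textbf{Counting argument.} As in Proposition~\ref{thm:lowerbound}, fix $r'$ strictly between $\lambda_N$-scaled consecutive distinct levels, so $e_{n-1}(V)<r'<e_n(V)$ when $e_n(V)>e_{n-1}(V)$; ties are handled by using the largest $n'$ with $e_{n'}(V)=e_n(V)$. Let $P_l$ be the spectral projection of $H_{N,l}$ onto energies below $\lambda_Nr'$. By the harmonic oscillator asymptotics, $P_l$ has rank equal to the number of elements of $\Sigma(V^{harm}_l)$ below $r'$ for large $N$. This gives
\[
\eta_lH_{N,l}\eta_l\ge\lambda_Nr'\eta_l^2+\eta_lP_l(H_{N,l}-\lambda_Nr')P_l\eta_l,
\]
where the last term has rank $\operatorname{rank}P_l$. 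Summing over $l$ and using $\sum_{l\ge0}\eta_l^2=1$, we obtain
\[
H_N\ge\lambda_Nr'-K+o(\lambda_N),\qquad \operatorname{rank}K\le\sum_l\operatorname{rank}P_l=\#\{k:e_k(V)<r'\}\le n.
\]
The min-max principle then yields $E_n(H_N)\ge\lambda_Nr'+o(\lambda_N)$, and letting $r'\uparrow e_n(V)$ concludes the proof.
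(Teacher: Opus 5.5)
Your argument is the paper's: IMS localization on cubes of radius $r=N^{(1+\gamma)/2+\delta}$ around the points $Na_l$, comparison with $H_{N,l}$ via Lemma~\ref{lem:potential_difference}, a spectral projection of $H_{N,l}$ below $\lambda_N r'$ whose rank comes from Theorem~\ref{thm:eigenvalue_asymptotics_harmonic_oscillator_main}, and min-max. Where you deviate, you are repairing the paper's write-up rather than leaving its route.
On $\operatorname{supp}\eta_0$ the potential is only $\gtrsim (r/N)^2$ near the wells, not $\ge c$ as the paper infers from (V3). The cubic Taylor error is $\lambda_N^{1/2}N^{3\delta}$, so $\delta<(1-\gamma)/6$ is really needed; the paper's $\delta<(1-\gamma)/2$ comes from inverting $r$. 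With the piecewise linear $\eta$, the function $\sqrt{1-\eta_l^2}$ jumps by about $r^{-1/2}$ at the inner edge of the cube. That gives an IMS error of order $N^2/r=\lambda_N N^{(1+\gamma)/2-\delta}$, so the sine/cosine-type partition you mention is genuinely needed. Finally, the paper's induction is superfluous, since the ranks of the $P_l$ come from the oscillator asymptotics alone.
On ties: any $r'<e_n(V)$ with $r'\notin\Sigma(V)$ gives $\#\{k:e_k(V)<r'\}\le n$ automatically. Your ``largest $n'$'' should be the smallest index with $e_{n'}(V)=e_n(V)$.

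The step you flag is a genuine gap, and neither of your first two remedies closes it. Suppose $D^2V(a_l)$ is not diagonal in lattice coordinates. Then every diagonal $D\le D^2V(a_l)$ satisfies $D\neq D^2V(a_l)$, hence has strictly smaller trace. By Weyl monotonicity its eigenvalues are dominated with at least one strict inequality, so the comparison oscillator's levels lie strictly below the correct ones, by a positive amount independent of $\epsilon$. Option (2) offers no ``comparison with the diagonal case'' either. The lower bound of Section~\ref{sec: Discrete harmonic oscillators} is intrinsically one-dimensional (AAP on intervals between Hermite zeros, nodal counting for symmetric Jacobi matrices), and it only tensorizes in lattice coordinates.
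After rescaling $x=N^{(1+\gamma)/2}y$, the operator $H_{N,l}/\lambda_N$ becomes $\frac12\Delta_h+\frac12\langle D^2V(a_l)y,y\rangle$ on $h\Z^d$ with $h=N^{-(1+\gamma)/2}\to0$. What is missing is a genuinely $d$-dimensional lower bound for this discrete-to-continuum limit, for instance via a norm-resolvent continuum-limit result. The paper tacitly makes the same alignment assumption: its product test functions and Theorem~\ref{thm:eigenvalue_asymptotics_harmonic_oscillator_main} live in lattice coordinates. Under that assumption your proof is complete and agrees with the paper's; without it, neither proves the statement as written.

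One further point, also skipped in the paper: $Na_l$ need not be a lattice point. So $H_{N,l}$ is not a lattice translate of the oscillator in Theorem~\ref{thm:eigenvalue_asymptotics_harmonic_oscillator_main}, whose proof uses reflection symmetry about a lattice point. Center $H_{N,l}$ at a nearest lattice point $p_l$ instead. On $Q_r(Na_l)$ this changes $\lambda_N^2V_{N,l}$ by $O(\lambda_N^2 r/N^2)=\lambda_N\,O(N^{\delta-(1+\gamma)/2})$, which is $o(\lambda_N)$ once also $\delta<(1+\gamma)/2$.
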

\begin{proof}
    \lorenzorrection{ As in the proof of  Proposition~\ref{thm:lowerbound}, the proof works again by induction on the energy level index $n\geq 0$. As above the base case $n=0$ can be obtained as a special case of the proof below without the inductive hypothesis. Suppose we have proved for all $m\leq n-1$ that\begin{equation*}
    E_m(H_N) \geq \lambda_Ne_{m}(V) + o(\lambda_N).
\end{equation*} We shall now prove that the statement holds true for $m=n$.} 

We employ the IMS localization formula from Lemma~\ref{lem:IMS}. 
    To this end, we choose 
      $ \eta :\R\to[0,1]$ be the given as $\eta(y) =0\vee(2-|y|_\infty )\wedge 1$ similar to the proof of Proposition~\ref{thm:lowerbound}. We then define for $l=1,\ldots,m$ and $0<\delta<(1-\gamma)/2$
    \begin{align*}
        \eta_l(x) = \eta\left(\frac{2\lambda_N^{1/2}}{N^{1+\delta}} |x - Na_l|_\infty \right)\quad\text{and}\quad
        \eta_0(x) = \sqrt{1-\sum_{l=1}^m \eta_l(x)^2}.
    \end{align*}
     Thus, $\eta_l$ is supported in the cube $Q_{r}(a_l)$ and equal to $1$ on the smaller cube $Q_{r/2}(a_l)$ with radius $r={\lambda_N^{1/2}}/{N^{1+\delta}} $.    
    Hence, the upper bound  on $\delta$ ensures that $\eta_k$ and $\eta_l$ for $k\neq l$ have eventually disjoint support for $N\to\infty$.
    Furthermore, we estimate the Lipschitz constant $C$ from Lemma~\ref{lem:IMS} for this choice of partition of unity. To this end, we observe that for $|x-y|=1$, we have
    \begin{align*}
        |\eta_l(x)-\eta_l(y)| %&\leq \left|\eta\left(\frac{\lambda_N^{1/2}}{N^{1+\delta}} |x - Na_l|_\infty \right)-\eta\left(\frac{\lambda_N^{1/2}}{N^{1+\delta}} |y - Na_l|_\infty \right)\right| \\
        %&\leq \frac{\lambda_N^{1/2}}{N^{1+\delta}} | |x - Na_l|_\infty - |y - Na_l|_\infty | 
        \leq \frac{2\lambda_N^{1/2}}{N^{1+\delta}}.
    \end{align*}

    We now apply the IMS localization formula from Lemma~\ref{lem:IMS} with this choice  to obtain
    \begin{align*}
        H_N = \eta_0 H_N \eta_0+ \sum_{l=1}^m \eta_l H_{N,l} \eta_l +  \sum_{l=1}^m \eta_l (H_N- H_{N,l}) \eta_l+ \frac{N^2}2\sum_{l=0}^m [\eta_l,[\eta_l,\Delta]].
    \end{align*}
We start to estimate the last term. By the Lipschitz estimate above, we have
    \begin{align*}
        N^2\| [\eta_l,[  \eta_l,\Delta]]\|\leq 2\cdot 2N^2\cdot \left(\frac{2\lambda_N^{1/2}}{N^{1+\delta}}\right)^2=  16\frac{\lambda_N}{N^{2\delta}}=o(\lambda_N)
    \end{align*}
    since $\delta>0$. Secondly, we estimate the difference term using Lemma~\ref{lem:potential_difference} which gives
    \begin{align*}
       \langle f, \eta_l (H_N- H_{N,l}) \eta_l f\rangle = O(\lambda_N^2(r/N)^3) = O\left(\lambda_N^2\left(\frac{\lambda_N^{1/2}}{N^{1+\delta}N}\right)^3\right) = o(\lambda_N)
    \end{align*}
    since $\delta<(1-\gamma)/2$ and $r={\lambda_N^{1/2}}/{N^{1+\delta}} $.

   To estimate the first term, we observe that on the support of $\eta_0$, we have for $N$ large enough using $\Delta\geq0$ and the assumption (V3) that $\liminf_{x\to\infty}V(x)\ge c$
    \begin{align*}
        \eta_0 H_N \eta_0 \geq \lambda_N^2 \eta_0^2 c \geq \lambda_N e_n(V) \eta_0^2.
    \end{align*}
    for all $n$ whenever $N$ is large enough.

\lorenzorrection{Now, we estimate the second term $ \sum_{l} \eta_l H_{N,l} \eta_l $. We let a number $r$ such that $e_{n-1}(V) < r < e_{n}(V)$ be given and denote by
$P_{N,n-1,l}$  the orthogonal projection onto the span of the eigenfunctions of $H_{N,l}$  having energy less or equal than $\lambda_Nr$ and $P_{N,0,l}=0$. Thus, by induction hypothesis and the eigenvalue asymptotics of the harmonic oscillator, Theorem~\ref{thm:eigenvalue_asymptotics_harmonic_oscillator_main}, we have that the sum  $\sum_l P_{N,n-1,l}$ has rank at most $n-1$ for $N$ large enough and trivially $ P_{N,0,l} $ has rank $0$ which is needed for the base case.}
\lorenzorrection{Then,  we have
    \begin{align*}
        H_{N,l} &\geq \lambda_N r(I-P_{N,n-1,l}) + P_{N,n-1,l}H_{N,l}P_{N,n-1,l} + o(\lambda_N)\\&= \lambda_N r+ P_{N,n-1,l}(H_{N,l}-\lambda_N r)P_{N,n-1,l} + o(\lambda_N).
    \end{align*}
Now, the operator
\begin{align*}
  Q_{N,n-1}=  \sum_{l=1}^m \eta_l P_{N,n,l}(H_{N,l}-\lambda_N r)P_{N,n,l} \eta_l 
\end{align*}
has rank at most $n-1$ as discussed above.
Thus, taking all estimates together, we obtain
    \begin{align*}
        H_N \geq \lambda_N r + Q_{N,n-1} + o(\lambda_N).
    \end{align*}
    By the min-max principle and  since $r$ can be chosen arbitrarily close to $e_n(V)$, this yields the desired lower bound on the eigenvalues.}    
\end{proof}

\begin{proof}
[Proof of Theorem~\ref{thm:main}]
The upper bound follows from Proposition~\ref{prop:upper_bound_general_potential} and the lower bound from Proposition~\ref{prop:lower_bound_general_potentials}.
\end{proof}

\section{Alternative scaling regimes}\label{sec: Alternative regions}
In this section, we study the eigenvalue asymptotics of the the harmonic oscillator potential for $\gamma \in (-\infty, -1 ]$ in the one-dimensional case $d=1$ in contrast to the interval $ (1,\infty) $ which was studied in Section~\ref{sec: Discrete harmonic oscillators}. The $d>1$ cases then follow by tensorization of the former results for $d=1$. These techniques might be generalized to more arbitrary potentials. This goes beyond the scope of the present work.

Let us take $\omega >0$ and consider the multiplication operator on $\ell^2(\mathbb{Z})$ by the function $$v\colon \mathbb{Z}\to\mathbb{R},\qquad v(x) = \frac{\omega^2 x^2}{2}.$$
We denote by $$H_N = \frac{N^2}{2}\Delta + N^{-2\gamma}v$$ the discrete harmonic oscillator operator and by $E_{n}(H_N)$ its \textit{n}-th eigenvalue counted with multiplicity. 
 
 We first analyze the singular value in the parameter space $\gamma= -1$, where $\lambda_N = N^2$.  
 \begin{proposition}\label{prop: -1}
 	For  $\gamma = -1$, it holds all $n\in\mathbb{N}_0$ 
    \begin{equation*}
 		\lim_{N\to+\infty}\frac{E_n(H_N)}{N^2} = E_n(H_1) 
 	\end{equation*}
 \end{proposition}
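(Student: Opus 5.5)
For $\gamma=-1$ the claim is a pure scaling identity, and I would prove it by factoring out $N^2$ rather than by any asymptotic analysis. With $\gamma=-1$ we have $N^{-2\gamma}=N^2$, so
\[
H_N=\frac{N^2}{2}\Delta+N^{2}v=N^2\Big(\frac12\Delta+v\Big)=N^2 H_1 ,
\]
where $H_1=\frac12\Delta+v$ is the operator $H_N$ at $N=1$, which is independent of $\gamma$. So $H_N$ is a positive scalar multiple of the fixed operator $H_1$ on $\ell^2(\Z)$.

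First I would check that $H_1$ has purely discrete spectrum. The discrete Laplacian $\Delta$ is bounded (with $0\le\Delta\le 4$), and $v(x)=\omega^2x^2/2\to\infty$ as $|x|\to\infty$. Hence $H_1$ is a bounded perturbation of a multiplication operator with compact resolvent, so it is self-adjoint and lower semibounded on the domain of $v$, and its spectrum consists of isolated eigenvalues of finite multiplicity, $E_0(H_1)\le E_1(H_1)\le\ldots$, counted with multiplicity.

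Next, multiplying a self-adjoint operator by the positive constant $N^2$ leaves its eigenvectors unchanged and multiplies every eigenvalue by $N^2$. This preserves both the ordering and the multiplicities; equivalently, in the min-max principle the Rayleigh quotients simply scale by $N^2$. Therefore $E_n(H_N)=N^2E_n(H_1)$ exactly for every $N$ and every $n\in\N_0$. Dividing by $N^2$ gives a sequence that is constant in $N$, so the limit equals $E_n(H_1)$.

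There is no real obstacle here. The only point needing care is the discreteness of the spectrum of $H_1$, so that the $n$-th eigenvalue $E_n(H_1)$ is well defined, and the compact-resolvent argument above covers that. The substantive content of the regime $\gamma=-1$ is that the limit is the spectrum of a genuinely discrete operator, not a continuum one.
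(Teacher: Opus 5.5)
Your proposal is correct and takes essentially the same approach as the paper: for $\gamma=-1$ one has the exact identity $H_N=N^2\bigl(\tfrac12\Delta+v\bigr)=N^2H_1$, so $E_n(H_N)/N^2=E_n(H_1)$ for every $N$. Your extra check that $H_1$ has compact resolvent, so that $E_n(H_1)$ is well defined, is a harmless addition that the paper leaves implicit.
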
 
 \begin{proof}
 	For $\gamma= -1$ the discrete harmonic oscillator becomes \begin{equation*}
 		H_N = N^2\left(\frac{\Delta}{2} +v\right) = N^2 H_1.
 	\end{equation*}Therefore, the statement of the proposition follows immediately.
 \end{proof}
 
 Next, we analyze the parameter domain  $(-\infty, -1)$. We have the following result.
 \begin{proposition}\label{prop: -inf to -1}
 	For  $\gamma\in (-\infty, -1)$, it holds \begin{align*}
 		&\lim_{N\to+\infty} \frac{E_{0}(H_N)}{N^{2|\gamma|}}  = 0\\
 		&\lim_{N\to+\infty} \frac{E_{2n}(H_N)}{N^{2|\gamma|}} =  \lim_{N\to+\infty} \frac{E_{2n-1}(H_N)}{N^{2|\gamma|}} = \frac{\omega^2 n^2}{2},\qquad n\geq1.
 	\end{align*}
 \end{proposition}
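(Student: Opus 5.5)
We rescale by the dominant factor $N^{-2\gamma}=N^{2|\gamma|}$ and write
\[
\frac{H_N}{N^{2|\gamma|}} = v + \varepsilon_N \frac{\Delta}{2},\qquad \varepsilon_N = N^{2-2|\gamma|}\to 0 ,
\]
since $|\gamma|>1$. This is a multiplication operator perturbed by a small multiple of $\Delta$. Because $0\le \Delta\le 4$, it is a bounded perturbation of norm at most $2\varepsilon_N$. The unperturbed operator $v$ on $\ell^2(\Z)$ is diagonal in the standard basis. Its eigenvalues are $v(x)=\omega^2x^2/2$, $x\in\Z$: the value $0$ is simple (at $x=0$), and each value $\omega^2n^2/2$, $n\ge1$, is doubly degenerate (at $x=\pm n$). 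Enumerated non-decreasingly with multiplicity, they are
\[
E_0(v)=0,\qquad E_{2n-1}(v)=E_{2n}(v)=\frac{\omega^2n^2}{2}, \quad n\ge 1 .
\]
Both operators have compact resolvent: $v\to\infty$ and $\Delta$ is bounded.

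The key step is the min-max principle for operators bounded below with discrete spectrum. If $A\le B\le A+c$ in the sense of quadratic forms on a common form domain, then $E_k(A)\le E_k(B)\le E_k(A)+c$ for every $k$. Here we apply it with $A=v$, $B=v+\varepsilon_N\Delta/2$ and $c=2\varepsilon_N$. The form domain is that of $v$, since $\Delta$ is bounded. This gives
\[
E_k(v)\le \frac{E_k(H_N)}{N^{2|\gamma|}}\le E_k(v)+2\varepsilon_N
\]
for all $k\in\N_0$. Letting $N\to\infty$ yields both claimed limits at once, including $E_0(H_N)/N^{2|\gamma|}\to0$.

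There is essentially no obstacle. The only points to check carefully are the following.
\begin{itemize}
\item The indexing convention: eigenvalues are counted from $E_0$ with multiplicity. This matches the pairing $E_{2n-1}=E_{2n}$ with $x=\pm n$.
\item That $N^{-2\gamma}=N^{2|\gamma|}$ for $\gamma<0$.
\item That $\varepsilon_N=N^{2+2\gamma}\to0$, which holds exactly because $\gamma<-1$.
\end{itemize}
The argument also yields the quantitative rate $O(N^{2+2\gamma})$.
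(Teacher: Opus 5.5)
Your proof is correct. It uses the same reduction as the paper: write $H_N/N^{2|\gamma|} = v + N^{2-2|\gamma|}\Delta/2$ and read off the spectrum of the diagonal operator $v$. The two arguments differ only in how they justify the convergence of eigenvalues.

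The paper notes that $N^{2-2|\gamma|}\Delta\to 0$ in operator norm and cites Kato's perturbation theory (Chapter VIII, Theorems 3.6 and 3.15). Because $v$ is unbounded, this implicitly requires passing to norm-resolvent convergence. It then gives convergence of the isolated eigenvalues together with their total multiplicities.

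You instead use that the perturbation is bounded and self-adjoint, indeed $0\le N^{2-2|\gamma|}\Delta/2\le 2N^{2+2\gamma}$. The min-max principle then gives, for all $k$ at once, the two-sided bound $E_k(v)\le E_k(H_N)/N^{2|\gamma|}\le E_k(v)+2N^{2+2\gamma}$.

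Your route is more elementary and self-contained, and it avoids the unbounded-operator technicalities. It also yields an explicit rate that is uniform in $k$. Kato's route is more general: it covers non-self-adjoint or merely relatively bounded perturbations and also gives convergence of spectral projections. None of that extra generality is needed here.
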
 \begin{proof}
For $\gamma <-1$, the harmonic oscillator operator can be written as \begin{equation*}
	H_N = N^{2|\gamma|} \left( N^{2-2|\gamma|}\frac{\Delta}{2} + v\right).
\end{equation*} 
It is easier to study the operator $H_N/ N^{2|\gamma|}$, where the only dependence on $N$ is now in the factor $N^{2-2|\gamma|}$ front of the discrete Laplacian. Since $\Delta$ is bounded, the Laplacian term vanishes as\begin{equation*}
\lim_{N\to +\infty }N^{2-2|\gamma|}\Delta = 0,
\end{equation*} where the latter limit is in the $\ell^2(\mathbb{Z})$-norm topology. 
{
We infer by \cite[Chapter VIII, Theorem~3.6 \&~3.15]{Kato} that the eigenvalues of $H_N/N^{2|\gamma|}$ converge to the eigenvalues of $v$.}

 The multiplication operator $v$ has a purely discrete spectrum with eigenfunctions given by $$\delta_n \colon \mathbb{Z}\to\mathbb{R},\qquad \delta_n(x) = \begin{dcases}
	1 & x = n\\
	 0& x\neq n
\end{dcases},$$ for all $n\in\mathbb{Z}$. In particular, $\delta_0$ is the unique ground state with energy given by $E_0(v) = 0$, whereas the $2n$-th and $2n-1$-th excited states are degenerates, having the same energy $$E_{2n}(v) = E_{2n-1}(v) = v(n) = \frac{\omega^2 n^2}{2}. $$ 
This finishes the proof of the proposition.
 \end{proof} 
 
 \begin{remark}\label{rmk: change of scaling}
 We note that in the regime $\gamma\in (-\infty, -1)$, the scaling of $E_n(H_N)$ as a function of $\gamma$ changes in comparison to the case $\gamma\in[-1,+\infty)$. Indeed, in this latter interval we have by Theorem~\ref{thm:eigenvalue_asymptotics_harmonic_oscillator_main} and Proposition~\ref{prop: -1}
 \begin{equation*}
 	\frac{E_n(H_N)}{\lambda_N} = \frac{E_n(H_{N})}{N^{1-\gamma}} = e_n(V) + o(1),
 \end{equation*} as $N\to+\infty$, while for $\gamma\in (-\infty,-1)$, Proposition \ref{prop: -inf to -1} gives \begin{equation*}
\frac{ N^2 E_n(H_N)}{\lambda_N^2}  =  \frac{E_n(H_N)}{N^{-2\gamma }} = v(n) + o(1),
 \end{equation*}as $N\to+\infty$.
 \end{remark}

We have been able to obtain a complete description of the asymptotic behavior of the harmonic oscillator eigenvalues  for all $\gamma\in (-\infty,+\infty)$. We summarize in the following our results for the latter operator. For a fixed $n\in\mathbb{N}_0$, for all $N$ sufficiently large, the $(n+1)$-th eigenvalue of $H(N)$ satisfies \begin{equation*}
     E_n(H_N) = C_n(\gamma) N^{h(\gamma)} + o\left(N^{h(\gamma)}\right)
 \end{equation*}\begin{itemize}
     \item For $\gamma >-1$,  $C_n(\gamma) = e_n(V)$ and $h(\gamma) = 1-\gamma$
     \item For $\gamma =-1$, $ C_n(-1) = E_n(H_1)$ and $h(-1) = 2$
     \item For $\gamma <-1$ we have to discuss two different cases: \begin{enumerate}
     \item [(a)] The case $n= 0$ correspond to the ground state. We have $C_0(\gamma) = 0 $ and one can still choose $h(\gamma) = 2|\gamma|$ in the error term.
     \item [(b)] For $n\geq 1$ the even and odd levels become approximately degenerate, that is we have $C_{2n}(\gamma) = C_{2n-1}(\gamma) = \omega^2 n^2 /2$ and $h(\gamma) = 2|\gamma|$.
     \end{enumerate}
 \end{itemize} 
 
 \begin{figure}[h]
               \centering
               \begin{tikzpicture}[scale = 0.8]
                   \draw[->,thick] (7.5,.5) --(7.5,6.5);
                   \draw[->,thick] (.5,2) -- (14.5,2);
                    \node  at (11,6) {$h(\gamma) = \lim_{N\to+\infty}\frac{\ln(E_n(H_N))}{\ln(N)}$};
                    \node at (14.5,2.5) {$\gamma$};
                    \node at (5,1.5) {$-1$};
                     \node at (10,1.5) {$+1$};
                     \node at (14,1.5) {$+\infty$};
                     \node at (7.8,1.5)  {$0$};
                     \node at (1,1.5) {$-\infty$};
                    \draw[thick] (10,1.8)--(10,2.2);
                    \draw[thick] (5,1.8)--(5,2.2);
                    \draw[thick] (7.3,3)--(7.7,3);
                    \draw[thick] (7.3,4)--(7.7,4);
                    \draw [ thick] (5,4)-- (10,2);
                    \draw [ thick] (5,4)-- (3,6);
                    \draw [ thick] (10,2)-- (12.5,1);
                    \node at (8,3) {$1$};
                     \node at (8,4) {$2$};
                    \draw [fill]  (5,4) circle (.7mm);
               \end{tikzpicture}\caption{Dependence on $\gamma$}
               \label{fig: scaling with gamma}
           \end{figure}
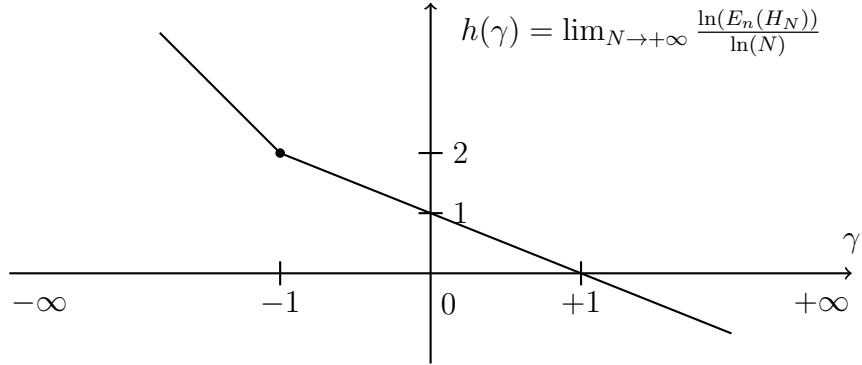

As we remarked in the remark above, the different scaling of the eigenvalues for $\gamma\in (-\infty,-1)$ and $\gamma \in [-1,+\infty)$ can be observed by the non differentiability of the exponent function $h(\gamma)$ at $\gamma = -1$. This is graphically represented in Figure \ref{fig: scaling with gamma}.

 \vspace{5mm}
\textbf{Acknowledgements.} MK acknowledges the financial support of the DFG and the hospitality of the IIAS, Jerusalem.  LP is grateful for the support of the National Group of Mathematical Physics (GNFM-INdAM). The authors thank Elke Rosenberger for most valuable hints on the literature.

\bibliographystyle{alpha}

%\printbibliography
\end{document}